\title{The 2-Dimensional Constraint Loop Problem is Decidable} %TODO Please add
\author{Quentin Guilmant}{Max Planck Institute for Software Systems, Saarbrücken, Germany \and \url{https://quentin.guilmant.fr} }{quentin.guilmant@mpi-sws.org}{https://orcid.org/0009-0004-7097-0595}{}%TODO mandatory, please use full name; only 1 author per \author macro; first two parameters are mandatory, other parameters can be empty. Please provide at least the name of the affiliation and the country. The full address is optional. Use additional curly braces to indicate the correct name splitting when the last name consists of multiple name parts.
\author{Engel Lefaucheux}{ Loria, Nancy, France \and \url{https://elefauch.github.io}}{engel.lefaucheux@inria.fr}{0000-0003-0875-300X}{}
\author{Jo\"el Ouaknine}{Max Planck Institute for Software Systems, Germany}{joel@mpi-sws.org}{https://orcid.org/0000-0003-0031-9356}{}
\author{James Worrell}{Department of Computer Science, University of Oxford, United Kingdom}{jbw@cs.ox.ac.uk}{https://orcid.org/0000-0001-8151-2443}{}
\authorrunning{Q. Guilmant, E. Lefaucheux, J. Ouaknine and J. Worell} %TODO mandatory. First: Use abbreviated first/middle names. Second (only in severe cases): Use first author plus 'et al.'
\keywords{Linear Constraints Loops, Minkowski-Weyl, Convex Sets, Asymptotic Expansions} %TODO mandatory; please add comma-separated list of keywords
\DeclareMathOperator{\convh}{ConvHull} 
\DeclareMathOperator{\affh}{AffHull} 
\DeclareMathOperator{\rec}{rec} 
\DeclareMathOperator{\ri}{ri}
\DeclareMathOperator{\Cone}{cone} 
\let\orgdescriptionlabel\descriptionlabel
\renewcommand*{\descriptionlabel}[1]{%
	\let\orglabel\label
	\let\label\@gobble
	\phantomsection
	\edef\@currentlabel{#1}%
	\let\label\orglabel
	\orgdescriptionlabel{#1}%
}
\begin{document}

\maketitle

%TODO mandatory: add short abstract of the document
\begin{abstract}
  A linear constraint loop is specified by a system of linear
  inequalities that define the relation between the values of the
  program variables before and after a single execution of the loop
  body.  In this paper we consider the problem of determining whether
  such a loop terminates, i.e., whether all maximal executions are
  finite, regardless of how the loop is initialised and how the
  non-determinism in the loop body is resolved.  We focus on the
  variant of the termination problem in which the loop variables range
  over $\mathbb{R}$.  Our main result is that the termination problem
  is decidable over the reals in dimension~2.  A more abstract
  formulation of our main result is that it is decidable whether a
  binary relation on $\mathbb{R}^2$ that is given as a conjunction of
  linear constraints is well-founded.
  \end{abstract}

\section{Introduction}
\label{sec:intro}
The problem of deciding loop termination is of fundamental importance
in software verification.  
Deciding termination is already challenging
for very simple classes of programs.  One such class consists of
\emph{linear constraint loops}.  These are single-path loops in which
both the loop guard and the loop update are given by conjunctions of
linear inequalities over the program variables.  Such a loop can be
written as follows, where $B$, $A$ are matrices of rational numbers,
$\bm{a}$, $\bm{b}$ are vectors of rational numbers, and
$\bm{x},\bm{x}'$ represent the respective values of the program
variables before and after the loop update:
\begin{equation*}
	\P \, : \, \mathsf{while} \; (B\, \bm{x} \geq \bm{b}) \; \mathsf{do} \;  A\, \begin{psmallmatrix*}[l] \bm{x}\\ \bm{x'}\end{psmallmatrix*} \geq \bm{a},
      \end{equation*}
      
      Such loops are inherently non-deterministic, since the effect of
      the loop body is described by a collection of constraints.  Note
      in passing that the loop guard can folded into the constraints
      that describe the loop body and so, without loss of generality,
      the guard can be assumed to be trivial.  Linear constraint loops
      naturally arise as abstractions of other programs.  For example,
      linear constraints can be used to model size changes in program
      variables, data structures, or terms in a logic
      program (see, e.g.~\cite{LindenstraussS97}).

A linear constraint loop is said to \emph{terminate} if there is no initial value
of the loop variables from which the loop has an infinite execution.
The Termination Problem asks to decide whether a given loop
terminates.  As such, the Termination Problem depends on the numerical
domain that the program variables range over: typically one considers
either $\Zbb$, $\Qbb$, or $\Rbb$.

One approach to proving termination of linear constraint loops
involves synthesizing linear ranking functions~\cite{BG14}.  However,
it is well-known that there are loops that terminate that admit no
linear ranking function.  In the special case of deterministic linear
constraint loops (i.e., where the loop body applies an affine function
to the program variables) decidability of termination over $\Rbb$ was
shown by Tiwari~\cite{Tiwari04}, decidability of termination over
$\Qbb$ was shown by Braverman~\cite{Braverman06}, and decidability of
termination over $\Zbb$ was established
in~\cite{HOW19}.\footnote{These works in fact consider loop guards
  that feature a mix of strict and non-strict inequalities, whereas in
  the present paper we consider only non-strict inequalities.}  All
three papers build on an analysis of the spectrum of the matrix that
determines the update function in the loop body.  To the best of our
knowledge, decidability of termination of linear constraint loops over
$\Rbb$, $\Qbb$, and $\Zbb$ remains open.  It is known however that
termination for multi-path constraint loops is undecidable (i.e.,
where disjunctions are allowed in the linear constraints that define
the update map).  It is moreover known that termination of single-path
constraint loops is undecidable if irrational constants are allowed in
the constraints~\cite{BGM12}.  One of the few known positive results
is the restricted case that all the constraints are octagonal, in
which case termination is decidable over integers~\cite{bozga}.
(Recall that a constraint is said to be \textit{octagonal} if it is a
conjunction of propositions of the form \(\pm x_i\pm x_j\leq a\),
for variables $x_i,x_j$ and constant $a\in\Zbb$.)

In this paper we study the termination of linear constraint loops over
the reals in dimension at most 2.  We give a sufficient and necessary
condition that such a loop be non-terminating in the form of a
\emph{witness of non-termination}.  This is given in
Definition~\ref{def:wit}.  Here one should think of $K$ as the
transition relation of a linear constraint loop, while $\rec(K)$ is
the \emph{recession cone} of $K$, i.e., the set of vectors $v$ such
that $w+\lambda v\in K$ for all $w\in K$ and $\lambda \geq 0$.  The
witness of non-termination is essentially a finite representation of
an infinite execution of the loop in the spirit of the geometric
non-termination arguments of~\cite{LeikeH18} and the recurrent sets
of~\cite{Ben-AmramDG19}.
\begin{definition}\label{def:wit}
  Let $E$ be a Euclidean space.  Let $K\subseteq E^2$ be a convex
  set.  A witness $\Wcal(K)$ consists of a linear map
	$M:E\to E$, a closed cone $C\subseteq E$, and $v,w\in E$, such that
	\begin{description}
		\item[($\exists u1$)\label{it:MCC}] $MC\subseteq C$
		\item[($\exists u2$)\label{it:xMx}] $\forall x\in C\qquad (x,Mx)\in \rec (K)$
		\item[($\exists u3$)\label{it:vwz}] $(v,w)\in K$
		\item[($\exists u4$)\label{it:vwzC}] $w-v\in C$.
	\end{description}
      \end{definition}
      
      If $E$ has dimension at most $2$ and $K$ is a polyhedron, then
      the existence of such a witness can be expressed 
      in the theory of real closed fields.  (The
      restriction to dimension 2 entails that every cone is generated
      by a most 3 vectors, whereas there is no such upper bound in
      dimension 3.)  Thus we obtain a polynomial-time reduction of the
      Termination Problem for constraint loops to the decision problem
      for the theory of real closed fields with a bounded number of quantifier, which is
      decidable in polynomial space.
      
      The following is our main result, which characterises
      non-termination in terms of the above notion of witness.  We
      refer to Section~\ref{sec:mwconvex} for the notion of MW-convex
      set, suffice to say here that this class includes all polyhedra
      and that main property of MW-convex sets used in the proof is
      that for every linear projection $\pi$ and MW-convex set $K$ we
      have $\pi(\rec(K)) = \rec(\pi(K))$.  Further background about
      convex sets is contained in Section~\ref{sec:convex}.

\begin{restatable}{theorem}{dimun}\label{thm:dim12}
	Let $E$ be a Euclidean space of dimension at most $2$.
	Let $K\subseteq E^2$ be MW-convex. There is a sequence
        $\suiten\in E^\Nbb$ such that 
        $(u_n,u_{n+1})\in K$ 	for all $n\in\Nbb$
	if and only if there exists a witness $\Wcal(K)$.
\end{restatable}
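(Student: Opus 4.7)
\emph{Forward direction.} Given a witness $(M,C,v,w)$, I would define $u_0 := v$ and $u_{n+1} := u_n + M^n(w-v)$ for $n\geq 0$. Since $w-v\in C$ by~\ref{it:vwzC} and $MC\subseteq C$ by~\ref{it:MCC}, an induction shows $M^k(w-v)\in C$ for all $k\geq 0$, and then by~\ref{it:xMx} every pair $(M^k(w-v),M^{k+1}(w-v))$ lies in $\rec(K)$. Writing $(u_n,u_{n+1}) = (v,w) + \sum_{k=0}^{n-1}(M^k(w-v),M^{k+1}(w-v))$, one starts from $(v,w)\in K$ by~\ref{it:vwz} and adds a finite sum of elements of $\rec(K)$, so $(u_n,u_{n+1})\in K$ for every $n\in\Nbb$.

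For the converse, I would split on whether the orbit $(u_n)$ is bounded. In the bounded case, Bolzano--Weierstrass in $E$ (available because $\dim E \leq 2$) together with the closedness of $K$ (a consequence of MW-convexity) yields limit points of $(u_n,u_{n+1})$ inside $K$. Iterating the compactness extraction on pairs $(u_n,u_{n+k})$ as $k$ grows, via a diagonal argument, I would produce a point $u^\star$ with $(u^\star,u^\star)\in K$; the trivial witness $C=\{0\}$, $M=0$, $v=w=u^\star$ then satisfies Definition~\ref{def:wit}.

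In the unbounded case, the plan is to extract the asymptotic linear skeleton of the orbit. Setting $\Delta_n := u_{n+1}-u_n$, and using compactness of the unit circle of $E$, I would pass to a subsequence along which $\Delta_n / \|\Delta_n\| \to d$ and $\Delta_{n+1} / \|\Delta_n\| \to d'$. The MW-convex assumption, in the form $\pi(\rec(K)) = \rec(\pi(K))$ for every linear projection $\pi$, is precisely what promotes these normalised limits into a genuine element $(d,d') \in \rec(K)$. I would then choose $M$ linear with $Md = d'$, let $C$ be the smallest closed $M$-invariant cone containing $d$, and take $(v,w) = (u_N,u_{N+1})$ for $N$ large enough that $w-v$ points essentially in the direction $d$.

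The substantive obstacle is the unbounded case. One must arrange that $C$ is genuinely $M$-invariant (not merely stable under the single extracted limit direction), that the condition $(x, Mx)\in\rec(K)$ holds throughout all of $C$ (not only on the generating ray), and that $w-v$ actually belongs to $C$ rather than only lies close to it --- possibly requiring a mild enlargement of $C$, a tailored choice of $M$, or a second limit direction obtained from another subsequence of $\Delta_n/\|\Delta_n\|$ to capture oscillatory behaviour. The restriction $\dim E \leq 2$ keeps the spectral picture for $M$ finite and explicitly classifiable (real eigenrays, Jordan block, or no real invariant ray), which is what makes the case analysis tractable here and signals why the argument does not lift directly to higher dimensions.
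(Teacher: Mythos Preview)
Your forward direction is correct and coincides with the paper's Proposition~\ref{prop:sufficent}: the telescoping identity you write out is exactly the sequence defined there by $u_{n+2}-u_{n+1}=M(u_{n+1}-u_n)$.

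For the bounded case, your diagonal argument on pairs $(u_n,u_{n+k})$ is too vague to work as stated: a limit point of $(u_n,u_{n+1})$ gives some $(a,b)\in K$, but there is no reason $a=b$, and iterating extraction over growing $k$ does not obviously collapse this. The paper's mechanism (Proposition~\ref{prop:fixedPoint}) is a Ces\`aro average: set $y_n=\tfrac1n\sum_{p=0}^{n-1}(u_p,u_{p+1})\in K$ and $x_n=\tfrac1n\sum_{p=0}^{n}(u_p,u_p)\in\Delta_E$, observe $\|x_n-y_n\|\leq M/n$, and conclude that any common accumulation point lies in $K\cap\Delta_E$. This is short, but it is the right idea and your sketch does not supply it.

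The unbounded case is where the genuine gap lies. Your plan to extract $(d,d')$ from normalised successive differences $(\Delta_n,\Delta_{n+1})/\|\Delta_n\|$ does not produce an element of $\rec(K)$: the pair $(\Delta_n,\Delta_{n+1})=(u_{n+1},u_{n+2})-(u_n,u_{n+1})$ is a \emph{difference} of two points of $K$, and differences of points of an MW-convex set are not directions of recession. The commutation $\pi(\rec(K))=\rec(\pi(K))$ does not rescue this. What the paper does instead is to expand the sequence $(u_n,u_{n+1})$ itself in an orthogonal \emph{accumulation expansion} (Definition~\ref{def:accExp}) and use Proposition~\ref{prop:accExpRecK} to convert the successive leading directions of that expansion into genuine elements of $\rec(K)$. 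From there, assembling $M$, $C$, and $(v,w)$ satisfying all four witness conditions simultaneously is not a matter of ``mildly enlarging $C$'' or ``tailoring $M$'': it requires a six-way case split on the structure of the cone $\Ccal_u=\Cone(\Dcal_u)$ of principal directions (is there a fixed point? is there $x\in\Ccal_u\setminus\{0\}$ with $(0,x)\in\rec(K)$? is $\Ccal_u$ a half-line, a line, or all of $E$? can one take a first step along some $x\in\Dcal_u$?), several of which invoke the one-dimensional result (Proposition~\ref{prop:dim1}) on a projection and then lift back. The hardest case, $\Ccal_u=\Rbb_+x$, needs a delicate comparison of the growth rates of $\inner{u_n,x}$ and $\inner{u_n,y}$ to manufacture $a,b,c,d$ with $(x,ax),(dx+y,cx+by)\in\rec(K)$ and the crucial inequality $c\geq db$ that makes $MC\subseteq C$ hold. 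You correctly flag all three obstacles in your final paragraph, but resolving them is essentially the entire proof, and the route you propose to reach them does not get off the ground.

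One more point: the role of $\dim E\leq 2$ is not that the spectral picture of a \emph{given} $M$ is classifiable. It is that every closed convex cone in $E$ is generated by at most two vectors, so that once one has accumulated a (possibly infinite) family of recession directions one can name two extreme generators and \emph{define} $M$ by prescribing its values on those. In dimension $3$ this finite-generation property fails, which is the actual obstruction to extending the argument.
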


\section{Preliminaries}
\subsection{Key notations}
In this very short section, we introduce the notation we will use for the entire paper.

\subparagraph{Some sets}
We put an $\ast$ on sets to remove $0$ from this sets. Namely, $\Rbb^*=\Rbb\setminus\{0\}$, $\Nbb^*=\Nbb\setminus\{0\}$ and so on. $\Rbbplus$ stand for all non-negative real numbers and $\Rbbplusstar$ for all the positive real numbers. Also, for $n,m\in\Nbb$ such that $n\leq m$, we let $\intn nm$ be the set of integer in between $n$ and $m$ inclusively, namely $\intn nm=\{n,n+1,\dots,m\}$.

\subparagraph{Landau Notations}
We use the Landau notations. Let $d\in\Nbb^*$. Let $\norm{\cdot}{}$ be any norm over $\Rbb^d$ (they are equivalent anyway).  Let $u:\Nbb\to\Rbb^d$, $w:\Nbb\to\Rbb^d$ and $v:\Nbb\to\Rbb$ be sequences. We then have the following notations:
\begin{itemize}
	\item $u_n=\petito{n\to \pinf}{v_n}$ when for all $\epsilon\in\Rbbplusstar$ there is some $N\in\Nbb$ such that for all $n\geq N$, we have $\norm{u_n}{}\leq \epsilon |v_n|$.
	
	\item $u_n=\grando{n\to \pinf}{v_n}$ when there is  some $M \in\Rbbplusstar$ and some some $N\in\Nbb$ such that for all $n\geq N$, we have $\norm{u_n}{}\leq M |v_n|$.
	
	\item $u_n=\asOmega{n\to \pinf}{v_n}$ when there is  some $M \in\Rbbplusstar$ and some some $N\in\Nbb$ such that for all $n\geq N$, we have $\norm{u_n}{}\geq M |v_n|$.
	
	\item $u_n\underset{n\to\pinf}{\sim} w_n$ if $u_n-w_n=\petito{n\to\pinf}{\norm{w_n}{}}$. 
	
	\item $u_n = w_n+\petito{n\to\pinf}{v_n}$ if $u_n-w_n=\petito{n\to\pinf}{v_n}$.
	
	\item $u_n = w_n+\grando{n\to\pinf}{v_n}$ if $u_n-w_n=\grando{n\to\pinf}{v_n}$.
\end{itemize}

We keep the same notations if the sequences are undefined at a finite number of points in~$\Nbb$.

\subsection{Convex Sets}\label{sec:prelim}\label{sec:convex}

%A convex set is one which contains the line segment joining any two of
%its elements.  Convexity of a set provides many useful properties that
%we use extensively within this work.  Therefore we start by providing
%a collection of convexity properties that will be needed later.
%Although the results are folklore we give certain proofs in the
%appendix for the convenience of the reader.
 
Throughout this section $E$ is an arbitrary Euclidean space.
%Considering $S\subseteq E$, we define the smallest affine subspace,
%convex set, and vector subspace of $E$ that contain $S$.

These results are already known but for the sake of completeness, some proof are
written here anyway.

\begin{definition}%[Affine/convex hull and Vector space]
\label{def:affineHull}\label{def:convHull}\label{def:Vect}
	Let $S\subseteq E$.
	The \textbf{affine hull} of $S$, denoted $\affh (S)$, the
        \textbf{convex hull} of $S$, denoted $\convh (S)$, and the
        \textbf{vector space spanned by} $S$, denoted $\Vect (S)$, are
        defined by
\begin{eqnarray*}
        \affh (S) &=& \left\{ \sum_{i=1}^k \alpha_i x_i  \mid \alpha_i 
          \in \Rbb, x_i \in S, \sum_{i=1}^k \alpha_i = 1\right\}\\
	\convh (S) &=&\left\{ \sum_{i=1}^k \alpha_i x_i  \mid \alpha_i 
          \in [0;1], x_i \in S, \sum_{i=1}^k \alpha_i = 1\right\}\\
	\Vect (S) &=&\left\{ \sum_{i=1}^k \alpha_i x_i  \mid \alpha_i 
          \in  \mathbb{R}, x_i \in S \right\}
\end{eqnarray*}
  \end{definition}

%  Note that the difference between the affine hull, the convex hull,
 % and the vector space is limited to the range of the variables
 % $\alpha_i$.

\begin{definition}%[Relative interior]
\label{def:ri}
	Let $K\subseteq E$ be a convex set. The \textbf{relative interior} of $K$, denoted $\ri (K)$, is defined by:
	$$
	\ri (K) = \enstq{x\in K}{\exists U\in \Ocal(E), \pa{x\in U}\wedge \pa{U\cap\affh (K)\subseteq K}}
	$$
	where  $\Ocal(E)$ stands for the set of open subsets of $E$.
\end{definition}

In other words, the relative interior of a convex set $C$ is its interior
with respect to the induced topology on the affine subspace spanned by $C$.

We have the following properties for the relative interior:

\begin{proposition}%[Folklore]
\label{prop:riNonEmpty} 
	Let $K\subseteq E$ be a non-empty convex set.
	Denoting as usual by $\bar K$ the smallest closed subset of $E$ containing $K$, we have:	
	\begin{romanenumerate}
		\item $\ri (K)$ is a non-empty convex set
		\item $\ri (K)\subseteq K \subseteq \bar K$
		\item $\affh (\ri (K)) = \affh (K)$
		\item $\ri (K) = \ri (\bar K)$
		\item $\bar{\ri (K)} = \bar K$
	\end{romanenumerate}
\end{proposition}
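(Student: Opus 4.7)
The plan is to handle the items in the order (ii), (i), (iii), then establish a single technical lemma — the \emph{segment principle} — from which (v) and (iv) follow cleanly.

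Item (ii) is immediate from Definition~\ref{def:ri} and the definition of closure. For (i), convexity of $\ri(K)$ follows directly from the convexity of $K$ combined with the convexity of open balls intersected with $\affh(K)$. Non-emptiness is the only subtle point: I would reduce to the case $\affh(K)=E$ by translation and restriction to the ambient affine hull (so $\ri(K)$ becomes the ordinary topological interior), then use that $K$ spans $E$ affinely to extract an affinely independent family $x_0,\dots,x_d$ in $K$ of size $d+1=\dim E+1$; the barycenter $\tfrac{1}{d+1}\sum_i x_i$ has a neighborhood inside the simplex $\convh(x_0,\dots,x_d)\subseteq K$. For (iii), one inclusion follows from (ii) and the monotonicity of $\affh$; for the other, (i) guarantees that $\ri(K)$ is a non-empty relatively open subset of $\affh(K)$ and thus contains $d+1$ affinely independent points.

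The key lemma is the \emph{segment principle}: if $x\in \ri(K)$ and $y\in\bar K$, then $(1-\lambda)x+\lambda y\in\ri(K)$ for every $\lambda\in[0,1)$. To prove it, work in $\affh(K)=\affh(\bar K)$ (the equality holds because $\affh(K)$ is already closed), pick a relative ball $B$ of radius $r$ around $x$ contained in $K$, and approximate $y$ by a sequence $y_n\in K$. Setting $z_n:=(1-\lambda)x+\lambda y_n$, convexity of $K$ yields that $(1-\lambda)B+\lambda y_n$ — a relative ball of radius $(1-\lambda)r$ around $z_n$ — is contained in $K$. For $n$ large enough, $z_n$ is within $(1-\lambda)r/2$ of $z:=(1-\lambda)x+\lambda y$, so the triangle inequality gives a relative ball around $z$ inside $K$, whence $z\in\ri(K)$.

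With the segment principle at hand, (v) follows at once: $\overline{\ri(K)}\subseteq\bar K$ is trivial, and conversely for any $y\in\bar K$, picking $x_0\in\ri(K)$ from (i), the sequence $(1-\lambda_n)x_0+\lambda_n y$ with $\lambda_n\to 1^-$ lies in $\ri(K)$ and converges to $y$. For (iv), the inclusion $\ri(K)\subseteq\ri(\bar K)$ is direct from $K\subseteq\bar K$ together with $\affh(K)=\affh(\bar K)$. For $\ri(\bar K)\subseteq\ri(K)$, take $y\in\ri(\bar K)$ and $x\in\ri(K)\subseteq\bar K$; since $y$ lies in the relative interior of $\bar K$, the segment from $x$ through $y$ can be extended slightly past $y$ inside $\bar K$, giving $\epsilon>0$ with $y':=y+\epsilon(y-x)\in\bar K$; then $y$ lies strictly between $x\in\ri(K)$ and $y'\in\bar K$, so the segment principle places $y$ in $\ri(K)$. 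The main obstacle will be the segment principle itself — it is the only step requiring genuine analytic use of convexity, and it serves as the single workhorse for both (iv) and (v); the remaining manipulations are then affine bookkeeping.
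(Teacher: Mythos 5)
Your proposal is correct, but there is nothing in the paper to compare it against: the paper states Proposition~\ref{prop:riNonEmpty} without proof, explicitly treating it (together with Proposition~\ref{prop:riBorder}) as known background from convex analysis. Your development is the classical textbook route (essentially Rockafellar, Theorems~6.1--6.3): the simplex-barycenter argument for non-emptiness in (i), affine bookkeeping for (ii)--(iii), and then the segment principle as the single workhorse for (iv) and (v). Two remarks. First, your segment principle is exactly the paper's Proposition~\ref{prop:riBorder} in strengthened form --- you allow $y\in\bar K$ where the paper only asserts $y\in K\setminus\ri(K)$ --- and this strengthening is genuinely needed for your proofs of (iv) and (v) (in (iv) the extended point $y'=y+\epsilon(y-x)$ lies only in $\bar K$, not necessarily in $K$), so your organization has the pleasant side effect of proving both of the paper's unproved propositions at once. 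Second, two small points you should make explicit in a polished write-up: in (i), convexity of $\ri(K)$ deserves the one-line Minkowski computation $(1-\lambda)B_r(x)+\lambda B_s(y)=B_{(1-\lambda)r+\lambda s}\bigl((1-\lambda)x+\lambda y\bigr)$ carried out inside $\affh(K)$, rather than a bare appeal to "convexity of balls"; and in (iv), the case $x=y$ should be dispatched separately (trivially), since the extension $y'=y+\epsilon(y-x)$ degenerates there. Also note that your conclusion $z\in\ri(K)$ in the segment principle does require $z\in K$ per Definition~\ref{def:ri}, but this is automatic since $z$ lies in the relative ball around $z$ that you have placed inside $K$. With these touches the argument is complete.
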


\begin{proposition}%[Folklore]
\label{prop:riBorder}
	Let  $K$ be a non-empty convex set and $x,y$ such that $x\in\ri (K)$ and $y\in K\setminus \ri (K)$. Then
	for all $\lambda\in\intof01$ we have $\lambda x + (1-\lambda)y\in \ri (K)$.
\end{proposition}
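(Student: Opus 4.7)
The plan is a direct application of the definition of relative interior. By Definition~\ref{def:ri} applied at $x\in\ri(K)$, there is an open neighborhood $U$ of $x$ with $U\cap\affh(K)\subseteq K$; shrinking $U$ if necessary, we may take $U=B(x,\varepsilon)$ for some $\varepsilon>0$. Fix $\lambda\in\intof01$ and set $z_\lambda=\lambda x+(1-\lambda)y$. Convexity of $K$ already gives $z_\lambda\in K$, so the task reduces to producing an open neighborhood of $z_\lambda$ whose intersection with $\affh(K)$ is contained in $K$.

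The natural candidate is the ball $B(z_\lambda,\lambda\varepsilon)$, where the factor $\lambda$ is both crucial and legitimate since $\lambda>0$. Given any $z\in B(z_\lambda,\lambda\varepsilon)\cap\affh(K)$, define
\[ x'\;=\;x+\tfrac{1}{\lambda}(z-z_\lambda), \]
so that, tautologically, $z=\lambda x'+(1-\lambda)y$. Two checks remain. First, $\|x'-x\|=\|z-z_\lambda\|/\lambda<\varepsilon$, placing $x'$ inside $B(x,\varepsilon)$. Second, rewriting $x'=\tfrac{1}{\lambda}z-\tfrac{1-\lambda}{\lambda}y$ displays $x'$ as an affine combination (the coefficients sum to $1$) of $z\in\affh(K)$ and $y\in K\subseteq\affh(K)$, so $x'\in\affh(K)$. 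Hence $x'\in B(x,\varepsilon)\cap\affh(K)\subseteq K$, and consequently $z=\lambda x'+(1-\lambda)y$ is a convex combination of two points of $K$, so $z\in K$ by convexity of $K$.

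This proves $B(z_\lambda,\lambda\varepsilon)\cap\affh(K)\subseteq K$, and therefore $z_\lambda\in\ri(K)$ as required. There is no genuine obstacle: the only subtle points are choosing the radius of the neighborhood to be $\lambda\varepsilon$ so that the affine preimage $x'$ of $z$ lands inside $B(x,\varepsilon)$, and verifying that this preimage indeed stays in $\affh(K)$. Observe that the assumption $y\notin\ri(K)$ plays no role whatsoever in the argument; the same proof yields $\lambda x+(1-\lambda)y\in\ri(K)$ for every $y\in K$ and every $\lambda\in\intof01$, which is the usual ``line segment principle'' for convex sets.
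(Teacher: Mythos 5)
Your proof is correct. Note that the paper states Proposition~\ref{prop:riBorder} without proof (it is listed among the ``already known'' facts in Section~\ref{sec:convex}), so there is no in-paper argument to compare against; what you give is the standard line-segment-principle proof, with the two genuinely necessary checks done properly: the radius $\lambda\varepsilon$ guarantees $x'=x+\tfrac{1}{\lambda}(z-z_\lambda)\in B(x,\varepsilon)$, and writing $x'=\tfrac{1}{\lambda}z-\tfrac{1-\lambda}{\lambda}y$ as an affine combination of points of $\affh(K)$ keeps $x'$ in $\affh(K)$, so the paper's Definition~\ref{def:ri} applies verbatim. Your closing observation is also accurate: the hypothesis $y\notin\ri(K)$ is never used, and the conclusion holds for every $y\in K$; the only place where more care would be needed is the stronger classical statement with $y\in\bar{K}$, which your argument as written does not cover (there $y\notin K$ breaks the final convexity step), but that strengthening is not claimed here.
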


\begin{definition}
	Let $K\subseteq E$ be a non-empty convex set. The \textbf{recession cone} of $K$, denoted $\rec (K)$, is the set
	$
	\rec (K) = \enstq{z\in E}{K+\Rbb_+z\subseteq K}
	$.
\end{definition}

Note that we always have $0\in\rec (K)$. Also, the recession cone is indeed a cone, as it is
stable under positive scalar multiplication by definition.

\begin{restatable}%[Folklore]
{lemma}{projcRecc}\label{lem:projcRecc}
	Let $K\subseteq E$ be a convex set. Let $\pi: E\to E$ be a
        linear projection.  Then
	$
	\pi(\rec (K)) \subseteq \rec (\pi(K))
	$. 
\end{restatable}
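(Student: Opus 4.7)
The plan is to prove this by a direct application of the definitions, using only the linearity of $\pi$. Since this is the ``easy'' inclusion (the reverse inclusion is precisely what fails for general convex sets and motivates the introduction of MW-convexity later in the paper), I expect no real obstacle.

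First I would fix an arbitrary $z \in \rec(K)$ and aim to show that $\pi(z) \in \rec(\pi(K))$. Unpacking the definition, this means establishing that $\pi(K) + \Rbb_+ \pi(z) \subseteq \pi(K)$. So let $y \in \pi(K)$ and $\lambda \in \Rbbplus$ be arbitrary; by definition of the image there exists $x \in K$ with $y = \pi(x)$. Using linearity of $\pi$, one computes
\[
y + \lambda \pi(z) \;=\; \pi(x) + \lambda \pi(z) \;=\; \pi(x + \lambda z).
\]

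Now, since $z \in \rec(K)$ the defining property $K + \Rbb_+ z \subseteq K$ gives $x + \lambda z \in K$, and therefore $\pi(x + \lambda z) \in \pi(K)$. Hence $y + \lambda \pi(z) \in \pi(K)$, which is exactly what was needed. Taking the union over all $y$ and $\lambda$ shows $\pi(z) \in \rec(\pi(K))$, and since $z$ was arbitrary we conclude $\pi(\rec(K)) \subseteq \rec(\pi(K))$.

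The main point worth emphasising in the writeup is that this direction does \emph{not} need any topological hypothesis on $K$ (closedness, polyhedrality, MW-convexity), only convexity and the linearity of $\pi$; it is the reverse inclusion $\rec(\pi(K)) \subseteq \pi(\rec(K))$ that will require the stronger MW-convex assumption invoked in Theorem~\ref{thm:dim12}. No induction, no case analysis, and no appeal to earlier propositions from Section~\ref{sec:convex} are needed.
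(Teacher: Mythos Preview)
Your proof is correct and follows essentially the same direct argument as the paper: both pick a preimage in $\rec(K)$, use the defining property $K+\Rbb_+ z\subseteq K$, and push forward by linearity of $\pi$. Your version is marginally cleaner in that you start from $z\in\rec(K)$ rather than from an element of $\pi(\rec(K))$ and a kernel correction, but the content is identical.
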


\begin{proof}
	Let $x\in\pi\pa{\rec (K)}$. There is $y\in\Kernel\pi$ such that $x+y\in\rec (K)$. Let $a\in\pi(K)$ and $b\in\Kernel\pi$ such that $a+b\in K$. Then, 
	$$\forall\lambda\in\Rbb_+ \qquad (a+b)+\lambda(x+y)\in K$$
	\lc{Hence,}{$\forall\lambda\in\Rbb_+\qquad a+\lambda x\in \pi(K)$}
	\lc{and}{$x\in \rec(\pi(K))$}
\end{proof}

If $K$ is closed, we even have an alternative characterization of the recession cone which requires a seemingly weaker property but that turns out to be equivalent.

\begin{restatable}%[Folklore]
{proposition}{recConeClosedConvex}
	\label{prop:recConeClosedConvex}
	Let $K\subseteq E$ be a non-empty closed convex set. Then 
	$$
	\rec (K) = \enstq{z\in E}{\exists x\in K\quad x+\Rbb_+z\subseteq K}
	$$ 
\end{restatable}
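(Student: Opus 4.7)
The plan is to prove the two inclusions separately. The left-to-right inclusion is immediate: by non-emptiness, any $x\in K$ witnesses the existential condition for a vector $z\in\rec(K)$, since the definition of $\rec(K)$ forces $x+\Rbb_+ z\subseteq K$.

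For the converse, suppose there is some $x_0\in K$ with $x_0+\Rbb_+ z\subseteq K$, and the goal is to show that for every $x\in K$ and every $\lambda\geq 0$ one has $x+\lambda z\in K$. The strategy is to approximate $x+\lambda z$ from within $K$ by a sequence obtained via convex combinations, and then invoke closedness. For each $n\in\Nbb^*$, the hypothesis gives $x_0+n\lambda z\in K$, so by convexity
$$y_n = \tfrac{1}{n}\bigl(x_0+n\lambda z\bigr) + \bigl(1-\tfrac{1}{n}\bigr)x$$
also lies in $K$. A short rewriting yields $y_n = x+\lambda z + \tfrac{1}{n}(x_0-x)$, hence $y_n\to x+\lambda z$, and closedness of $K$ gives $x+\lambda z\in K$ as desired.

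The only delicate point is choosing the right weights in the convex combination: the coefficient in front of the hypothesis point $x_0+n\lambda z$ must scale as $\tfrac{1}{n}$ so that (i) the combination remains in $K$ by convexity, and (ii) in the limit the $\lambda z$ contribution survives while the spurious displacement $\tfrac{1}{n}(x_0-x)$ vanishes. Both convexity and closedness are essential to this argument — convexity to form the combination inside $K$, and closedness to pass to the limit — but once the right combination is written down there is no further obstacle.
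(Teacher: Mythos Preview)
Your proof is correct and follows essentially the same approach as the paper: both directions are handled identically, and for the nontrivial inclusion you form the same convex combination of the target point with a far-out point on the ray, choosing weights so that the $z$-contribution is exactly the desired $\lambda z$ and the residual term vanishes in the limit. The only cosmetic difference is that you use a discrete sequence indexed by $n$ with weight $1/n$, while the paper uses a continuous parameter $t$ with weight $t_0/t$; the underlying idea is the same.
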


\begin{proof}
	We proceed by double inclusion.
	\begin{itemize}
		\item[\CNsubset] This direction is easy : if for all $x\in K$, $x+\Rbb_+z\subseteq K$, since $K\neq\emptyset$, there is at least one $x$ such that $x+\Rbb_+z\subseteq K$.
		
		\item[\CSsubset] Let $z\in\Rbb^d$ such that there is some $x\in K$ such that $x+\Rbb_+z \subseteq K$. Let $y\in K$. We have to show that for any $t_0\in\Rbb_+$, $y+t_0z\in K$. Note that, by convexity, for all $\lambda\in\intff01$, for all $t\in \Rbb_+$ we have
		$$
		(1-\lambda)y+\lambda(x+tz)\in K
		$$
		\lc{We then define the function}{$\fct{\lambda}{\intfo{t_0}\pinf}{\intff01}{t}{\f{t_0}t}$}
		\lc{hence}{$\forall t\geq t_0\qquad \pa{1-\lambda(t)}y+\lambda(t)x+t_0z\in K$}
		\lc{We also have}{$\pa{1-\lambda(t)}y+\lambda(t)x+t_0z\underset{t\to\pinf}{\longrightarrow}y+t_0z$}
		Since $K$ is closed, we then deduce that for all $y+t_0z\in K$. Since this holds for any $y\in K$ and any $t_0\in\Rbb_+$ we end up with $z\in \rec (K)$.
	\end{itemize}
\end{proof}

When considering a closed convex set, we can look at its relative interior to get the same recession cone.

\begin{restatable}%[Folklore]
{proposition}{recConeRi}\label{prop:recConeRi}
	Let $K\subseteq E$ be a non-empty closed convex set. Then
	$
	\rec (K) = \rec(\ri (K))
	$.
\end{restatable}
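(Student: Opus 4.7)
The plan is to prove the equality by double inclusion. The inclusion $\rec(\ri(K)) \subseteq \rec(K)$ will follow almost immediately from the two previous results, while the reverse inclusion $\rec(K) \subseteq \rec(\ri(K))$ will require a short translation argument inside $\affh(K)$.

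For $\rec(\ri(K)) \subseteq \rec(K)$, I would take $z \in \rec(\ri(K))$ and, using Proposition~\ref{prop:riNonEmpty}, pick any $x \in \ri(K) \subseteq K$. By definition of $\rec(\ri(K))$, the ray $x + \Rbb_+ z$ lies in $\ri(K)$, and hence in $K$. Since $K$ is closed and non-empty, Proposition~\ref{prop:recConeClosedConvex} then yields $z \in \rec(K)$ at once.

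For $\rec(K) \subseteq \rec(\ri(K))$, I would fix $z \in \rec(K)$, $x \in \ri(K)$ and $\lambda \geq 0$, and aim to show that $y := x + \lambda z$ lies in $\ri(K)$. Let $U$ be an open neighborhood of $x$ witnessing $x \in \ri(K)$, i.e., $U \cap \affh(K) \subseteq K$, and set $V := U + \lambda z$, which is an open neighborhood of $y$. The central observation is that $\affh(K)$ is stable under translation by $\lambda z$: since $x \in K$ and $z \in \rec(K)$ give $x, x+z \in K \subseteq \affh(K)$, the vector $z = (x+z) - x$ is a difference of elements of $\affh(K)$, so $\affh(K) + \lambda z = \affh(K)$. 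Combined with $K + \lambda z \subseteq K$ (from $z \in \rec(K)$), this will yield
\[
V \cap \affh(K) = (U \cap \affh(K)) + \lambda z \subseteq K + \lambda z \subseteq K,
\]
so that $y \in \ri(K)$, as desired.

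The only non-routine step is spotting the translation-invariance of $\affh(K)$ under vectors in $\rec(K)$; once that is identified, everything else amounts to unfolding the definitions and invoking the previous propositions.
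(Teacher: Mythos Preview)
Your proof is correct. The inclusion $\rec(\ri(K))\subseteq\rec(K)$ is handled exactly as in the paper, via Proposition~\ref{prop:riNonEmpty} and Proposition~\ref{prop:recConeClosedConvex}.

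For the other inclusion $\rec(K)\subseteq\rec(\ri(K))$, you take a genuinely different route. The paper argues by contradiction using Proposition~\ref{prop:riBorder}: assuming some $\mu\geq 0$ with $x+\mu v\in K\setminus\ri(K)$, it picks any $\lambda>\mu$, writes $x+\mu v$ as a convex combination of $x\in\ri(K)$ and $x+\lambda v\in K$, and invokes the segment property (in both the cases $x+\lambda v\in\ri(K)$ and $x+\lambda v\in K\setminus\ri(K)$) to force $x+\mu v\in\ri(K)$, a contradiction. Your argument is instead a direct translation of the witnessing open set: since $z$ is a direction of $\affh(K)$, the neighbourhood $U$ of $x$ slides to a neighbourhood $V=U+\lambda z$ of $x+\lambda z$ still satisfying $V\cap\affh(K)\subseteq K$. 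This is slightly more elementary in that it bypasses Proposition~\ref{prop:riBorder} entirely and works straight from the definition of~$\ri$; the paper's approach, on the other hand, keeps everything phrased in terms of convex combinations and the standard segment lemma, which is the more common idiom in convex analysis.
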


\begin{proof}
	We proceed by double inclusion.
	\begin{itemize}
		\item[\CNsubset] Let $v\in\rec (K)$. Let $x\in\ri (K)$. In particular, $x\in K$. By definition, for any $\lambda\in\Rbb_+$, $x+\lambda v\in K$. Let $S=\enstq{\lambda\in\Rbb_+}{x+\lambda v\in K\setminus \ri (K)}$. We just have to show that $S=\emptyset$. Assume $S\neq\emptyset$ and consider $\mu\in S$. Let $\lambda > \mu$. Note that 
		$$
		x+\mu v = \pa{1-\f\mu\lambda}x+\f\mu\lambda\pa{x+\lambda v}
		$$
		We have two cases :
		\begin{itemize}
			\item $\underline{\lambda\in S}$, in this case, using Proposition \ref{prop:riBorder}, since $x\in\ri (K)$ and $x+\lambda v\in K\setminus\ri (K)$, we have $x+\mu v\in\ri K$,
			which is a contradiction.
			
			\item $\underline{\lambda \notin S}$, since, by Proposition \ref{prop:riNonEmpty}, $\ri (K)$ is convex, $x\in\ri (K)$ and $x+\lambda v\in\ri (K)$, we again reach $x+\mu v\in\ri (K)$, a contradiction.
		\end{itemize}
		Both cases are impossible. Therefore, $S=\emptyset$.
		
		\item[\CSsubset] Let $v\in\rec(\ri (K))$. By Proposition \ref{prop:riNonEmpty}, there is some $x\in \ri (K)$. Therefore, for all $\lambda\in\Rbb_+$, $x+\lambda v\in \ri (K)\subseteq K$. By Proposition \ref{prop:recConeClosedConvex}, we get that $v\in\rec (K)$.
	\end{itemize}
\end{proof}

\begin{remark}
	Note that if $K$ is not closed we have, thanks to Proposition \ref{prop:riNonEmpty},  $\rec(\bar K) = \rec (\ri (K))$ but we may have $\rec (K)\neq\rec(\ri (K))$.
\end{remark}

%\el{This one is never quoted. Do you need it?}
%\begin{proposition}%[Folklore]
%	Let $K\subseteq E$ be a closed convex set. Then $\rec K$ is a closed convex cone. Moreover, $K$ is bounded if and only if $\rec K=\{0\}$.
%\end{proposition}

\begin{restatable}%[Folklore]
{lemma}{linClosedCvCone}\label{lem:linClosedCvCone}
	Let $C$ be a closed convex cone in $E$. Let $u: E\to E$ be linear. Then $u(C)$ is a closed convex cone.
\end{restatable}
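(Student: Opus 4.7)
The plan is to exploit the low-dimensional setting of the paper: in a Euclidean space $E$ of dimension at most $2$, every closed convex cone is polyhedral (finitely generated by at most two rays), so the linear image of such a cone is itself finitely generated and hence closed. The two non-trivial inputs are the classification of closed convex cones in dimension $\leq 2$ and the fact that a finitely generated cone is automatically closed.

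First, I would classify the closed convex cones of $E$. In dimension $0$ or $1$ these are $\{0\}$, a ray $\Rbb_+ v$, or a line $\Rbb v = \Cone\{v,-v\}$. In dimension $2$ we get additionally proper angular sectors $\Cone\{v_1,v_2\}$ spanned by two linearly independent vectors, closed half-planes (which are the sectors of opening angle $\pi$, of the form $\Cone\{v_1,-v_1,v_2\}$), and all of $E$ (which is $\Cone\{v_1,-v_1,v_2,-v_2\}$). The essential fact driving this classification is that a proper closed convex cone in $\Rbb^2$ is contained in a closed half-plane and is bounded by its two extreme rays. In every case one writes $C = \Cone\{v_1,\ldots,v_k\}$ for some $k \leq 4$.

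Second, by linearity and the definition of $\Cone$ as the set of nonnegative linear combinations, $u(C) = \Cone\{u(v_1),\ldots,u(v_k)\}$, so $u(C)$ is itself finitely generated, and in particular convex. To conclude that $u(C)$ is closed I would simply run the classification above on the images $u(v_i)$, which again live in a space of dimension at most $2$: depending on the linear dependence structure of the $u(v_i)$, one falls into one of the cases $\{0\}$, a ray, a line, a proper angular sector, a half-plane, or all of $E$, each of which is manifestly closed. Alternatively one could invoke the Minkowski--Weyl theorem, which asserts that a finitely generated cone is polyhedral and hence closed.

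The only real subtlety is the classification step, which genuinely uses the dimension restriction: in higher dimension a closed convex cone need not be polyhedral and its linear image need not be closed (the projection of an ice-cream cone in $\Rbb^3$ onto a hyperplane is the standard counterexample). Within dimension at most two, however, the case analysis is elementary and the remainder of the argument is mechanical.
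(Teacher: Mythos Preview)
Your approach is correct and genuinely different from the paper's. The paper does not invoke any dimension restriction: it writes $C=\{0\}\cup\Rbb_+\{x\in C:\norm{x}{}=1\}$, observes that the unit-sphere slice is compact, pushes this forward to $u(C)=\{0\}\cup\Rbb_+\,u(\{x\in C:\norm{x}{}=1\})$, and then asserts closedness from the compactness of the image of the slice. That final inference is not valid in general---the cone generated by a compact set need not be closed once $0$ belongs to that set, which is exactly what happens when $\Kernel u\cap C\neq\{0\}$---and the lemma as stated for arbitrary $E$ is in fact false, as your higher-dimensional remark already signals. Your route, classifying the closed convex cones in dimension at most $2$ as finitely generated so that the linear image is again finitely generated (hence polyhedral, hence closed), is the honest argument and makes explicit the low-dimensional hypothesis that the paper's proof silently requires. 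A small point of precision on your counterexample: for the Lorentz cone in $\Rbb^3$ the projection must be taken along a boundary ray of the cone to produce a non-closed image; the axis-aligned projections happen to be closed.
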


\begin{proof}
	By definition of a cone, 
	\centre{$C = \{0\}\cup\Rbb_+\enstq{x\in C}{\norm x{} = 1}$}
	Since $C$ is closed, $\enstq{x\in C}{\norm x{} = 1}$ is bounded and closed in a vector space of finite dimension, hence it is compact. By linearity of $u$,
	\centre{$u(C) = \{0\}\cup\Rbb_+u(\enstq{x\in C}{\norm x{} = 1})$}
	Since $u$ is linear over a vector space of finite dimension, it is continuous. Thus, the set $u(\enstq{x\in C}{\norm x{} = 1})$ is also compact, hence closed. The continuity of the norm ensures that $u(C)$ is closed. By linearity of $u$, we also get that $u(C)$ is a convex cone.
\end{proof}

\begin{restatable}%[Folklore]
{lemma}{riCVectC}
	\label{lem:riCVectC}
	Let $C$ be a non-trivial convex cone in $E$. Let $x\in\ri(C)\setminus\{0\}$ and $u\in \Vect (C)$. Then there is $\lambda\geq 0$ such that $u+\lambda x\in C$.
\end{restatable}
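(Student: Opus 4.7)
The plan is to combine the definition of the relative interior with the cone property of $C$. Since $C$ is a non-empty convex cone, we have $0 \in C$, so the affine hull $\affh(C)$ coincides with the linear span $\Vect(C)$. In particular, for any $u \in \Vect(C)$ and any $t \in \Rbb$, the point $x + tu$ lies in $\affh(C)$.

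I would then invoke the definition of $\ri(C)$ to obtain an open neighborhood $U \in \Ocal(E)$ of $x$ such that $U \cap \affh(C) \subseteq C$. For $\epsilon > 0$ sufficiently small the point $x + \epsilon u$ lies in $U$, hence in $U \cap \affh(C)$, and therefore in $C$. Finally, because $C$ is stable under multiplication by the positive scalar $1/\epsilon$, we obtain $(1/\epsilon)\,x + u \in C$. Setting $\lambda = 1/\epsilon \geq 0$ yields $u + \lambda x \in C$, as required.

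There is essentially no serious obstacle here: the argument is a direct consequence of the definition of the relative interior combined with closure of $C$ under nonnegative scalar multiplication. Each of the hypotheses is used exactly once, and none of the earlier propositions of the section are needed. The assumption $x \neq 0$ plays no role in the argument itself; it only ensures that $\lambda x$ contributes a genuine shift, making the conclusion non-vacuous.
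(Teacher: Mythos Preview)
Your proof is correct and follows essentially the same approach as the paper's: both exploit the definition of relative interior to find a point near $x$ in the direction of $u$ that lies in $C$ (you write it as $x+\epsilon u$, the paper as $\mu u+(1-\mu)x$), and then scale by the cone property to obtain $u+\lambda x\in C$. The only cosmetic difference is that the paper separates the trivial case $x=u$ and lands in $\ri(C)$ rather than just $C$, but the underlying idea is identical.
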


\begin{proof}
	If $x=u$ then $\lambda=0$ works. We then assume $x\neq u$. Since $u\in\Vect (C)$, there is $\mu\in\intoo01$ $\mu u + (1-\mu)x\in\ri (C)$. 
	Therefore, for any $\lambda\in\Rbb_+$, $\lambda\pa{\mu u + (1-\mu)x}\in\ri (C)$. In particular, for $\lambda = \f1\mu$ (which exists since $\mu\neq 0$), 
	\centre{$u+\f{1-\mu}\mu x\in\ri (C)$}
	and we indeed have $\f{1-\mu}\mu\geq 0$.
\end{proof}

%\begin{restatable}{lemma}{equalVectC}\label{lem:equalVectC}
%	Let $C$ be a non-trivial convex cone in $E$. Let $u,v\in \Vect C$. There are $x,y\in C$ such that $u+x=v+y\in C$.
%\end{restatable}

\subsection{Minkowski-Weyl Convex Sets}
\label{sec:mwconvex}
%Minkowski-Weyl convex sets, or \textbf{MW-convex} sets, are sets that satisfy the Minkowski-Weyl property :

\begin{definition}%[MW-convex set]
	A closed convex set $K$ is said to be \textbf{MW-convex} if there is a compact convex set $K'$ such that
	$K=K'+\rec (K)$.
\end{definition}

This property comes from the Minkowski-Weyl Theorem for polyhedra :

\begin{theorem}[Minkowski-Weyl]
	\label{thm:MW}
	Let $K\subseteq \Rbb^d$. The following statements are equivalent:
	\begin{romanenumerate}
		\item $K=\{x\in \Rbb^d\mid Ax\leq b\}$ for some matrix $A\in\Rbb^{n\times d}$ and $b\in\Rbb^n$.
		
		\item There are finitely many points $x_1,\dots x_k,\in P$ and finitely many directions $v_1,\dots,v_p$ such that
		$$
		K = \convh(\{x_1,\dots,x_k\}) + \Sum{i=1}p\Rbb_+v_i 
		$$ 
	\end{romanenumerate}
\end{theorem}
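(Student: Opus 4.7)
The plan is to reduce the theorem to the corresponding equivalence for polyhedral cones via a homogenization trick, and then prove the cone equivalence by induction on dimension using Fourier--Motzkin elimination. Concretely, to any $K\subseteq\Rbb^d$ I would associate the cone
\[C_K=\overline{\{(\lambda x,\lambda)\mid x\in K,\ \lambda\geq 0\}}\subseteq\Rbb^{d+1}.\]
If $K=\{x\mid Ax\leq b\}$, then $C_K=\{(y,t)\mid Ay\leq bt,\ t\geq 0\}$, a homogeneous H-system; conversely, any H-system for $C_K$ restricted to $t=1$ yields an H-system for $K$. Likewise, if $K=\convh\{x_1,\dots,x_k\}+\sum_i\Rbb_+v_i$, then $C_K=\Cone\{(x_1,1),\dots,(x_k,1),(v_1,0),\dots,(v_p,0)\}$; conversely, any V-representation of $C_K$ splits into generators with positive last coordinate (which, rescaled, give the points $x_i$) and generators with zero last coordinate (the directions $v_i$). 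This reduces the theorem to the following: a cone $C\subseteq\Rbb^n$ is H-polyhedral if and only if it is V-polyhedral.

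The cone equivalence is then proved by induction on $n$, the base case $n=0$ being trivial. The key tool is Fourier--Motzkin elimination: by pairing inequalities of opposite sign in the eliminated variable, one checks that the projection of an H-polyhedral cone onto any coordinate hyperplane is again H-polyhedral. For V-polyhedral cones, closure under projection is immediate, since projecting the conic generators yields conic generators of the image.

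For the direction ``H implies V'', let $C=\{x\mid Ax\leq 0\}$ and let $\pi$ be the projection onto the first $n-1$ coordinates. By the above, $\pi(C)$ is H-polyhedral, hence V-polyhedral by induction, say $\pi(C)=\Cone\{u_1,\dots,u_k\}$. For each $u_i$, pick a lift $\tilde u_i\in\pi^{-1}(u_i)\cap C$; together with generators of the one-dimensional cone $C\cap\ker\pi$ (at most two, call them $w_+$ and $w_-$), these should conically generate $C$. For ``V implies H'', given $C=\Cone\{v_1,\dots,v_p\}$, I would introduce the auxiliary cone
\[C'=\left\{(x,\lambda)\in\Rbb^n\times\Rbb^p\mid x=\sum_{i=1}^p\lambda_iv_i,\ \lambda_i\geq 0\right\},\]
which is manifestly H-polyhedral. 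Its image under projection onto the first $n$ coordinates is $C$, and iterated Fourier--Motzkin on the variables $\lambda_1,\dots,\lambda_p$ shows this image is H-polyhedral.

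The main obstacle is the lifting step in the ``H implies V'' direction: one must verify that the chosen $\tilde u_i$ and $w_\pm$ truly generate all of $C$. Given $x\in C$ with $\pi(x)=\sum\alpha_iu_i$ and $\alpha_i\geq 0$, the natural candidate $y=\sum\alpha_i\tilde u_i\in C$ has the correct projection, so $x-y\in\ker\pi$; deducing that $x-y$ lies in $C\cap\ker\pi=\Rbb_+w_++\Rbb_+w_-$ requires exploiting the structure of the Fourier--Motzkin fibres, together with a case analysis on whether $C\cap\ker\pi$ is trivial, a half-line, or a full line (the choice of the $\tilde u_i$ within their fibres may need to be adjusted in the first two cases). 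Once this lifting is settled, the rest is routine bookkeeping, and the homogenization step transfers the cone equivalence back to the affine setting stated in the theorem.
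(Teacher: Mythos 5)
The paper itself states Minkowski--Weyl without proof, citing it as a classical result, so your attempt is measured against the standard literature arguments rather than anything in the text. Your homogenization reduction is sound (modulo the edge case $K=\emptyset$ and the observation that both representations define closed sets, which is what lets you recover $K$ as the slice $t=1$ of $C_K$), and your ``V implies H'' direction --- Fourier--Motzkin elimination of the multiplier variables $\lambda_1,\dots,\lambda_p$ from the auxiliary cone $C'$ --- is exactly the standard argument and is complete.

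The genuine gap is the one you flag yourself, and it is not repairable in the form you propose: in ``H implies V'', \emph{no} choice of lifts $\tilde u_i$ of the generators of $\pi(C)$ makes the scheme work. Counterexample: take $C=\{(z_1,z_2,t)\in\Rbb^3 \mid z_1\geq 0,\ z_2\geq 0,\ t\geq z_1,\ t\geq z_2\}$ and let $\pi$ kill the last coordinate. Then $\pi(C)=\Rbb_+^2=\Cone(\{e_1,e_2\})$ and $C\cap\ker\pi=\Rbb_+e_3$. Any lift of $e_1$ has the form $(1,0,t_1)$ with $t_1\geq 1$, any lift of $e_2$ the form $(0,1,t_2)$ with $t_2\geq 1$, so the cone generated by $\tilde u_1,\tilde u_2,e_3$ is $\{(z_1,z_2,t)\mid z_1,z_2\geq 0,\ t\geq t_1z_1+t_2z_2\}$, which never contains the point $(1,1,1)\in C$. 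The obstruction is structural, not a matter of bookkeeping: the lower envelope $L(z)=\max(z_1,z_2)$ of the fibres is convex, so $\sum_i\alpha_iL(u_i)\geq L\pa{\sum_i\alpha_iu_i}$ with strict inequality away from the linearity regions of $L$; lifting only the extreme rays of $\pi(C)$ generates the cone above the linear interpolation of $L$, not above $L$ itself. To close the gap you must add generators for each linearity region of the envelope --- which is essentially the double description method, i.e., induction on the number of inequalities with the exchange step $w_{ij}=(a^Tv_i)v_j-(a^Tv_j)v_i$ for $a^Tv_i>0>a^Tv_j$ --- or sidestep the direction entirely via polarity: having proved V implies H, show that the polar of $\{x\mid Ax\leq 0\}$ is the cone generated by the rows of $A$ (Farkas, itself a consequence of Fourier--Motzkin) and invoke $C^{\circ\circ}=C$ for closed convex cones. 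Either repair is standard; your sketch as written contains neither.
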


Needing this property, we will assume that the sets $K$ we consider are MW-convex. Note 
that, among others, polyhedrons are MW-convex, and thus our results apply to a more 
general class of sets.

One of the main benefits of MW-convex sets is that they behave very nicely with linear projections. 
Unlike other convex sets, the projections ``commute'' with the operator $\rec$, giving a reciprocal to Lemma \ref{lem:projcRecc}.

\begin{restatable}{lemma}{recProj}\label{lem:recProj}
	Let $K\subseteq\Rbb^d$ be MW-convex. Let $\pi$ be a linear projection over $\Rbb^d$. We have
	$
	\rec(\pi(K))\subseteq \pi(\rec (K))
	$.
\end{restatable}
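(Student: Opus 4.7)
The plan is to use the defining property of MW-convexity in a direct, semi-constructive way. By hypothesis we can write $K=K'+\rec(K)$ for some compact convex $K'$, and since $\rec(K)$ is a closed convex cone, Lemma \ref{lem:linClosedCvCone} tells us that $\pi(\rec(K))$ is also a closed convex cone. Applying $\pi$ to the decomposition of $K$ yields $\pi(K)=\pi(K')+\pi(\rec(K))$, where $\pi(K')$ is compact (continuous image of a compact set) and $\pi(\rec(K))$ is a closed convex cone. This structural decomposition is what drives everything.

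Now take any $v\in\rec(\pi(K))$ and fix an arbitrary $y\in\pi(K)$. For every $\lambda\geq 0$, we have $y+\lambda v\in\pi(K)$, so there exist $p_\lambda\in\pi(K')$ and $c_\lambda\in\pi(\rec(K))$ with
\[
y+\lambda v = p_\lambda + c_\lambda.
\]
For $\lambda>0$ this rearranges to
\[
v = \frac{c_\lambda}{\lambda} + \frac{p_\lambda - y}{\lambda}.
\]
Because $\pi(K')$ is bounded and $y$ is fixed, $(p_\lambda-y)/\lambda\to 0$ as $\lambda\to\pinf$. Consequently $c_\lambda/\lambda\to v$. Since $\pi(\rec(K))$ is a cone, each $c_\lambda/\lambda$ still lies in $\pi(\rec(K))$, and since $\pi(\rec(K))$ is closed, the limit $v$ also lies in $\pi(\rec(K))$. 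This gives the desired inclusion.

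The only nontrivial ingredients are the two closedness facts: that $\pi(K')$ is compact (immediate) and that $\pi(\rec(K))$ is closed (which is exactly Lemma \ref{lem:linClosedCvCone} applied to the closed convex cone $\rec(K)$). I expect these to be the only potential snags, since without closedness of $\pi(\rec(K))$ the limiting argument would only place $v$ in the closure of $\pi(\rec(K))$ rather than in $\pi(\rec(K))$ itself. The whole point of the MW-convexity assumption is to ensure that this closure step is free, which is exactly why the inclusion fails for general convex sets.
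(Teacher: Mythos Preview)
Your proof is correct and follows essentially the same approach as the paper: decompose $\pi(K)=\pi(K')+\pi(\rec(K))$, use boundedness of $\pi(K')$ to kill the error term after dividing by $\lambda$, and invoke Lemma~\ref{lem:linClosedCvCone} to pass to the limit inside $\pi(\rec(K))$. The paper's version adds two inessential detours---first lifting $y+\lambda v$ back to $K$ via an element of $\Kernel\pi$ before projecting again, and then extracting a convergent subsequence of the $\pi(K')$-component using compactness---whereas you work entirely in the image and use only boundedness, which is slightly cleaner but not a different idea.
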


\begin{proof}
	Let $x\in\rec(\pi(K))$. If $x=0$ then we immediately have $x\in\pi(\rec (K))$. Therefore, we may assume $x\neq 0$.
	For $a\in \pi(K)$, we have $a+\Rbb_+x\subseteq \pi(K)$. Thus,
	$$
	\forall\lambda\in\Rbb_+\quad\exists b(\lambda)\in\Kernel\pi \qquad a+\lambda x+b(\lambda)\in K
	$$
	\lc{Let $K'$ convex compact such that}{$K = K' + \rec (K)$}
	Therefore, for all $\lambda\in\Rbb_+$ there are $a'(\lambda)\in\pi(K')$ and $x'(\lambda)\in\pi(\rec (K))$ such that 
	$$
	a+\lambda x = a'(\lambda) + x'(\lambda)
	$$
	Since $a'(\lambda)\in \pi(K')$ and that $\pi(K')$ is compact (as the continuous image of a compact), there is $a'\in \pi(K')$ and an increasing sequence $\suiten[\lambda]$ that tends to infinity such that 
	$$
	a'(\lambda_n)\tendsto n\pinf a'
	$$
	\lc{Thus}{$\lambda_nx-x'(\lambda_n) \tendsto n\pinf a'-a$}
	\lc{We then get that}{	$
		\f{x'(\lambda_n)}{\lambda_n} = x + \f{a-a'}{\lambda_n} + \petito{n\to\pinf}{\f1{\lambda_n}}
		$}
	\lc{and}{$\f{x'(\lambda_n)}{\lambda_n} \tendsto n\pinf x$}
	Also $\f{x'(\lambda_n)}{\lambda_n}\in \pi(\rec (K))$. Moreover, using Lemma \ref{lem:linClosedCvCone}, $\pi(\rec (K))$ is closed. Hence, we have $x\in \pi(\rec (K))$ what concludes the proof.  
	
\end{proof}

The converse inclusion is true for general convex sets (Lemma \ref{lem:projcRecc} in the appendices). Combining this to Lemma
\ref{lem:recProj}, we have:

\begin{corollary}\label{cor:recProj}
	Let $K\subseteq\Rbb^d$ be MW-convex. Let $\pi$ be a linear projection over $\Rbb^d$. We have
	$
	\rec(\pi(K)) = \pi(\rec (K))
	$.
\end{corollary}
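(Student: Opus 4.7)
The plan is to prove the set equality by double inclusion, invoking one of the two preceding lemmas for each direction. The corollary is, in effect, a repackaging of Lemmas \ref{lem:projcRecc} and \ref{lem:recProj}, so the task is purely combinatorial at this stage: check that the hypotheses of each lemma are indeed satisfied under the standing assumption that $K$ is MW-convex.

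First I would handle the inclusion $\pi(\rec(K))\subseteq\rec(\pi(K))$. This direction is the ``easy'' one in the sense that it holds for arbitrary convex sets: an MW-convex set is by definition closed and convex, so it is in particular convex, and Lemma \ref{lem:projcRecc} applies verbatim. Thus no use of the MW structure is required here.

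For the reverse inclusion $\rec(\pi(K))\subseteq\pi(\rec(K))$, I would appeal directly to Lemma \ref{lem:recProj}, whose hypothesis is exactly that $K$ be MW-convex. This is the direction that genuinely consumes the MW-convex assumption, because it relies on the decomposition $K=K'+\rec(K)$ with $K'$ compact to extract, from any recession direction of $\pi(K)$, a limit point living in $\pi(\rec(K))$ (which is closed thanks to Lemma \ref{lem:linClosedCvCone}).

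Combining the two inclusions yields the claimed equality. There is no real obstacle in this step; the proof is essentially a one-line citation of the two lemmas, and the only verification needed is that $K$ being MW-convex entails convexity (for Lemma \ref{lem:projcRecc}) and MW-convexity (tautologically, for Lemma \ref{lem:recProj}).
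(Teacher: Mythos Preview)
Your proposal is correct and mirrors the paper's own argument exactly: the paper simply combines Lemma~\ref{lem:projcRecc} (the inclusion $\pi(\rec(K))\subseteq\rec(\pi(K))$ for general convex sets) with Lemma~\ref{lem:recProj} (the reverse inclusion under the MW-convex hypothesis). There is nothing to add.
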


\begin{restatable}{corollary}{projMW}\label{cor:projMW}
	Let $K\subseteq\Rbb^d$ be MW-convex. Let $\pi:\Rbb^d\to\Rbb^d$ be a linear projection. Then $\pi(K)$ is MW-convex.
\end{restatable}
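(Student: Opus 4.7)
The plan is to use the MW-decomposition $K = K' + \rec(K)$ with $K'$ compact convex, push it through $\pi$, and then identify the pieces with the MW-data for $\pi(K)$.

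First I would apply $\pi$ to both sides of $K = K' + \rec(K)$, using linearity of $\pi$, to obtain
\[
  \pi(K) \;=\; \pi(K') + \pi(\rec(K)).
\]
The set $\pi(K')$ is the continuous (linear) image of a compact convex set, hence is itself compact and convex, so it is a valid candidate for the bounded part of an MW-decomposition of $\pi(K)$. For the conic part, Corollary~\ref{cor:recProj} immediately gives $\pi(\rec(K)) = \rec(\pi(K))$, which rewrites the displayed identity as
\[
  \pi(K) \;=\; \pi(K') + \rec(\pi(K)).
\]

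The only thing still missing for MW-convexity of $\pi(K)$ is that $\pi(K)$ be closed (convexity is clear since it is the linear image of a convex set). Here is where the main work sits. Since $K$ is MW-convex it is closed, and its recession cone $\rec(K)$ is then a closed convex cone. I would invoke Lemma~\ref{lem:linClosedCvCone} to conclude that $\pi(\rec(K))$ is also a closed convex cone. Then $\pi(K)$ is expressed as the Minkowski sum of the compact set $\pi(K')$ with the closed set $\pi(\rec(K))$, and such a sum is always closed (a standard fact: if $y_n = a_n + c_n$ with $a_n$ in a compact and $c_n$ in a closed set, one extracts a convergent subsequence of the $a_n$, and then the $c_n$ converge too, forcing the limit of $y_n$ to lie in the sum).

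The main obstacle is precisely the closedness step: without invoking Lemma~\ref{lem:linClosedCvCone} the image $\pi(\rec(K))$ would only be known to be a convex cone, and in general the Minkowski sum of a compact set with a merely convex set can fail to be closed. Once closedness is in hand, the two identities above show that the pair $(\pi(K'),\, \rec(\pi(K)))$ witnesses MW-convexity of $\pi(K)$, and the proof concludes.
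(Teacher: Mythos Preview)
Your proof is correct and follows essentially the same approach as the paper: decompose $K=K'+\rec(K)$, push through $\pi$, use Lemma~\ref{lem:linClosedCvCone} for closedness of $\pi(\rec(K))$, and use Corollary~\ref{cor:recProj} to identify $\pi(\rec(K))$ with $\rec(\pi(K))$. If anything, your justification of closedness (Minkowski sum of a compact set with a closed set is closed) is more precise than the paper's phrasing ``closed convex as a sum of closed convex sets,'' which is not true in general without the compactness of one summand.
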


\begin{proof}
	We write $K=K'+\rec (K)$ where $K'$ is a convex compact set. Hence, since $\pi$ is continuous (linear in a finite dimensional space), $\pi(K')$ is also compact. Moreover, by linearity of $\pi$, we get that 
	\centre{$\pi(K) = \pi(K') + \pi(\rec (K))$}
	By Lemma \ref{lem:linClosedCvCone}, $\pi(\rec (K))$ is a closed convex cone. Hence, $\pi(K)$ is closed convex as a sum of closed convex sets. By Corollary \ref{cor:recProj}, we get
	\centre{$\pi(K) = \pi(K')+\rec(\pi (K))$}
\end{proof}

\subsection{Accumulation Expansions}

We consider an arbitrary Euclidean space $E$ of dimension $d\in\Nbb$. We denote
$\inner{\cdot,\cdot}$ its scalar product and $\norm{\cdot}{}$ the
associated norm.

To study the sequences of the constraint loop problem, we need to identify the 
asymptotic directions these sequences are going towards, building a form of asymptotic 
expansion of those sequences.
We thus introduce the concept of accumulation expansion. 
As sequences may point in several directions, we consider the expansion of 
a subsequence that has a single main direction.

\begin{definition}\label{def:accExp}
	Let $\suiten$ be a sequence of $E$.
	An \textbf{accumulation expansion} of $\suiten$ consists in an increasing function $\psi:\Nbb\to\Nbb$, an integer $p\in\intn 0d$, some vectors $z_1,\dots,z_{p+1}\in E$ and sequences $\suite{\alpha_{k,n}}n\Nbb$ for $k\in\intn 1p$ such that
		\begin{description}
			\item[(AE1)\label{it:normz}] $\forall k\in\intn1p\qquad \norm {z_k}{}=1$

			\item[(AE2)\label{it:zOrth}] $\forall k,k'\in\intn1p\qquad \inner{z_k,z_{k'}}=\begin{accolade}
				1 & \text{if }k=k'\\ 0 & \text{if }k\neq k'
			\end{accolade}$
			\item[(AE3)\label{it:z0Orth}] $\forall k\in\intn1p\qquad \inner{z_k,z_{p+1}}=0$

			\item[(AE4)\label{it:alplhapos}] $\forall k\in\intn1p\qquad\forall n\in\Nbb\qquad \alpha_{k,n}>0$ 
			
			\item[(AE5)\label{it:alphapinf}] $\forall k\in\intn1p\qquad \alpha_{k,n}\tendsto n\pinf\pinf$ 
			
			\item[(AE6)] $\forall m\in\intn1p\qquad \alpha_{m,n} \underset{n\to\pinf}{\sim} \norm{u_{\psi(n)} - \Sum{k=1}{m-1}\alpha_{k,n}z_k}{}$
			
			\item[(AE7)\label{it:alphao}] $\forall k\in\intn1{p-1} \qquad \alpha_{k+1,n} = \petito{n\to\pinf}{\alpha_{k,n}}$ 
			
			\item[(AE8)\label{it:projSucc}] $\forall n\in\Nbb\qquad \forall \ell\leq m\in\intn1p\qquad  \inner{z_\ell, u_{\psi(n)} - \Sum{k=1}m\alpha_{k,n}z_k}=0$
			
			\item[(AE9)\label{it:decompu}] $u_{\psi(n)} = \Sum{k=1}{p}\alpha_{k,n}z_k + z_{p+1} + \petito{n\to\pinf}1$

		\end{description}
	Abusing notations, we will say that $u_{\psi(n)} = \Sum{k=1}{p}\alpha_{k,n}z_k + z_{p+1} + \petito{n\to\pinf}1$ is an accumulation expansion of $\suiten$.
\end{definition}

\begin{definition}
	Let $u=\suiten$ be a sequence of $E$. The set $\Dcal_u$ of \textbf{principal directions} of $u$ is defined by
	$$
		\Dcal_u = \enstq{z\in E}{\begin{array}{c}
				u_{\psi(n)} = \Sum{k=1}{p}\alpha_{k,n}z_k + z_{p+1} + \petito{n\to\pinf}1\text{ is an accumulation expansion}\\
				p\geq1\qqandqq z=z_1
		\end{array}}
	$$
\end{definition}

In other words, $\Dcal_u$ is the set of directions that are in the dominant position of some accumulation expansion of $u$ such that $p\geq 1$.
It also corresponds to the dominant directions of an unbounded sequence.

For $x\in E\setminus\{0\}$ we denote $\tilde x=\f{x}{\norm{x}{}}$ the associated normalized vector.

\begin{lemma}\label{lem:dominantDirection}
	Let $\suiten$ be an unbounded sequence of $E$. 
	There exist $z\in E$ a unit vector, an increasing function $\phi:\Nbb\to\Nbb$ and a sequence $\suiten[\alpha]$ such that 
	\begin{itemize}
		\item $\forall n\in\Nbb\qquad \alpha_n>0$
		\item $\alpha_n\tendsto n\pinf\pinf$
		\item $\alpha_n\underset{n\to\pinf}{\sim}\norm{u_{\phi(n)}}{}$
		\item $u_{\phi(n)} = \alpha_nz+\petito{n\to\pinf}{\alpha_n}$
		\item $\forall n\in\Nbb\qquad u_{\phi(n)}-\alpha_nz\in z^\bot$ where $z^\bot$ means the vector subspace of $E$ orthogonal to $\Vect(\{z\})$
	\end{itemize}
\end{lemma}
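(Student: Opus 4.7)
The plan is to extract a subsequence along which the normalised vectors converge to some unit vector $z$, and then set $\alpha_n$ to be the projection onto $z$. Compactness of the unit sphere in the finite-dimensional space $E$ does all the work.

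First I would exploit unboundedness: since $(u_n)$ is unbounded, one can extract an increasing $\psi:\Nbb\to\Nbb$ with $\|u_{\psi(n)}\|>0$ for all $n$ and $\|u_{\psi(n)}\|\to+\infty$. The normalised sequence $\widetilde{u_{\psi(n)}}=u_{\psi(n)}/\|u_{\psi(n)}\|$ lives on the unit sphere of $E$. Because $E$ has finite dimension, this sphere is compact, so by Bolzano--Weierstrass there is a further extraction $\chi:\Nbb\to\Nbb$ and a unit vector $z\in E$ with
\[
\widetilde{u_{\phi(n)}}\tendsto n\pinf z,\qquad\text{where }\phi=\psi\circ\chi.
\]

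Next I would define the candidate coefficient by orthogonal projection onto $\Vect(\{z\})$:
\[
\alpha_n=\inner{u_{\phi(n)},z}.
\]
Then $u_{\phi(n)}-\alpha_n z\in z^\bot$ by construction, which takes care of the last bullet. For the remaining properties, note that $\alpha_n=\|u_{\phi(n)}\|\inner{\widetilde{u_{\phi(n)}},z}$ with the second factor tending to $\inner{z,z}=1$. Therefore $\alpha_n/\|u_{\phi(n)}\|\to 1$, i.e.\ $\alpha_n\underset{n\to\pinf}{\sim}\|u_{\phi(n)}\|$, so $\alpha_n\to+\infty$; and by discarding finitely many initial indices (equivalently composing $\phi$ with a shift) we can guarantee $\alpha_n>0$ for all $n$.

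Finally, for $u_{\phi(n)}=\alpha_n z+\petito{n\to\pinf}{\alpha_n}$, I would use Pythagoras on the orthogonal decomposition $u_{\phi(n)}=\alpha_n z+(u_{\phi(n)}-\alpha_n z)$, which gives
\[
\norm{u_{\phi(n)}-\alpha_n z}{}^2 = \norm{u_{\phi(n)}}{}^2-\alpha_n^2.
\]
Dividing by $\alpha_n^2$ and using $\alpha_n/\|u_{\phi(n)}\|\to 1$ shows that the right-hand side tends to $0$, hence $\|u_{\phi(n)}-\alpha_n z\|=\petito{n\to\pinf}{\alpha_n}$, as required.

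There is no real obstacle: the only subtlety is pipelining the two extractions (one to send the norms to infinity, one for compactness on the sphere), plus the mild bookkeeping of shifting to enforce strict positivity of $\alpha_n$. Everything else is an immediate consequence of the continuity of the inner product together with $\widetilde{u_{\phi(n)}}\to z$.
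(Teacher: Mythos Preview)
Your proof is correct and follows essentially the same route as the paper: extract so that norms blow up, use compactness of the unit sphere to get a limiting direction $z$, and set $\alpha_n$ to be the coefficient of the orthogonal projection onto $\Rbb z$ (your $\alpha_n=\inner{u_{\phi(n)},z}$ is exactly this). If anything, your Pythagoras computation makes explicit the step the paper leaves implicit when it asserts $u_{\phi(n)}=\alpha_n z+\petito{n\to\pinf}{\alpha_n}$.
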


\begin{proof}
	Since $\suiten$ is unbounded, we can assume that we have an increasing function $\phi:\Nbb\to\Nbb$ such that for all $n\in\Nbb$, $u_{\phi(n)}\neq0$ and $\norm{u_{\phi(n)}}{}\tendsto n\pinf\pinf$.
	Therefore the sequence $\suite{\tilde{u_{\phi(n)}}}n\Nbb$ is well defined. Moreover, as
	it is bounded by definition, up to refining $\phi$, we can assume that it converges 
	to some $z\in E$. 
	Let $\pi$ be the orthogonal projection onto $\Rbb z$. We define $\alpha_n$ to be the unique real number such that $\pi(u_{\phi(n)}) = \alpha_n z$.
	As $\tilde{u_{\phi(n)}}\tendsto n\pinf z$, we have that $\alpha_n\underset{n\to\pinf}\sim\norm{u_{\phi(n)}}{}$.
	Therefore, up to refining $\phi$, we can assume that $\alpha_n\tendsto n\pinf\pinf$ and $\alpha_n>0$. Moreover,
	we have $u_{\phi(n)} = \alpha_nz+\petito{n\to\pinf}{\alpha_n}$.
	Finally, by definition of $\pi$, for all $n\in\Nbb$, $u_{\phi(n)}-\alpha_nz\in z^\bot$.
\end{proof}

\begin{proposition}\label{prop:accExp}
	Any sequence $u=\suiten$ of $E$ admits accumulation expansions. Moreover, if $u$ is unbounded, then $\Dcal_u$ is not empty.
\end{proposition}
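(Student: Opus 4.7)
The plan is to construct the expansion iteratively by peeling off dominant directions via Lemma \ref{lem:dominantDirection}, maintaining the following invariant after $m$ iterations: there exist mutually orthogonal unit vectors $z_1,\dots,z_m$, positive sequences $\alpha_{1,n},\dots,\alpha_{m,n}$ each tending to $\pinf$, and an increasing extractor $\psi_m:\Nbb\to\Nbb$, such that the residual $r_{m,n} := u_{\psi_m(n)} - \sum_{k=1}^m \alpha_{k,n} z_k$ lies in the closed subspace $F_m := \{z_1,\dots,z_m\}^{\bot}$, satisfies $r_{m,n}=\petito{n\to\pinf}{\alpha_{m,n}}$, and that conditions (AE1), (AE2), (AE4)--(AE8) hold for indices up to $m$.

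At each step $m$ I would first attempt to terminate: if the current sequence (namely $u_n$ for $m=0$, or $r_{m,n}$ for $m\geq 1$) is bounded, I extract a convergent subsequence by Bolzano--Weierstrass, set $z_{p+1}$ to be its limit with $p := m$, and stop. Otherwise I apply Lemma \ref{lem:dominantDirection} inside $F_m$ (itself a Euclidean space of dimension $d-m$, with the induced inner product) to the current sequence, obtaining the next unit direction $z_{m+1}\in F_m$, a positive sequence $\alpha_{m+1,n}\to\pinf$, and a further extraction; this updates the residual to $r_{m+1,n}\in F_{m+1}$ with $r_{m+1,n}=\petito{n\to\pinf}{\alpha_{m+1,n}}$. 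Since $\dim F_m = d-m$ strictly decreases, the procedure terminates after at most $d$ iterations (when $F_d=\{0\}$ the residual is forced to zero). The composed extraction $\psi:=\psi_p$ together with the final $z_{p+1}$ then gives the accumulation expansion.

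Verification of the conditions is then routine: (AE1), (AE4), (AE5) come directly from Lemma \ref{lem:dominantDirection}; (AE2) and (AE3) hold because each $z_k$ (and the last $z_{p+1}$) is constructed inside $F_{k-1}$, orthogonal to the previous $z_i$'s; (AE6) is the relation $\alpha_{m,n}\sim\|r_{m-1,n}\|$ produced by the lemma applied at step $m$; (AE7) follows from $\alpha_{m+1,n}\sim\|r_{m,n}\|=\petito{n\to\pinf}{\alpha_{m,n}}$; (AE8) is proved by induction on $m$, using that $z_m\perp z_\ell$ for $\ell<m$ so that subtracting $\alpha_{m,n}z_m$ from $r_{m-1,n}$ preserves all previously-vanishing inner products, while $\langle z_m, r_{m,n}\rangle = 0$ holds by the last property of the lemma; (AE9) follows from the final residual converging to $z_{p+1}$. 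For the ``moreover'' part, if $u$ is unbounded then the initial iteration cannot fall into the bounded case, so $p\geq 1$ and $z_1\in \Dcal_u$, giving $\Dcal_u\neq\emptyset$.

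The only real difficulty is bookkeeping: the successive extractions produced at each step must be composed so that all $\alpha_{k,n}$ and the identities (AE6)--(AE8) are simultaneously reindexed along the final extraction $\psi$. One must also observe that applying Lemma \ref{lem:dominantDirection} inside $F_m$ is legitimate because $F_m$ inherits the Euclidean structure from $E$; beyond this careful orchestration, no new idea is required.
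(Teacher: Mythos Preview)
Your proposal is correct and follows essentially the same approach as the paper: peeling off dominant directions one at a time via Lemma~\ref{lem:dominantDirection} in successively smaller orthogonal complements, terminating when the residual becomes bounded. The paper phrases this as an induction on $\dim E$ whereas you unroll it into an explicit iteration with a boundedness check at each step, but the two arguments are structurally identical, including the composition of extractions and the verification of (AE7) via $\alpha_{m+1,n}\sim\|r_{m,n}\|=\petito{n\to\pinf}{\alpha_{m,n}}$.
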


\begin{proof}
If $\suiten$ is bounded, then it has an accumulation point $z_1$. Hence, taking $p=0$, all 
the points are trivially true except Point \ref{it:decompu}. Taking any $\psi$ given by the 
definition of accumulation point lead to 
$u_{\psi(n)} = z_1 + \petito{n\to\pinf}1$.

Assume now that $\suiten$ is unbounded.
We proceed by induction on $d=\dim E$.
	\begin{itemize}
		\item If $d=1$, consider $z_1$ and $\suite{\alpha_{1,n}}n\Nbb$ and $\psi$ given by Lemma \ref{lem:dominantDirection}. 
		By definition, $\norm {z_1}{}=1$ and $u_{\psi(n)}-\alpha_{1,n}z_1\in z_1^\bot=\{0\}$. Taking $p=1$ and $z_2=0$ satisfies all the required properties. Moreover, $z_1\in\Dcal_u$.
		
		\item Assume the proposition holds for any Euclidean space of dimension $d-1$.
		Consider $z_1$, $\suite{\alpha_{1,n}'}n\Nbb$ and $\phi$ given by Lemma \ref{lem:dominantDirection}. 
		By definition $\norm {z_1}{}=1$ and $u_{\phi(n)}-\alpha_{1,n}'z_1\in z_1^\bot$.
		Since $z_1\neq 0$, $\dim z_1^\bot=d-1$. We can thus apply the induction hypothesis on the sequence $\suite{u_{\phi(n)}-\alpha_{1,n}'z_1}n\Nbb$ in $z_1^\bot$.
		Let $\phi'$ be the function given by the induction hypothesis. Let 
		$\psi=\phi\circ\phi'$ and $\alpha_{1,n} = \alpha_{1,\phi'(n)}'$.
		
		Every point is immediately satisfied either by the induction hypothesis or the fact that $z_1$ is orthogonal to any point in $z_1^\bot$, except for Point \ref{it:alphao}: It remains to prove that if $p\geq2$, then $\alpha_{2,n}=\petito{n\to\pinf}{\alpha_{1,n}}$.
		By induction hypothesis we know that
		\centre{$\alpha_{2,n}\underset{n\to\pinf}{\sim}\norm{u_{\psi(n)}-\alpha_{1,n}z_1}{}$}
		\lc{Moreover, by Lemma \ref{lem:dominantDirection}}{$\norm{u_{\phi(n)}-\alpha_{1,n}'z_1}{} = \petito{n\to\pinf}{\alpha_{1,n}'}$}
		Since $\suite{\alpha_{1,n}}n\Nbb$ is a subsequence of $\suite{\alpha_{1,n}'}n\Nbb$, we have
		\centre{$\alpha_{2,n}\underset{n\to\pinf}{\sim}\norm{u_{\psi(n)}-\alpha_{1,n}z_1}{} = \petito{n\to\pinf}{\alpha_{1,n}}$}
		as required.  Moreover, $z_1\in\Dcal_u$.
	\end{itemize}
\end{proof}

We now state a relation between the directions within the accumulation expansion and
the set $\rec (K)$.

\begin{restatable}{proposition}{falsetrue}\label{prop:accExpRecK}
	Let $E$ be an Euclidean space.
	Let $K\subseteq E$ be MW-convex. 
	Let $u=\suiten$ be an unbounded sequence in $K$.
	Let $u_{\phi(n)} = \Sum{k=1}{p}\alpha_{k,n}z_k + z_{p+1} + \petito{n\to\pinf}1$
	be an accumulation expansion of $\suiten$. Then, there are some positive real numbers $\pa{\beta_{k,\ell}}_{1\leq \ell < k\leq p+1}$ such that
	\centre{$
		\forall k\in\intn1p\qquad z_k +\Sum{\ell=1}{k-1} \beta_{k,\ell}z_\ell\in \rec (K)$ 
	}
	\lc{and }{$z_{p+1} +\Sum{\ell=1}{p} \beta_{p+1,\ell}z_\ell\in K$.}
\end{restatable}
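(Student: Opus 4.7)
My plan is to prove the claim by induction on $k\in\intn 1{p+1}$, producing at each stage the vector $w_k := z_k+\sum_{\ell<k}\beta_{k,\ell}z_\ell$ and showing that $w_k\in\rec(K)$ for $k\leq p$ (respectively $w_{p+1}\in K$). The core structural input is MW-convexity: writing $K=K'+\rec(K)$ with $K'$ compact convex, each $u_{\phi(n)}$ splits as $u_{\phi(n)}=a_n+r_n$ with $a_n\in K'$ bounded and $r_n\in\rec(K)$. Since $\alpha_{k,n}\to\pinf$, dividing $r_n$ by $\alpha_{k,n}$ eliminates the bounded part and yields asymptotic elements of $\rec(K)$ that reveal the directions $z_k$.

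For the base case $k=1$, I note that $u_{\phi(n)}/\alpha_{1,n}\to z_1$ from the accumulation expansion, so $r_n/\alpha_{1,n}\to z_1$ as well. Since $r_n/\alpha_{1,n}\in\rec(K)$ and $\rec(K)$ is closed, I conclude $z_1=w_1\in\rec(K)$.

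For the inductive step with $2\leq k\leq p$, let $V=\Vect\{z_1,\ldots,z_{k-1}\}$ and let $\pi$ be the orthogonal projection onto $V^\perp$. Using \ref{it:projSucc} and \ref{it:z0Orth}, I check that $\pi(u_{\phi(n)})=\sum_{\ell=k}^p\alpha_{\ell,n}z_\ell+z_{p+1}+o(1)$, so that the leading direction in $V^\perp$ becomes $z_k$. Corollary~\ref{cor:projMW} ensures $\pi(K)$ is MW-convex, and applying the base-case argument inside $V^\perp$ to $\pi(u_{\phi(n)})\in\pi(K)$ yields $z_k\in\rec(\pi(K))$, which by Corollary~\ref{cor:recProj} equals $\pi(\rec(K))$. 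Hence some $v\in\rec(K)$ takes the form $v=z_k+\sum_{\ell<k}c_\ell z_\ell$ with \emph{possibly negative} $c_\ell\in\Rbb$. To repair the signs I appeal to the induction hypothesis: each $w_\ell\in\rec(K)$ for $\ell<k$, so $v+\sum_{\ell<k}\mu_\ell w_\ell\in\rec(K)$ for any $\mu_\ell\geq 0$. Expanding in the basis $z_1,\ldots,z_{k-1}$, the coefficient of $z_j$ is $c_j+\mu_j+\sum_{j<\ell<k}\mu_\ell\beta_{\ell,j}$, and I choose the $\mu_j$'s iteratively from $j=k-1$ down to $j=1$, each large enough to force the corresponding coefficient to be strictly positive; this yields the desired $\beta_{k,j}>0$.

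The case $k=p+1$ is analogous, working with $K$ rather than $\rec(K)$: the projection gives $\pi(u_{\phi(n)})\to z_{p+1}$, which lies in the closed set $\pi(K)$, so $z_{p+1}+u\in K$ for some $u\in V$; adding positive multiples of the $w_\ell\in\rec(K)$ preserves membership in $K$ because $K+\rec(K)\subseteq K$, and the same sign-repair procedure applies. I expect the main obstacle to be this sign-repair step: the projection only yields \emph{some} $u\in V$ with no sign control on its coordinates, and the construction relies crucially on having the entire flag $w_1,\ldots,w_{k-1}$ already available in $\rec(K)$ so that each negative coefficient can be absorbed in turn.
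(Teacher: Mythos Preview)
Your proof is correct and follows essentially the same structure as the paper: show $z_k\in\pi_k(\rec(K))$ via projection onto $\Vect(z_1,\ldots,z_{k-1})^\perp$, then repair the signs inductively by adding nonnegative multiples of the already-constructed $w_\ell\in\rec(K)$. Your direct use of the MW decomposition $u_{\phi(n)}=a_n+r_n$ (with $a_n\in K'$ bounded) to conclude $r_n/\alpha_{k,n}\to z_k$ in $\rec(\pi_k(K))$ is a slight shortcut compared to the paper, which instead forms convex combinations $\lambda_{k,n}u_{\phi(n)}+(1-\lambda_{k,n})u_0$ with $\lambda_{k,n}=\lambda/\alpha_{k,n}$ and invokes Proposition~\ref{prop:recConeClosedConvex}; the two arguments are interchangeable, and your top-down sign repair is equivalent to the paper's ``fix the largest bad index and repeat'' procedure.
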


\begin{proof}
	For $k\in\intn1p$, we consider $\pi_k: E\to E$ the orthogonal projection onto the vector space $\Vect((z_1,\dots, z_{k-1})^\bot)$. Let us first show that $z_k \in \pi_k(\rec (K))$.
	Let $\lambda\in\Rbb_+$ and define
	\centre{$\lambda_{k,n} = \f\lambda{\alpha_{k,n}}$}
	Note that for large enough $n$, $\lambda_{k,n}\in\intff01$. Without loss of generality, we assume $\lambda_{k,n}\in\intff01$. Then, by convexity,
	\centre{$\lambda_{k,n}u_{\phi(n)} + (1-\lambda_{k,n})u_0 \in K$}
	Moreover,
	\begin{calculs}
		& \pi_k\pa{\lambda_{k,n}u_{\phi(n)} + (1-\lambda_{k,n})u_0} &=&
		\lambda z_k + \Sum{\ell=k+1}{p}\lambda_{k,n}\alpha_{\ell,n}z_\ell \\
		&&&\quad+ \lambda_{k,n}z_{p+1} + (1-\lambda_{k,n})\pi_k(u_0) + \petito{n\to\pinf}{\lambda_{k,n}}\\
		&&\tendsto n\pinf & \lambda z_k + \pi_k(u_0)
	\end{calculs}
	Also, thanks to Corollary \ref{cor:projMW}, we have
	$\bar{\pi_k(K)} = \pi_k(K)$.
	Using now Proposition \ref{prop:recConeClosedConvex}, we then conclude that $z_k\in \rec(\pi_k(K))$.
	Finally, using Corollary \ref{cor:recProj}, 
	\centre{$z_k \in \pi_k(\rec (K))$}
	
	We now prove the proposition by induction on $k$.
	For $k=1$, our preliminary result gives in particular that $z_1\in\rec (K)$.  
	
	Assume now that $\pa{\beta_{q,\ell}}_{1\leq\ell<q<k}$ have been defined for some $k\in\intn1p$. Since $z_k\in\pi_k(\rec (K))$ as proven earlier, there are some real numbers $\suite{\gamma_{k,\ell}}\ell{\intn1{k-1}}$ such that
	\centre{$z_k+\Sum{\ell=1}{k-1}\gamma_{k,\ell}z_\ell \in\rec (K)$}
	If all the $\gamma_{k,\ell}$ are positive then fixing $\beta_{k,\ell}=\gamma_{k,\ell}$ satisfies the proposition. 
	Let $\ell\in\intn1{k-1}$ maximum such that $\gamma_{k,\ell}\leq 0$. Then, as by hypothesis
	we have that $z_\ell +\Sum{j=1}{\ell-1} \beta_{\ell,j}z_j\in\rec (K)$, we can deduce that

	\centre{$z_k+\Sum{j=1}{k-1}\gamma_{k,j}z_j + \pa{1+|\gamma_{k,\ell}|}\pa{z_\ell +\Sum{j=1}{\ell-1} \beta_{\ell,j}z_j}\in\rec (K)$} 
	\lc{Considering}{$\gamma_{k,j}' = \begin{accolade}
			\gamma_{k,j} & j>\ell\\
			1 & j=\ell\\
			\gamma_{k,j} + (1+|\gamma_{k,\ell}|)\beta_{\ell,j} & j<\ell
		\end{accolade}$}
	We end up with $z_k+\Sum{\ell=1}{k-1}\gamma_{k,\ell}'z_\ell \in\rec (K)$ with one less non-positive coefficient. Repeating this procedure until every coefficient is positive lead to a sum of the desired shape, thus establishing the induction hypothesis holds on $k$ and therefore concluding the induction.
	
	Let $\pi_{p+1}:E\to E$ the orthogonal projection on $\Vect((z_1,\dots, z_p)^\bot)$. We have
	\centre{$\pi_{p+1}\pa{u_{\phi(n)}}\tendsto n\pinf z_{p+1}$}
	\lc{By Corollary \ref{cor:projMW},}{$z_{p+1}\in\bar{\pi_{p+1}(K)} = \pi_{p+1}(K)$}
	Thus, there are some real numbers $\suite{\gamma_{p+1,\ell}}\ell{\intn1k}$ such that
	\centre{$z_{p+1}+\Sum{\ell=1}{p}\gamma_{p+1,\ell}z_\ell \in K$}
	Doing the same work as above, we can add some elements of $\rec (K)$ so that we end up with some positive $\suite{\beta_{p+1,\ell}}\ell{\intn1k}$ such that 
	\centre{$z_{p+1}+\Sum{\ell=1}{p}\gamma_{p+1,\ell}z_\ell \in K$}
\end{proof}

The two following corollaries specialise this result for some form of sequences. 

\begin{restatable}{corollary}{falsetruecor}\label{cor:accExpRecKProj}
	Let $E$ an Euclidean space.
	Let $\pi:E\to E$ be a linear projection. Let $K\subseteq E$ be MW-convex. 
	Let $u=\suiten$ be an unbounded sequence in $K$ and $x\in\Dcal_{\pi(u)}$. 
	Let 
	\centre{$\f{(\id-\pi)(u_{\phi(n)})}{\norm{\pi(u_{\phi(n)})}{}} = \Sum{k=1}{p}\alpha_{k,n}z_k + z_{p+1} + \petito{n\to\pinf}1$}
	be an accumulation expansion of $\suite{\f{(\id-\pi)\pa{u_n}}{\norm{\pi(u_n)}{}}}n\Nbb$ such that
	\centre{$\tilde{\pi(u_{\phi(n)})}\tendsto n\pinf x$} 
	Then, there are some positive real numbers $\pa{\beta_{k,\ell}}_{1\leq \ell < k\leq p+1}$ such that
	\centre{$
		\forall k\in\intn1{p+1}\qquad z_k +\Sum{\ell=1}{k-1} \beta_{k,\ell}z_\ell\in \rec (K)
		\qqandqq x+z_{p+1} +\Sum{\ell=1}{p} \beta_{p+1,\ell}z_\ell\in \rec (K)$}
\end{restatable}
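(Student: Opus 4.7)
The plan is to mimic the proof of Proposition~\ref{prop:accExpRecK} applied to the unbounded sequence $\suite{u_{\phi(n)}}n\Nbb\subseteq K$, using the combined asymptotic information from the given accumulation expansion of $(\id-\pi)(u)/\norm{\pi(u)}{}$ and from $\tilde{\pi(u_{\phi(n)})}\to x$. Setting $\mu_n=\norm{\pi(u_{\phi(n)})}{}$, the hypothesis $x\in\Dcal_{\pi(u)}$ forces $\mu_n\to\pinf$, and multiplying through and summing the two pieces yields
\[
u_{\phi(n)} \;=\; \sum_{k=1}^{p}\mu_n\alpha_{k,n}z_k \;+\; \mu_n(x+z_{p+1}) \;+\; \petito{n\to\pinf}{\mu_n},
\]
with scale ordering $\mu_n\alpha_{1,n}\gg\cdots\gg\mu_n\alpha_{p,n}\gg\mu_n$ by \ref{it:alphao} and \ref{it:alphapinf}.

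For each $k\in\intn 1p$, let $\pi_k$ denote the orthogonal projection onto $\Vect(z_1,\dots,z_{k-1})^\bot$, and for $\lambda\in\Rbbplus$ form the convex combination $\lambda_{k,n}u_{\phi(n)}+(1-\lambda_{k,n})u_0\in K$ with $\lambda_{k,n}=\lambda/(\mu_n\alpha_{k,n})$. Applying $\pi_k$ kills every term of index~$<k$; the terms of index~$>k$ together with the $\mu_n(x+z_{p+1})$ summand vanish in the limit because their scales are strictly dominated by $\mu_n\alpha_{k,n}$ (the $(x+z_{p+1})$-summand by the factor $1/\alpha_{k,n}\to 0$). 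Hence $\lambda z_k + \pi_k(u_0)\in\pi_k(K)$ for every $\lambda\geq 0$; since $\pi_k(K)$ is MW-convex by Corollary~\ref{cor:projMW} and therefore closed, Proposition~\ref{prop:recConeClosedConvex} gives $z_k\in\rec(\pi_k(K))$, and Corollary~\ref{cor:recProj} lifts this to reals $\gamma_{k,\ell}$ with $z_k + \sum_{\ell<k}\gamma_{k,\ell}z_\ell\in\rec(K)$. For $k=p+1$, the analogous argument with $\pi_{p+1}$ the orthogonal projection onto $\Vect(z_1,\dots,z_p)^\bot$ and $\lambda_{p+1,n}=\lambda/\mu_n$ kills every $z_k$-term with $k\leq p$ regardless of its coefficient, producing limit $\lambda\pi_{p+1}(x+z_{p+1})+\pi_{p+1}(u_0)\in\pi_{p+1}(K)$. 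Using $z_{p+1}\in\Vect(z_1,\dots,z_p)^\bot$ by \ref{it:z0Orth} and writing $x-\pi_{p+1}(x)$ in the basis $(z_1,\dots,z_p)$, one arrives at $x+z_{p+1}+\sum_{\ell=1}^{p}\gamma_{p+1,\ell}z_\ell\in\rec(K)$ for suitable reals $\gamma_{p+1,\ell}$.

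Finally, perform the same inductive positivity cleanup as in Proposition~\ref{prop:accExpRecK}: processing $k$ in order $1,\dots,p+1$, whenever $\gamma_{k,\ell}\leq 0$ absorb a sufficiently large positive multiple of $z_\ell+\sum_{j<\ell}\beta_{\ell,j}z_j\in\rec(K)$ (already in the cone by the induction) to convert $\gamma_{k,\ell}$ into a positive $\beta_{k,\ell}$; this preserves membership in the convex cone $\rec(K)$. The principal technical obstacle is the \emph{extra} membership $z_{p+1}+\sum\beta_{p+1,\ell}z_\ell\in\rec(K)$ demanded at $k=p+1$ under the \emph{same} family $\beta_{p+1,\ell}$ as the one used for $x+z_{p+1}$: this calls for a separate convex-combination-and-projection argument — for instance via the orthogonal projection onto $\Vect(z_1,\dots,z_p,x)^\bot$ — to witness $z_{p+1}$ alone in the projected recession cone, followed by a joint cleanup that takes the $\beta_{p+1,\ell}$'s large enough to validate both recession-cone memberships simultaneously.
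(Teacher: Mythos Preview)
Your approach is exactly the paper's: multiply the given accumulation expansion by $\mu_n=\norm{\pi(u_{\phi(n)})}{}$, combine with $\pi(u_{\phi(n)})=\mu_n x+\petito{}{\mu_n}$ to obtain
\[
u_{\phi(n)}=\sum_{k=1}^p\mu_n\alpha_{k,n}z_k+\mu_n(x+z_{p+1})+\petito{n\to\pinf}{\mu_n},
\]
and then rerun the argument of Proposition~\ref{prop:accExpRecK} on this truncated expansion of $\suite{u_{\phi(n)}}n\Nbb$. The paper simply invokes the proposition, noting that because the last displayed term still carries the diverging coefficient $\mu_n$, the $(p{+}1)$-level conclusion lands in $\rec(K)$ rather than in $K$. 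You unroll the same proof instead of citing it, which is fine.

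The ``principal technical obstacle'' you flag is not a genuine obstacle: the range $k\in\intn1{p+1}$ in the first clause of the statement is a typo for $\intn1p$. The paper's own proof establishes only $z_k+\sum_{\ell<k}\beta_{k,\ell}z_\ell\in\rec(K)$ for $k\leq p$ together with $x+z_{p+1}+\sum_{\ell\leq p}\beta_{p+1,\ell}z_\ell\in\rec(K)$, and when the corollary is actually invoked (in the proof of Corollary~\ref{cor:accExpRecKProj2}) the first clause is quoted with range $\intn1p$. So there is no need to produce $z_{p+1}+\sum\beta_{p+1,\ell}z_\ell\in\rec(K)$, and your extra projection onto $\Vect(z_1,\dots,z_p,x)^\bot$ and the joint cleanup can be dropped.
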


\begin{proof}
	We have
	\centre{$(\id-\pi)\pa{u_{\phi(n)}} = \Sum{k=1}{p}\norm{\pi(u_{\phi(n)})}{}\alpha_{k,n}z_k + \norm{\pi(u_{\phi(n)})}{}z_{p+1} + \petito{n\to\pinf}{\norm{\pi(u_{\phi(n)})}{}}$}
	Also, provided $\tilde{u_{\phi(n)}}\tendsto n\pinf x$, we have
	\centre{$\pi\pa{u_{\phi(n)}} = \norm{\pi(u_{\phi(n)})}{}x + \petito{n\to\pinf}{\norm{\pi(u_{\phi(n)})}{}}$}
	Therefore
	\centre{$u_{\phi(n)} = \Sum{k=1}{p}\norm{\pi(u_{\phi(n)})}{}\alpha_{k,n}z_k + \norm{\pi(u_{\phi(n)})}{}\pa{x+z_{p+1}} + \petito{n\to\pinf}{\norm{\pi(u_{\phi(n)})}{}}$}
	The result is obtained by applying Proposition \ref{prop:accExpRecK} to this 
	accumulation expansion of $\suiten$.
	Note that in this case we in fact have a truncated accumulation expansion so the case $p+1$ is not the last element of an actual accumulation expansion. That is why we get $\rec (K)$ instead of $K$ even for $p+1$.
\end{proof}

\begin{restatable}{corollary}{falsetruecorr}\label{cor:accExpRecKProj2}
	Let $E$ an Euclidean space.
	Let $K\subseteq E^2$ be MW-convex. 
	Let $\pi:E\to E$ be a linear projection.
	Let $u=\suiten$ be an unbounded sequence in $E$ such that
	\centre{$\forall n\in\Nbb \qquad \pa{u_n,u_{n+1}}\in K$}
	and $x\in\Dcal_u$. 
	Let 
	\centre{$\f{u_{\phi(n)+1}}{\norm{u_{\phi(n)}}{}} = \Sum{k=1}{p}\alpha_{k,n}z_k + z_{p+1} + \petito{n\to\pinf}1$}
	be an accumulation expansion of $\suite{\f{u_{n+1}}{\norm{u_n}{}}}n\Nbb$such that
	\centre{$\tilde{u_{\phi(n)}}\tendsto n\pinf x$} 
	Then, there are some positive real numbers $\pa{\beta_{k,\ell}}_{1\leq \ell < k\leq p+1}$ such that
	\centre{$
		\forall k\in\intn1p\qquad \pa{0,z_k +\Sum{\ell=1}{k-1} \beta_{k,\ell}z_\ell}\in \rec (K)
		\qandq \pa{x,z_{p+1} +\Sum{\ell=1}{p} \beta_{p+1,\ell}z_\ell}\in \rec (K)$}
	and such that for sufficiently large $n$,
	\centre{$ \inner{\pi\pa{z_{p+1} +\Sum{\ell=1}p \beta_{p+1,\ell}z_\ell},\pi\pa{z_{p+1} +\Sum{k=1}p\alpha_{k,n}z_k}}\geq0$}
	Moreover, there is some $i\in\intn1{p+1}$ such that $\pi(z_i)\notin\Kernel(\pi)$, this inequality can be taken to be strict. 
\end{restatable}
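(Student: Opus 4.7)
The plan is to apply Proposition~\ref{prop:accExpRecK} to the sequence $((u_n,u_{n+1}))_{n\in\Nbb}$ inside $K\subseteq E^2$, in the same spirit as the proof of Corollary~\ref{cor:accExpRecKProj}. The given data will be repackaged into a truncated accumulation expansion of this product sequence, from which the $\rec(K)$ statements fall out immediately; the genuinely new ingredient is the inner-product inequality, which we handle by exploiting the freedom available when correcting the $\beta$'s to be positive.

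After possibly refining $\phi$ we may assume $\norm{u_{\phi(n)}}{}\tendsto n\pinf \pinf$, so that $\tilde{u_{\phi(n)}}\tendsto n\pinf x$ becomes $u_{\phi(n)} = \norm{u_{\phi(n)}}{}\,x + \petito{n\to\pinf}{\norm{u_{\phi(n)}}{}}$. Multiplying the given expansion of $u_{n+1}/\norm{u_n}{}$ by $\norm{u_{\phi(n)}}{}$ and pairing it with the first coordinate $u_{\phi(n)}$ yields
\[
(u_{\phi(n)},u_{\phi(n)+1}) = \sum_{k=1}^p \norm{u_{\phi(n)}}{}\,\alpha_{k,n}\,(0,z_k) + \norm{u_{\phi(n)}}{}\,(x,z_{p+1}) + \petito{n\to\pinf}{\norm{u_{\phi(n)}}{}}.
\]
The vectors $(0,z_1),\dots,(0,z_p)$ are orthonormal and each orthogonal to $(x,z_{p+1})$ by \textbf{(AE2)}--\textbf{(AE3)}, while \textbf{(AE5)} and \textbf{(AE7)} ensure that the coefficients satisfy $\norm{u_{\phi(n)}}{}\,\alpha_{k,n}\gg \norm{u_{\phi(n)}}{}\,\alpha_{k+1,n}\gg\norm{u_{\phi(n)}}{}$ with all of them diverging. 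This is a truncated accumulation expansion in $E^2$ of the type handled in Corollary~\ref{cor:accExpRecKProj}, so applying Proposition~\ref{prop:accExpRecK} to it (with the observation that the final scale $\norm{u_{\phi(n)}}{}$ itself tends to infinity, forcing its direction into $\rec(K)$ rather than merely $K$) produces positive reals $(\beta_{k,\ell})_{1\le\ell<k\le p+1}$ such that $(0,\,z_k+\sum_{\ell<k}\beta_{k,\ell}z_\ell)\in\rec(K)$ for every $k\in\intn 1p$ and $(x,\,z_{p+1}+\sum_{\ell=1}^p\beta_{p+1,\ell}z_\ell)\in\rec(K)$.

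It remains to arrange the inner-product inequality. Set $w = z_{p+1} + \sum_{\ell=1}^p \beta_{p+1,\ell} z_\ell$ and $v_n = z_{p+1} + \sum_{k=1}^p \alpha_{k,n} z_k$. Since $\rec(K)$ is a convex cone, for each $k\in\intn 1p$ and each $t\ge 0$ we may add $t\,(0,z_k+\sum_{\ell<k}\beta_{k,\ell}z_\ell)$ to the element $(x,w)\in\rec(K)$ and stay in $\rec(K)$; this leaves the first coordinate $x$ untouched, increases $\beta_{p+1,k}$ by $t$, and adds $t\beta_{k,\ell}$ to each $\beta_{p+1,\ell}$ with $\ell<k$. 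Let $i^\star\in\intn 1{p+1}$ be the smallest index with $\pi(z_{i^\star})\neq 0$. If no such $i^\star$ exists then $\pi(w)=\pi(v_n)=0$ and the inner product vanishes; if $i^\star = p+1$ then $\pi(w) = \pi(v_n) = \pi(z_{p+1})$ and the inner product equals $\norm{\pi(z_{p+1})}{}^2>0$. Otherwise $i^\star\le p$: using \textbf{(AE5)}--\textbf{(AE7)} together with the vanishing of $\pi(z_\ell)$ for $\ell<i^\star$, the dominant term of $\pi(v_n)$ is $\alpha_{i^\star,n}\pi(z_{i^\star})$, so the inner product is asymptotic to $\alpha_{i^\star,n}\inner{\pi(w),\pi(z_{i^\star})}$, and
\[
\inner{\pi(w),\pi(z_{i^\star})} = \inner{\pi(z_{p+1}),\pi(z_{i^\star})} + \beta_{p+1,i^\star}\,\norm{\pi(z_{i^\star})}{}^2 + \sum_{\ell>i^\star}\beta_{p+1,\ell}\,\inner{\pi(z_\ell),\pi(z_{i^\star})}.
\]
Taking $t$ large enough in the update with $k=i^\star$ enlarges $\beta_{p+1,i^\star}$ enough to make this expression strictly positive, while the induced changes to $\beta_{p+1,\ell}$ for $\ell<i^\star$ do not affect $\pi(w)$ because $\pi(z_\ell)=0$ there by minimality of $i^\star$.

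The principal difficulty is the bookkeeping in this last step: one must check carefully that the cone updates keep $(x,w)$ inside $\rec(K)$ while still delivering the strict inequality, and that Proposition~\ref{prop:accExpRecK} applied to a truncated expansion in $E^2$ places all $p+1$ principal directions in $\rec(K)$, with no residual ``constant'' term. Once these points are in place, the remainder of the argument is a direct transposition of the proofs of Proposition~\ref{prop:accExpRecK} and Corollary~\ref{cor:accExpRecKProj}.
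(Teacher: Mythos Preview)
Your proposal is correct and follows essentially the same route as the paper's proof. The only cosmetic difference is that the paper packages the first step as an application of Corollary~\ref{cor:accExpRecKProj} (with the projection of $E^2$ onto the first component playing the role of $\pi$ there), whereas you unfold this and appeal directly to Proposition~\ref{prop:accExpRecK}; the inner-product argument via the minimal index $i^\star$ and the cone update on the $\beta_{p+1,\ell}$'s is identical in both.
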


\begin{proof}
	We first apply Corollary \ref{cor:accExpRecKProj} to the sequence $\suite{(u_n,u_{n+1})}n\Nbb$ and the projection on the first component to get
	some positive real numbers $\pa{\beta_{k,\ell}}_{1\leq \ell < k\leq p+1}$ such that
	\centre{$
		\forall k\in\intn1p\quad \pa{0,z_k +\Sum{\ell=1}{k-1} \beta_{k,\ell}z_\ell}\in \rec (K)
		\qandq \pa{x,z_{p+1} +\Sum{\ell=1}{p} \beta_{p+1,\ell}z_\ell}\in \rec (K)$}
	
	Let $k_0\in\intn1{p+1}$ minimum such that $z_k\notin\Kernel\pi$. 
	\begin{itemize}
		\item 	If there is no such $k_0$, then
		\centre{$ \inner{\pi\pa{z_{p+1} +\Sum{\ell=1}p \beta_{p+1,\ell}z_\ell},\pi\pa{ z_{p+1}+\Sum{k=1}p\alpha_{k,n}z_k }} = 0$}
		and the proof is complete.
		
		\item If $k_0=p+1$, then
		\centre{$ \inner{\pi\pa{z_{p+1} +\Sum{\ell=1}p \beta_{p+1,\ell}z_\ell},\pi\pa{z_{p+1}+\Sum{k=1}p\alpha_{k,n}z_k}} = \inner{\pi(z_{p+1}),\pi(z_{p+1})}> 0$}
		\item Otherwise, $k_0\in\intn1p$ and $\norm{\pi(z_{k_0})}{}\neq 0$. Let
		\centre{$S_n(\lambda) = 
			\inner{\pi\pa{z_{p+1} +\Sum{\ell=1}p \beta_{p+1,\ell}z_\ell} + \lambda\pi\pa{z_{k_0} +\Sum{\ell=1}{k_0-1} \beta_{k_0,\ell}z_\ell},\pi\pa{\Sum{k=1}p\alpha_{k,n}z_k + z_{p+1}}}$}
		
		We have
		
		\begin{calculs}
			& S_n(\lambda) &=& \inner{\pi(z_{p+1}) +\Sum{\ell=k_0}p \beta_{p+1,\ell}\pi(z_\ell) +
				\lambda\pi\pa{z_{k_0}},\Sum{k=k_0}p\alpha_{k,n}\pi(z_k) + \pi(z_{p+1})}\\
			&&=& \inner{\pi(z_{p+1}) +\Sum{\ell=k_0}p \beta_{p+1,\ell}\pi(z_\ell) + \lambda\pi\pa{z_{k_0}},\pi(z_{p+1})} \\&&&
			+ \Sum{k=k_0}p\alpha_{k,n}\inner{\pi(z_{p+1}) +\Sum{\ell=k_0}p \beta_{p+1,\ell}\pi(z_\ell) + \lambda\pi\pa{z_{k_0}},\pi(z_k)}\\
			
			&&=&\alpha_{k_0,n}\inner{\pi(z_{p+1}) +\Sum{\ell=k_0}p \beta_{p+1,\ell}\pi(z_\ell) + \lambda\pi\pa{z_{k_0}},\pi(z_{k_0})} 
			+\petito{n\to\pinf}{\alpha_{k_0,n}}\\
			
			&&=& \alpha_{k_0,n}\pa{\lambda\norm{\pi(z_{k_0})}{}^2 + \inner{\pi(z_{p+1}) +\Sum{\ell=k_0}p \beta_{p+1,\ell}\pi(z_\ell),\pi(z_{k_0})}}
			+\petito{n\to\pinf}{\alpha_{k_0,n}}\\
		\end{calculs}
		
		Therefore, taking any $\lambda>0$ such that
		\centre{$\lambda > - \f{\inner{\pi(z_{p+1}) +\Sum{\ell=k_0}p \beta_{p+1,\ell}\pi(z_\ell),\pi(z_{k_0})}}{\norm{\pi(z_{k_0})}{}^2}$}
		\lc{we get}{$S_n(\lambda)\tendsto n\pinf\pinf$}
		Thus, for sufficiently large $n$,
		\centre{$\inner{\pi\pa{z_{p+1} +\Sum{\ell=1}p \beta_{p+1,\ell}z_\ell} + \lambda\pi\pa{z_{k_0} +\Sum{\ell=1}{k_0-1} \beta_{k_0,\ell}z_\ell},\pi\pa{\Sum{k=1}p\alpha_{k,n}z_k + z_{p+1}}} >0$}
		Also,
		\begin{calculs}
			& \pa{x,z_{p+1} +\Sum{\ell=1}{p} \beta_{p+1,\ell}z_\ell + \lambda z_{k_0} +\lambda\Sum{\ell=1}{k_0-1} \beta_{k_0,\ell}z_\ell} &=&
			\underbrace{\pa{x,z_{p+1} +\Sum{\ell=1}{p} \beta_{p+1,\ell}z_\ell}}_{\in \rec (K)}\\
			&&&\quad+ \underbrace\lambda_{\geq 0}\underbrace{\pa{0,z_{k_0} + \Sum{\ell=1}{k_0-1} \beta_{k_0,\ell}z_\ell}}_{\in\rec (K)}\\
			
			& \pa{x,z_{p+1} +\Sum{\ell=1}{p} \beta_{p+1,\ell}z_\ell + \lambda z_{k_0} +\lambda\Sum{\ell=1}{k_0-1} \beta_{k_0,\ell}z_\ell} &\in& \rec (K)
			
		\end{calculs}
		\lc{Thus, considering }{$\beta'_{p+1,\ell} = \begin{accolade}
				\beta_{p+1,\ell} & \ell > k_0\\
				\beta_{p+1,\ell} + \lambda & \ell=k_0\\
				\beta_{p+1,\ell} + \lambda\beta_{k_0,\ell} & \ell<k_0
			\end{accolade}$}
		instead of the $\beta_{k+1,\ell}$s, we get the desired result.
	\end{itemize}
\end{proof}

\section{Deciding the Constraint Loop Problem}

The goal of this section is to establish Theorem~\ref{thm:dim12}. This
will be done by showing equivalence between the existence of a witness
of the form given by Definition~\ref{def:wit} and the existence of an
infinite run of a constraint loop.  The easy direction in this
argument---constructing an infinite execution from a witness---is
the purpose of Subsection~\ref{sec:suff}, Proposition \ref{prop:sufficent}.  
Actually, there is an even easier case,
namely certifying the existence of bounded infinite run, is dealt with
in Section~\ref{sec:bounded}. It states that an infinite run exists if an only if there is a fixed point.
This proof holds in any dimension and
relies on a simpler certificate. We will also reuse this result in the
specific cases of dimension 1 and 2.

The main objective in this section is to construct a witness from an
infinite execution. We provide the proof of sufficient condition in 
Subsection~\ref{sec:suff}. This will enlighten why the witness is defined
the way that it is. Subsection~\ref{sec:1D} deals with the simple
1-dimensional case, and Subsection~\ref{sec:2D} handles the
dimension-2 case, which is more challenging. Because of the difficulty of this 
proof we only provide high level explanation here. For a complete proof,
we refer to the full-version of this article or to the appendices.

\subsection{Deciding the Existence of a Bounded Sequence}
\label{sec:bounded}

\begin{proposition} \label{prop:fixedPoint}
	Let $E$ be a vector space of dimension $d\in\Nbb$. Let $K\subseteq E^2$ be closed convex. Denoting $\Delta_E=\enstq{(x,x)}{x\in E}\subseteq E^2$, we have that
	$K\cap \Delta_E\neq\emptyset$ if and only if there is a bounded sequence
	$u=\suiten$ of $E$
	such that for all $n\in\Nbb$, $(u_n, u_{n+1})\in K$.
\end{proposition}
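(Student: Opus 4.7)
\medskip

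\textbf{Proof plan for Proposition~\ref{prop:fixedPoint}.}

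The easy direction is immediate: given any $(x,x)\in K\cap\Delta_E$, the constant sequence $u_n = x$ is bounded and satisfies $(u_n,u_{n+1}) = (x,x)\in K$ for all $n$.

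For the converse, the plan is to use a Ces\`aro averaging argument that exploits both the convexity and closedness of $K$. Given a bounded sequence $\suiten$ with $(u_n,u_{n+1})\in K$ for every $n$, I would set
\[
a_N = \frac{1}{N}\sum_{n=0}^{N-1} u_n, \qquad b_N = \frac{1}{N}\sum_{n=0}^{N-1} u_{n+1} = \frac{1}{N}\sum_{n=1}^{N} u_n,
\]
and observe that, by convexity of $K$, the pair $(a_N,b_N) = \frac{1}{N}\sum_{n=0}^{N-1}(u_n,u_{n+1})$ lies in $K$ for every $N\geq 1$.

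The key observation is then that the telescoping cancellation gives
\[
b_N - a_N = \frac{u_N - u_0}{N},
\]
which tends to $0$ as $N\to\pinf$ because $\suiten$ is bounded. Since $\suiten$ is bounded the averages $(a_N)$ and $(b_N)$ are bounded as well, so by Bolzano--Weierstrass in the finite-dimensional space $E$ I can extract a subsequence $N_k\to\pinf$ along which $a_{N_k}\to a$ for some $a\in E$. Because $b_N - a_N\to 0$, the same subsequence gives $b_{N_k}\to a$. Hence $(a_{N_k},b_{N_k})\to(a,a)$ in $E^2$, and as each $(a_{N_k},b_{N_k})\in K$ and $K$ is closed, we conclude $(a,a)\in K\cap\Delta_E$, establishing the claim.

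There is no real obstacle here: the argument is a direct application of Ces\`aro averaging plus compactness, and the only subtle point is to make sure one uses that $K$ is \emph{closed} (needed to pass to the limit) and \emph{convex} (needed so that averages of points in $K$ remain in $K$), both of which are hypotheses of the proposition.
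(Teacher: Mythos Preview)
Your proof is correct and follows essentially the same Ces\`aro averaging argument as the paper: form the convex combinations $\frac{1}{N}\sum_{n=0}^{N-1}(u_n,u_{n+1})\in K$, observe that the difference between the two components telescopes to $(u_N-u_0)/N\to 0$, and pass to a convergent subsequence using boundedness and closedness. The paper phrases this by comparing the two sequences $y_n=\frac{1}{n}\sum_{p=0}^{n-1}(u_p,u_{p+1})\in K$ and $x_n=\frac{1}{n}\sum_{p=0}^{n}(u_p,u_p)\in\Delta_E$ and showing they share accumulation points, but the content is identical.
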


\begin{proof}\hspace{1em}
	\begin{itemize}
		\item[\CN] Let $(x,x) \in K\cap \Delta_E$. The sequence 
		constantly equal to $x$ satisfy the proposition.
		
		\item[\CS] Assume now that there exists a bounded sequence
		$\suite{u_n}n\Nbb$ 
		such that for all $n\in\Nbb$, $\pa{u_n,u_{n+1}}\in K$. Let $n\in\Nbb^*$ and define
		$x_n=\f1n\Sum{p=0}n\pa{u_p,u_p}$ and $y_n=\f1n\Sum{p=0}{n-1}\pa{u_p,u_{p+1}}$. We have
		$$
		\norm{x_n-y_n}{} = \f1n\norm{\pa{u_n,u_0}}{}
		$$
		Since the sequence $\suite {u_n}n\Nbb$ is bounded, there is a positive real number $M$ such that
		$$
		\forall n\in\Nbb^*\qquad\norm{x_n-y_n}{} \leq \f Mn
		$$
		In particular, both sequences $\suite{x_n}n{\Nbb^*}$ and $\suite{y_n}n{\Nbb^*}$ must have the same accumulation points.
		As these sequences are bounded (and since they are in a vector space of finite dimension), such a point exists. Let us denote it $x$.
		Notice that since $K$ is closed and convex, for all positive integer $n$, $y_n\in K$ and thus $x\in K$. Moreover,
		by definition, for all positive integer $n$, $x_n\in\Delta_E$. This set is again closed, thus $x\in\Delta_E$. This proves that
		$$
		x\in K\cap\Delta_E\neq\emptyset
		$$
	\end{itemize}
\end{proof}

\subsection{A Sufficient Condition for the Existence of a Sequence}\label{sec:suff}
\begin{proposition}\label{prop:sufficent}
	Let $E$ be an Euclidean space of dimension $d$.
	Let $K\subseteq E^2$ be MW-convex. If there exists a witness $\Wcal(K)$,
%	there are a linear application $M:E\to E$, $\lambda\in\intff01$ and a closed convex cone $C\subseteq E$ and $x,y,z\in E$ such that
%	\begin{romanenumerate}
%		\item \label{it:MCC}$MC\subseteq C$
%		\item \label{it:xMx}$\forall x\in C\qquad (x,Mx)\in \rec K$
%		\item \label{it:xyz}$(x,y)\in K\wedge (y,z)\in K$
%		\item \label{it:zyxC}$z-\lambda y-(1-\lambda) x\in C$
%	\end{romanenumerate}
	then, there is a sequence $\suiten\in E^\Nbb$ such that
	\centre{$\forall n\in\Nbb\qquad (u_n,u_{n+1})\in K$}
\end{proposition}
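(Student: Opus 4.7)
The plan is to build the sequence explicitly by iterating the linear map $M$ on the displacement $w-v$, taking advantage of the fact that $\rec(K)$ closes $K$ under translation.

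First I would define the displacement sequence $(x_n)_{n\in\Nbb}$ by $x_0 = w - v$ and $x_{n+1} = Mx_n$. By \ref{it:vwzC} we have $x_0 \in C$, and then by \ref{it:MCC} (which states $MC\subseteq C$) a trivial induction gives $x_n\in C$ for every $n\in\Nbb$. I would then set $u_0 = v$ and $u_{n+1} = u_n + x_n$ for $n\geq 0$, so that in particular $u_1 = v + (w-v) = w$, and $u_{n+1} - u_n = x_n$ for all $n$.

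The claim $(u_n, u_{n+1})\in K$ is then proved by induction on $n$. For $n=0$, this is exactly \ref{it:vwz}. For the inductive step, assume $(u_n, u_{n+1})\in K$. Since $x_n\in C$, property \ref{it:xMx} yields $(x_n, Mx_n)\in \rec(K)$. Because $\rec(K) + K \subseteq K$ (this is just the definition of the recession cone together with the fact that $K$ is convex), we can add the vector $(x_n, Mx_n)$ to $(u_n, u_{n+1})$ and stay in $K$. Computing both coordinates,
\[
(u_n + x_n,\, u_{n+1} + Mx_n) \;=\; (u_{n+1},\, u_{n+1} + x_{n+1}) \;=\; (u_{n+1}, u_{n+2}),
\]
which closes the induction.

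There is no real obstacle here; the proposition is the ``easy direction'' of Theorem~\ref{thm:dim12}, and the witness definition is precisely engineered so that the $M$-orbit of $w-v$ serves as the successive increments of an infinite trajectory in $K$. The only point deserving a word of care is the implicit use of the fact that $K$ is stable under translation by $\rec(K)$, which follows directly from the definition of $\rec$ and does not require $K$ to be closed or MW-convex. Consequently the proposition actually holds for any convex set $K$, not just MW-convex ones.
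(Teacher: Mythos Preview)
Your proof is correct and essentially identical to the paper's: both define the sequence by $u_0=v$, $u_1=w$, $u_{n+1}-u_n=M^n(w-v)$, show by induction that the increments stay in $C$ via \ref{it:vwzC} and \ref{it:MCC}, and then conclude $(u_n,u_{n+1})\in K$ by induction using \ref{it:vwz}, \ref{it:xMx}, and $K+\rec(K)\subseteq K$. Your closing remark that MW-convexity is not actually used in this direction is a valid observation.
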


\begin{proof}
	Assume we have a witness $\Wcal(K)$. We then take $M,v,w,C$ as given by the witness
	and define the following sequence:
	\centre{$u_0=v\qqandqq u_1=w$}
	\centre{$\forall n\in\Nbb\qquad u_{n+2}-u_{n+1} = M\pa{u_{n+1}-u_n}$}
	
	Remark first that for all $n\in\Nbb$, $u_{n+1}-u_n\in C$. This can be proven by induction, noting that 
	the initialisation is given by Point~\ref{it:vwzC} and the induction step comes from
    Point~\ref{it:MCC}. 
	
	We now prove by induction that $\forall n\in\Nbb\qquad (u_n,u_{n+1})\in K$.
	\begin{itemize}
		\item By Point \ref{it:vwz}, $(u_0,u_1)\in K$.
		\item Assume that for some $n\in\Nbb$, $(u_n,u_{n+1})\in K$.
		As $u_{n+1}-u_n\in C$ as shown before, by Point \ref{it:xMx}
		\centre{
			$\pa{u_{n+1}-u_n,u_{n+2}-u_{n+1}}\in \rec (K)$	
		}
		\lc{Thus}{$\pa{u_{n+1},u_{n+2}} = (u_n,u_{n+1}) + \pa{u_{n+1}-u_n,u_{n+2}-u_{n+1}}\in K+\rec (K) = K$}
	\end{itemize}
	By the induction principle we conclude that for all $n\in\Nbb$, $(u_n,u_{n+1})\in K$.
\end{proof}

\subsection{Necessary Condition for the Existence of a 1-Dimensional Sequence}
\label{sec:1D}

We establish the main result in the one dimensional case.  Note that
we prove a slightly stronger certificate here, which is not necessary
in itself, but which we need for the 2 dimensional case.

\begin{proposition}\label{prop:dim1}
	Let $E$ be an Euclidean space of dimension $1$.
	Let $K\subseteq E^2$ be MW-convex. Let a sequence $\suiten\in
        E^\Nbb$ such that $(u_n,u_{n+1})\in K$ for all $n\in \Nbb$.
	Let $\gamma\in \Cone\pa{\Dcal_u}$ such that $(0,\gamma)\in\rec (K)$ (note that at least $\gamma=0$ works).
	Then, there are $a\in\Rbb^*$, a closed convex cone $C\subseteq E$ and $x,y\in E$ such that
	\begin{romanenumerate}
		\item\label{it:inducCDim1} $aC\subseteq C$
		\item\label{it:inducuDim1} $\forall x\in C\qquad (x,ax)\in \rec (K)$
		\item\label{it:inituDim1} $(x,y)\in K$
		\item\label{it:initCDim1} $y- x\in C$
		\item $\gamma \in C$
	\end{romanenumerate}
\end{proposition}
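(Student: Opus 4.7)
The argument separates according to whether the sequence $(u_n)_n$ is bounded.

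If $(u_n)_n$ is bounded, then no accumulation expansion of $u$ can have $p\geq 1$, since condition~\ref{it:alphapinf} would require $\alpha_{1,n}\to\pinf$ while $\alpha_{1,n}\underset{n\to\pinf}{\sim}\norm{u_{\psi(n)}}{}$ remains bounded. Hence $\Dcal_u=\emptyset$, $\Cone(\Dcal_u)=\{0\}$, and $\gamma$ is forced to be $0$. Proposition~\ref{prop:fixedPoint} then produces $x_0\in E$ with $(x_0,x_0)\in K$; I take $C=\{0\}$, $x=y=x_0$, and any $a\in\Rbb^*$: all five conditions are immediate.

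Now assume $u$ is unbounded. After identifying $E$ with $\Rbb$ via a unit vector $e$, and noting the symmetry of the statement under the simultaneous substitutions $u\mapsto -u$, $K\mapsto\{(-p,-q):(p,q)\in K\}$, $\gamma\mapsto -\gamma$, I may suppose $\gamma\geq 0$ and that a subsequence $u_{\phi(n)}$ tends to $+\infty$. A further extraction, using compactness of $[-\infty,+\infty]$, ensures that $u_{\phi(n)+1}/u_{\phi(n)}$ converges to some $a\in[-\infty,+\infty]$. Because $(u_n,u_{n+1})\in K$ and $K$ is MW-convex, this asymptotic behavior transfers to $\rec(K)$ via Proposition~\ref{prop:accExpRecK} applied to the $K$-valued sequence $(u_n,u_{n+1})$: if $a$ is finite then $(1,a)\in\rec(K)$, while if $a=\pm\infty$ then $(0,\pm 1)\in\rec(K)$.

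The witness $(a,C,x,y)$ is then constructed by cases on $a$. When $a>0$, set $C=\Rbbplus$ and keep the slope $a$: then $aC=C$ gives (i), $(x,ax)=x(1,a)\in\rec(K)$ for $x\geq 0$ yields (ii), and $\gamma\geq 0$ implies $\gamma\in C$. When $a<0$, the relation $u_{\phi(n)+1}\sim au_{\phi(n)}\to -\infty$ also places $-e$ in $\Dcal_u$; iterating the extraction once more along $(u_{\phi(n)+1})_n$ produces $(-1,-a)\in\rec(K)$, so taking $C=\Rbb$ fulfills (i), (ii), (v). The degenerate regimes $a=0$ and $a=\pm\infty$ are handled by combining the available recession vector with the given $(0,\gamma)\in\rec(K)$: for instance $a=0$ gives $(1,0)\in\rec(K)$, hence $(1,\gamma)\in\rec(K)$, so one may use slope $\gamma$ (when $\gamma>0$) with $C=\Rbbplus$. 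The initial pair $(x,y)\in K$ is obtained by taking $(u_{\phi(N)},u_{\phi(N)+1})$ for sufficiently large $N$ and, if needed, adding a positive multiple of $(0,\gamma)$ or of the slope direction to the pair so that $y-x\in C$.

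\textbf{The principal obstacle} is the degenerate regime $a\in\{0,\pm\infty\}$ combined with $\gamma=0$: there the first-order asymptotics of $u$ provide no nonzero iteration slope. In these cases one must descend to the next term of the accumulation expansion via Corollary~\ref{cor:accExpRecKProj2}, whose strict-inequality refinement (obtained by choosing $\pi=\mathrm{id}$ so that some $\pi(z_i)\notin\ker\pi$) produces a further direction in $\rec(K)$ with nonzero first coordinate, supplying the required slope $a\in\Rbb^*$.
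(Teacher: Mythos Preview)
Your bounded case and the reduction to $\gamma\geq 0$, $1\in\Dcal_u$ are fine, and the case $a>0$ (finite) works once you pick $(x,y)=(u_k,u_{k+1})$ for some $k$ with $u_{k+1}>u_k$ (which exists because $u$ is unbounded above) rather than insisting on $k=\phi(N)$.  But the case $a<0$ contains a genuine gap.  You assert that iterating the extraction along the subsequence $u_{\phi(n)+1}\to-\infty$ yields $(-1,-a)\in\rec(K)$; in fact that second extraction only gives $(-1,b)\in\rec(K)$ for some $b$ with no reason to equal $-a$.  Concretely, take $K=\rec(K)=\Rbb_+(1,-2)+\Rbb_+(-1,3)$: the sequence $1,-2,6,-12,36,\dots$ lives in $K$, the ratio along the $+\infty$ subsequence is $-2$, yet $(-1,2)\notin\rec(K)$, so your choice $C=\Rbb$ with slope $a=-2$ fails condition~(ii).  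What the paper does here---and what your sketch is missing---is to obtain $(1,\alpha)$ and $(-1,\beta)$ in $\rec(K)$ independently, set $\delta:=\gamma$ if $\gamma\neq 0$ and $\delta:=\alpha+\beta$ otherwise, observe $(0,\delta)\in\rec(K)$, and then add a large multiple of $(0,\delta)$ to whichever of $(1,\alpha)$, $(-1,\beta)$ matches the sign of $\delta$.  This forces the slope to have the desired sign and reduces to a half-line cone $C$; the alignment of the two slopes into a single $a$ is never needed.

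Your resolution of the ``principal obstacle'' is also not correct.  With $\pi=\id$ in dimension~$1$ and $u_{\phi(n)+1}/u_{\phi(n)}\to 0$, the accumulation expansion of that ratio has $p=0$ and $z_1=0$; every $z_i$ vanishes, so the strict-inequality clause of Corollary~\ref{cor:accExpRecKProj2} does not apply and you recover nothing beyond $(1,0)\in\rec(K)$.  The actual fix (used in the paper) is the fixed-point trick: from $(1,0)\in\rec(K)$ and any $k$ with $u_{k+1}>u_k$ one has $(u_{k+1},u_{k+1})=(u_k,u_{k+1})+(u_{k+1}-u_k)(1,0)\in K$, and then $C=\{0\}$ with arbitrary $a\in\Rbb^*$ finishes.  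Likewise, for $a=\pm\infty$ your sketch stops at $(0,\pm 1)\in\rec(K)$; to obtain a vector of the form $(1,\beta)\in\rec(K)$ you must go one level deeper in the accumulation expansion of the pair $(u_{\phi(n)},u_{\phi(n)+1})$ viewed in $E^2$, which is what Proposition~\ref{prop:accExpRecK} (or Corollary~\ref{cor:accExpRecKProj2}) actually delivers.
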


\begin{proof}Without loss of generality, as $E$ is an Euclidean space of dimension 1, we assume $E=\Rbb$.
	If $\suiten$ is bounded, then, by Proposition \ref{prop:fixedPoint} there exists $z\in\Rbb$ such that $(z,z)\in K$. Then $\gamma=0$ and we can select $y=x=z$, $C=\{0\}$ and $a\in\Rbb^*$ arbitrary (e.g. $1$) to produce the requested witness.

We now assume that $\suiten$ is unbounded. By Proposition \ref{prop:accExp}, it admits accumulation expansions and $\Dcal_u\neq\emptyset$. 
	The only two possible accumulation directions are $1$ and $-1$. 
	We consider three cases:
	\begin{itemize}
		\item If $\Dcal_u=\{-1,1\}$. Take $\phi_1$ and $\phi_{-1}$ such that
		$\tilde{u_{\phi_1(n)}}\tendsto n\pinf1$ and $\tilde{u_{\phi_{-1}(n)}}\tendsto n\pinf-1$.
		Up to extracting a subsequence, we have the accumulation expansions
		\centre{$\f{u_{\phi_1(n)+1}}{\norm{u_{\phi_1(n)}}{}} = \Sum{k=1}{p}\alpha_{k,n}z_k + z_{p+1} + \petito{n\to\pinf}1$}
		\lc{and}{$\f{u_{\phi_{-1}(n)+1}}{\norm{u_{\phi_{-1}(n)}}{}} = \Sum{k=1}{p'}\alpha'_{k,n}z'_k + z'_{p'+1} + \petito{n\to\pinf}1$}	
		Then, by Corollary \ref{cor:accExpRecKProj2}, there are $\alpha,\beta\in\Rbb$ such that
		\centre{$(1,\alpha)\in\rec (K)\qqandqq (-1,\beta)\in\rec (K)$}
		\lc{Let}{$\delta=\begin{accolade}
				\gamma & \text{if }\gamma\neq 0\\
				\alpha+\beta & \text{if }\gamma=0
			\end{accolade}$}
		Therefore, either $(0,\delta)=(0,\gamma)\in\rec(K)$, or $(0,\delta)=(1,\alpha)+(-1,\beta)$ and $(0,\delta)\in\rec(K)$ by conic combinations. 
		\begin{itemize}
			\item If $\delta=0$ then $\gamma=0$ and $\alpha=-\beta$.
			\begin{itemize}
				\item \lc{If $\alpha = 0$, then $\beta=0$,}{$(u_{1},u_{1}) = \underbrace{(u_0,u_{1})}_{\in K} + \underbrace{(u_{1}-u_0,0)}_{\in\rec (K)} \in K$}
				We then choose for instance $a\in\Rbb^*$, $C=\{0\}$ and $x=y=u_1$. 
				
				\item If $\alpha\neq 0 $, then we just have to take $a=\alpha$, $C=\Rbb$, $x=u_0$, $y=u_1$.
			\end{itemize}
			Note that in both these cases we trivially have $\gamma\in C$.
			
			\item If $\delta>0$ then, for large enough $n$, $n\delta+\alpha>0$. Moreover, as $\rec (K)$ is a cone, \centre{$(1,n\delta+\alpha) = \underbrace{n(0,\delta)}_{\in \rec(K)} 
											+ \underbrace{(1,\alpha)}_{\in \rec(K)}\in\rec (K)$}
			We then take $a=n\delta+\alpha>0$, $C=\Rbb_+$, $x=u_k$, $y=u_{k+1}$, for some $k$ such that $u_{k+1}-u_k+>0$. This exists since $1\in\Dcal_u$ and hence $\suiten$ is not bounded from above. Note also that since $\delta>0$ then $\gamma\geq 0$. Thus $\gamma\in C$.
			
			\item If $\delta<0$ then, for large enough $n$, $n\delta+\beta<0$. Moreover, as $\rec (K)$ is a cone, \centre{$(-1,n\delta+\beta) = \underbrace{n(0,\delta)}_{\in \rec(K)}
											+ \underbrace{(-1,\beta)}_{\in \rec(K)}\in\rec (K)$}
			We then take $a=-n\delta-\beta>0$, $C=\Rbb_-$, $x=u_k$, $y=u_{k+1}$ for some $k$ such that $u_{k+1}-u_k<0$. This exists since $-1\in\Dcal_u$ and hence $\suiten$ is not bounded from below.  Note also that since $\delta<0$ then $\gamma\leq 0$. Thus $\gamma\in C$.
			
		\end{itemize}
		
		\item If $\Dcal_u=\{1\}$, then, similarly to the first case, using Corollary \ref{cor:accExpRecKProj2}, there is some $\alpha\in\Rbb_+$ such that $(1,\alpha)\in\rec (K)$. Note also that $\gamma\geq 0$ and that $(1,\alpha+\gamma)\in\rec (K)$.
		Let $k$ such that $u_{k+1}-u_k>0$. This exists since $1\in\Dcal_u$ and hence $\suiten$ is not bounded from above.
		\begin{itemize}
			\item If $\alpha+\gamma=0$, then, $\alpha=\gamma=0$ and
			\centre{$(u_{k+1},u_{k+1}) = \underbrace{(u_k,u_{k+1})}_{\in K} + \underbrace{(u_{k+1}-u_k,0)}_{\in\rec (K)} \in K$}
			
			We then choose for instance $a\in\Rbb^*$, $C=\{0\}$ and $x=y=u_{k+1}$. 
			
			\item If $\alpha+\gamma>0$, then we just have to take $a=\alpha+\gamma$, $C=\Rbb_+$, $x=u_k$ and $y=u_{k+1}$.
		\end{itemize}
		Note that in both cases, $\gamma\in\Rbb_+=C$.
		
		\item The case $\Dcal_u=\{-1\}$ can be made similarly to the previous point.
	\end{itemize}
\end{proof}

We are now ready to prove the special case of Theorem~\ref{thm:dim12}
in which $E$ has dimension 1 (see Section~\ref{sec:intro}).  Without
loss of generality we just consider $E=\Rbb$.  The necessary condition
is given by the application of Proposition~\ref{prop:dim1} with
$\gamma=0$. The sufficient condition is given by
Proposition~\ref{prop:sufficent}.

\subsection{Necessary Condition for the Existence of a 2-Dimensional Sequence}
\label{sec:2D}

We now move to 2-dimensional Euclidean spaces and prove that the
existence of a witness as given by Definition~\ref{def:wit} is implied
by the existence of an infinite sequence. This, combined with
Proposition~\ref{prop:sufficent} will imply Theorem \ref{thm:dim12}.

For the entire section, we thus fix $E$ to be an Euclidean space of
dimension $2$, $K\subseteq E^2$ to be MW-convex and thus satisfying
$K=K'+\rec (K)$ where $K'$ is a compact convex set. We assume that there
exists a sequence $\suiten\in E^\Nbb$ such that
for all $n\in\Nbb$, $(u_n,u_{n+1})\in K$.

We start by two technical lemmas to lighten the proof of the proposition.

\begin{lemma}\label{it:defS}
Assume that
$\Dcal_u$ is not empty and for all $x\in\Cone\Dcal_u$, if $(0,x)\in\rec (K)$, then $x=0$.
Denoting $\Ccal_u=\Cone\Dcal_u$, we have that
for all $x\in\Ccal_u$, there is $s(x)\in\Ccal_u$ such that $(x,s(x))\in\rec (K)$.
\end{lemma}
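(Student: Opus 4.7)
The plan is to apply Corollary~\ref{cor:accExpRecKProj2} to each $x\in\Dcal_u$ and then exploit the lemma's hypothesis on $\Ccal_u$ to force the resulting accumulation expansion to degenerate to the trivial case $p=0$. The case $x=0$ is handled trivially by setting $s(0):=0$, which lies in $\Ccal_u$ and satisfies $(0,0)\in\rec (K)$.

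Fix $x\in\Dcal_u$, pick an increasing $\phi:\Nbb\to\Nbb$ with $\tilde{u_{\phi(n)}}\to x$, and apply Proposition~\ref{prop:accExp} to $\suite{u_{n+1}/\norm{u_n}{}}n\Nbb$ to extract (after further refinement) an accumulation expansion
\[
\f{u_{\phi(n)+1}}{\norm{u_{\phi(n)}}{}} \;=\; \Sum{k=1}{p}\alpha_{k,n}z_k + z_{p+1} + \petito{n\to\pinf}1.
\]
I claim $p=0$. Indeed, suppose for contradiction $p\geq 1$. Properties (AE1), (AE4), (AE5) give $\norm{z_1}{}=1$ and $\alpha_{1,n}\to+\infty$, hence $\norm{u_{\phi(n)+1}}{}/\norm{u_{\phi(n)}}{}\to+\infty$ and $\tilde{u_{\phi(n)+1}}\to z_1$. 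Since $x\in\Dcal_u$ forces $\norm{u_{\phi(n)}}{}\to+\infty$, the subsequence $(u_{\phi(n)+1})_n$ is unbounded with dominant direction $z_1$, placing $z_1\in\Dcal_u\subseteq\Ccal_u$. But Corollary~\ref{cor:accExpRecKProj2} applied with $k=1$ yields $(0,z_1)\in\rec (K)$, and the standing hypothesis then forces $z_1=0$, contradicting $\norm{z_1}{}=1$.

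With $p=0$, Corollary~\ref{cor:accExpRecKProj2} reduces simply to $(x,z_1)\in\rec (K)$. If $z_1=0$, set $s(x):=0\in\Ccal_u$. Otherwise $\norm{u_{\phi(n)+1}}{}/\norm{u_{\phi(n)}}{}\to\norm{z_1}{}>0$ and $\tilde{u_{\phi(n)+1}}\to z_1/\norm{z_1}{}$, so $z_1/\norm{z_1}{}\in\Dcal_u$ and therefore $z_1\in\Ccal_u$; then set $s(x):=z_1$. Finally, for an arbitrary $x\in\Ccal_u=\Cone\Dcal_u$, apply Carathéodory's theorem in $\dim E=2$ to write $x=\lambda_1 x_1+\lambda_2 x_2$ with $x_i\in\Dcal_u$ and $\lambda_i\geq 0$, and set $s(x):=\lambda_1 s(x_1)+\lambda_2 s(x_2)$; this element lies in the convex cone $\Ccal_u$ and $(x,s(x))\in\rec (K)$ by conic combination in the convex cone $\rec (K)$.

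The main obstacle is the middle step forcing $p=0$: one has to recognise the dominant direction $z_1$ of the \emph{rescaled shifted} expansion as a genuine principal direction of $u$ itself so that the hypothesis on $\Ccal_u$ can be triggered. Once $p=0$ is secured, both the extraction of $s(x)$ for $x\in\Dcal_u$ and the extension to the whole cone $\Ccal_u$ are routine convex-conic manipulations.
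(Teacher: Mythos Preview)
Your argument is correct and follows essentially the same route as the paper. Both proofs reduce to $x\in\Dcal_u$, use the accumulation-expansion machinery on the pair sequence to obtain an element of $\rec(K)$, invoke the hypothesis on $\Ccal_u$ to rule out the degenerate case (your ``$p\geq 1$'' is exactly the paper's ``$k>1$'', i.e.\ $z_{i,1,1}=0$), and then extend to all of $\Ccal_u$ by conic combination. The only cosmetic difference is that the paper takes an accumulation expansion of $(u_{\phi_i(n)},u_{\phi_i(n)+1})$ in $E^2$ and applies Proposition~\ref{prop:accExpRecK} directly, whereas you expand the rescaled shift $u_{\phi(n)+1}/\norm{u_{\phi(n)}}{}$ in $E$ and invoke Corollary~\ref{cor:accExpRecKProj2}; the latter is derived from the former, so the two are equivalent packagings of the same computation.

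One small point of hygiene: when you write ``pick an increasing $\phi$ with $\tilde{u_{\phi(n)}}\to x$'', make sure you take $\phi$ from the accumulation expansion that witnesses $x\in\Dcal_u$, so that $\norm{u_{\phi(n)}}{}\to+\infty$ is guaranteed (you use this twice). This is implicit in ``after further refinement'' but worth stating.
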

\begin{proof}
	Let $x\in\Ccal_u$. By definition, we can consider $x_1,\dots,x_m\in\Dcal_u$ and $\lambda_1,\dots,\lambda_m\in\Rbb_+$ such that 
	$ x =\Sum{i=1}m \lambda_ix_i$.
	By definition of $\Dcal_u$, for $i\in\intn1m$ there is an increasing function $\phi_i:\Nbb\to\Nbb$ such that
	$\tilde{u_{\phi_i(n)}}\tendsto n\pinf x_i$.
	Using Proposition \ref{prop:accExp}, the sequence $\suite{\pa{u_{\phi_i(n)},u_{\phi_i(n)+1}}}n\Nbb$ admits an accumulation expansion
	\centre{$\pa{u_{\phi_i\circ\psi_i(n)},u_{\phi_i\circ\psi_i(n)+1}} = \Sum{k=1}{p_i}\alpha_{i,k,n}(z_{i,k,1},z_{i,k,2}) + (z_{i,p+1,1},z_{i,p+1,2}) + \petito{n\to\pinf}1 $}
	In particular, for $k\in\intn1{p_i}$ minimum such that $z_{i,k,1}\neq 0$, we have $z_{i,k,1}\in\Rbb_+^*x_i$. 
	Since the first component is not bounded, such a $k$ exists.
	Let $\mu_i >0$ such that $z_{i,k,1}=\mu_i x_i$.
	Now, applying Proposition \ref{prop:accExpRecK}, $(z_{i,1,1},z_{i,1,2})\in\rec (K)$ and $\norm{(z_{i,1,1},z_{i,1,2})}{}=1$. 
	Therefore, if $k>1$, then $z_{i,1,1}=0$ and $\norm{z_{i,1,2}}{}=1$. Hence $z_{i,1,2}\in\Dcal_u$. This contradicts the hypothesis that for all $x\in\Ccal_u$, if $(0,x)\in\rec (K)$, then $x=0$.
Thus, $k=1$.
Considering $s(x_i) = \f1{\mu_i} z_{i,1,2}$ satisfies the claim for $x_i$. Thus, defining
$s(x)=\Sum{i=1}m\lambda_i s(x_i)$
establishes the lemma.

\end{proof}

\begin{lemma}\label{it:defdelta}
Assume that $\Dcal_u$ is not empty, that for all $x\in\Cone\Dcal_u$, if $(0,x)\in\rec (K)$, then $x=0$ and for all $x\in E$, $(x,x)\not\in K$.
Denoting $\Ccal_u=\Cone\Dcal_u$, for all $x\in\Dcal_u$, there are $\delta(x)\in E$ and 
$\lambda\in\Rbb_+^*$ such that $(\delta(x),\lambda x+\delta(x))\in K\cup \rec (K)$.
\end{lemma}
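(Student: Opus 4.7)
Fix $x\in\Dcal_u$. The plan is to either obtain the required pair directly in $\rec(K)$ using the recession-cone machinery of Corollary~\ref{cor:accExpRecKProj2}, or, failing that, to find it inside $K$ via a Cesaro-type averaging over the infinite sequence.

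\textbf{Step 1: extracting $(x,y)\in\rec(K)$.} Pick an increasing $\phi\colon\Nbb\to\Nbb$ with $\tilde{u_{\phi(n)}}\to x$ and $\|u_{\phi(n)}\|\to\pinf$, and consider an accumulation expansion
\[
  \frac{u_{\phi(n)+1}}{\|u_{\phi(n)}\|}=\sum_{k=1}^p\alpha_{k,n}z_k+z_{p+1}+\petito{n\to\pinf}1.
\]
Corollary~\ref{cor:accExpRecKProj2}, applied with $\pi$ the projection onto $\Vect(x)$, yields positive $\beta_{k,\ell}$ with $(0,z_k+\sum_{\ell<k}\beta_{k,\ell}z_\ell)\in\rec(K)$ for every $k\in\intn1p$ and $(x,z_{p+1}+\sum_{\ell\leq p}\beta_{p+1,\ell}z_\ell)\in\rec(K)$. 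If $p\geq 1$, then $\alpha_{1,n}\to\pinf$ forces $\tilde{u_{\phi(n)+1}}\to z_1$ with $\|u_{\phi(n)+1}\|\to\pinf$, so $z_1\in\Dcal_u\subseteq\Ccal_u$; combined with $(0,z_1)\in\rec(K)$ the hypothesis gives $z_1=0$, contradicting $\|z_1\|=1$. Hence $p=0$ and $(x,y)\in\rec(K)$ with $y:=z_1\in\Ccal_u$ (allowing $y=0$).

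\textbf{Step 2: easy case via $\rec(K)$.} Since $\dim E=2$, write $y=\mu x+v_\perp$ with $v_\perp$ in the $1$-dimensional subspace $\Vect(x)^\bot$. If $v_\perp=0$ and $\mu>1$, the pair $(x,\mu x)\in\rec(K)$ is already of the required form with $\delta(x)=x$ and $\lambda=\mu-1>0$, concluding.

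\textbf{Step 3: remaining cases via $K$.} Otherwise I look for the pair inside $K$. Each increment $w_n=u_{n+1}-u_n$ lies in the convex set $\pi'(K)$ where $\pi'(a,b)=b-a$; by Corollary~\ref{cor:projMW}, $\pi'(K)$ is itself MW-convex, hence closed and convex. A Cesaro-type averaging of the $w_n$, after first translating along $(x,y)\in\rec(K)$ to normalise any super- or sub-linear growth of $u_n$ in direction $x$, converges to a positive multiple $\lambda x$. Closedness of $\pi'(K)$ then places $\lambda x$ in $\pi'(K)$, producing $(a,b)\in K$ with $b-a=\lambda x$ and $\lambda>0$; setting $\delta(x)=a$ completes the construction.

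\textbf{Main obstacle.} Step~3 is the technical core. When $v_\perp\neq 0$ or $\mu\leq 1$, the pair $(x,y)$ does not directly reveal the required form, and the averaging argument must be combined with iterated applications of Lemma~\ref{it:defS}, yielding $(s^k(y),s^{k+1}(y))\in\rec(K)$ for all $k$, in order to cancel parasitic perpendicular components. The 2-dimensional setting is crucial: the map $s\colon\Ccal_u\to\Ccal_u$ is a non-negative linear endomorphism of an at-most-$2$-dimensional cone, and its Perron--Frobenius structure guarantees that the right non-negative combinations of $(x,y)$ and $(s^k(y),s^{k+1}(y))$ exist to produce a recession element in direction $(\delta,\lambda x+\delta)$ or to steer the Cesaro limit into $\Rbb_+^*\, x$. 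The hypothesis $(z,z)\notin K$ is essential: it prevents stagnation of the sequence (which would otherwise lead to a point of $K\cap\Delta_E$ by Proposition~\ref{prop:fixedPoint}) and forces the Cesaro limit to be a genuine positive multiple of $x$ rather than zero.
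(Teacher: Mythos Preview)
Your Step~1 is fine (and essentially re-proves Lemma~\ref{it:defS}: you get $(x,s(x))\in\rec(K)$), and Step~2 handles one easy subcase. The problem is Step~3: it is not a proof but a programme, and the programme is both more complicated than needed and missing the one observation that makes the argument go through.

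The key point you do not exploit is that the Cesaro sum of the increments \emph{telescopes}: with $z_1=x$ in the accumulation expansion of $u_{\phi(n)}$,
\[
\frac{1}{\phi(n)}\sum_{k=0}^{\phi(n)-1}(u_{k+1}-u_k)=\frac{u_{\phi(n)}-u_0}{\phi(n)}
=\frac{\alpha_{1,n}}{\phi(n)}\,x+\petito{n\to\pinf}{\tfrac{\alpha_{1,n}}{\phi(n)}}+\grando{n\to\pinf}{\tfrac1{\phi(n)}}.
\]
This already shows, without any ``translation along $(x,y)$'' or Perron--Frobenius input, that the Cesaro averages of the pairs $(u_k,u_{k+1})\in K$ have second-minus-first component asymptotically collinear with $x$. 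The paper then takes an accumulation expansion of these averaged \emph{pairs} and splits on whether $\alpha_{1,n}/\phi(n)$ stays bounded (giving a point of $K$ whose component difference is $\lambda x$, with $\lambda>0$ forced by the no-fixed-point hypothesis) or blows up (giving the same shape inside $\rec(K)$ via Proposition~\ref{prop:accExpRecK}). No use of $s$, of $(x,y)$ from your Step~1, or of the $2$-dimensionality of $E$ is required.

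Concretely, your Step~3 has two gaps. First, ``translating along $(x,y)\in\rec(K)$'' changes the increment by a multiple of $y-x$, not of $x$, so it does not ``normalise growth in direction $x$'' in any obvious way; you never say what translation you actually perform nor why the resulting averages converge. Second, the appeal to iterated $s$ and Perron--Frobenius is never cashed out: you assert that ``the right non-negative combinations \ldots\ exist'' but give no construction. Since the telescoping identity above bypasses all of this, I would drop Steps~1--2 entirely and argue directly on the Cesaro averages of the pairs, as the paper does.
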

\begin{proof}
Let $x\in\Dcal_u$ and the accumulation expansion
\centre{$u_{\phi(n)} = \Sum{k=1}p\alpha_{k,n}z_k + z_{p+1} + \petito{n\to\pinf}1 $} 
with $p>0$ and $z_1 = x$.
By convexity, we have
\centre{$\forall n\in\Nbb \qquad \f1{\phi(n)}\Sum{k=0}{\phi(n)-1}(u_k,u_{k+1}) \in K$}
				
Up to refining $\phi$, we can assume that we also have the accumulation expansion
				\centre{$\f1{\phi(n)}\Sum{k=0}{\phi(n)-1}(u_k,u_{k+1}) = \Sum{k=1}q\beta_{k,n}(w_{k,1},w_{k,2}) + (w_{q+1,1},w_{q+1,2}) + \petito{n\to\pinf}1 $}
Therefore
\begin{calculs}
	& \Sum{k=1}q\beta_{k,n}(w_{k,2}-w_{k,1}) + w_{q+1,2}-w_{q+1,1} 
	&=& \f1{\phi(n)}\Sum{k=0}{\phi(n)-1}(u_{k+1}-u_k) + \petito{n\to\pinf}1\\
	&&=& \f{u_{\phi(n)}-u_0}{\phi(n)} + \petito{n\to\pinf}1\\
	&&=& \Sum{k=1}p\f{\alpha_{k,n}}{\phi(n)}z_k + \petito{n\to\pinf}1	
				\end{calculs}
				
If $\suite{\f{\alpha_{1,n}}{\phi(n)}}n\Nbb$ has an accumulation point, say $\lambda$, up to refining $\phi$, we assume that it converges to it. By definition of an accumulation expansion, we then have for all $k\in\intn 1q$, $w_{k,1} = w_{k,2}$
Therefore, $w_{q+1,2}-w_{q+1,1} = \lambda x$.

				By Proposition \ref{prop:accExpRecK}, there are some positive real numbers $\gamma_1,\dots,\gamma_q$ such that
				\centre{$\Sum{k=1}q\gamma_k(w_{k,1},w_{k,2}) + (w_{q+1,1},w_{q+1,2}) \in K$}
				The difference between the two coordinates of this vector is $\lambda x$.	Since $\lambda$ is the limit of a positive sequence, $\lambda\geq 0$. 
				Also, provided that there is no $a\in E$ such that $(a,a)\in K$ by hypothesis, we have $\lambda\neq 0$. Therefore, considering $\delta(x)= \Sum{k=1}q \gamma_kw_{k,1}+w_{q+1,1}$ we get $(\delta(x),\lambda x+\delta(x))\in K$.
				
				Now if $\suite{\f{\alpha_{1,n}}{\phi(n)}}n\Nbb$ has no accumulation point. Since it is positive, we have
				\centre{$\f{\alpha_{1,n}}{\phi(n)} \tendsto n\pinf\pinf$}
				Thus, there is $k\in\intn1q$ minimum such that $w_{k,1}\neq w_{k,2}$ and for this $k$, we have
				\centre{$\beta_{k,n}(w_{k,2}-w_{k,1}) \underset{n\to\pinf}{\sim} \f{\alpha_{1,n}}{\phi(n)}x$}
				Therefore, there is $\lambda>0$ such that $w_{k,2}-w_{k,1} = \lambda x$.
				By Proposition \ref{prop:accExpRecK}, there are some positive real numbers $\gamma_1,\dots,\gamma_{k-1}$ such that
				\centre{$\Sum{\ell=1}{k-1}\gamma_\ell(w_{\ell,1},w_{\ell,2}) + (w_{k,1},w_{k,2}) \in \rec (K)$}
				The difference between the two coordinates of this vector is $\lambda x$. 
				Therefore, considering $\delta(x)= \Sum{\ell=1}{k-1} \gamma_\ell w_{\ell,1}+w_{k,1}$ we have
				$(\delta(x),\lambda x+\delta(x))\in \rec (K)$.
				\end{proof}

\begin{restatable}{proposition}{dimdeuxnec}\label{prop:dim2nec}
	There exists a witness $\Wcal(K)$.
\end{restatable}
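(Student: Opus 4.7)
My plan is to do a case analysis based on the structure of $\Dcal_u$ and of $\rec(K)$, assembling in each case the witness $(M, C, v, w)$ explicitly and verifying the four conditions of Definition~\ref{def:wit} one at a time.

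First I handle the trivial cases: if $K \cap \Delta_E \neq \emptyset$, any $(x,x) \in K$ gives a witness with $v = w = x$, $C = \{0\}$ and $M$ arbitrary---all four conditions are immediate. By Proposition~\ref{prop:fixedPoint} this handles every bounded sequence, hence in particular the case $\Dcal_u = \emptyset$. From now on I may assume there is no fixed point and that $\Dcal_u \neq \emptyset$.

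Next I would split on whether $\rec(K)$ admits a nontrivial ``vertical'' direction relative to $\Dcal_u$: that is, whether there is a nonzero $x_0 \in \Cone(\Dcal_u)$ with $(0, x_0) \in \rec(K)$. If such $x_0$ exists, I reduce to dimension one using the orthogonal projection $\pi$ onto $x_0^{\perp}$ (a 1D subspace). The projected sequence lies in the MW-convex set $(\pi \times \pi)(K)$ by Corollary~\ref{cor:projMW}, so Proposition~\ref{prop:dim1} applied with an appropriately chosen $\gamma$ yields a 1D witness $(a, C_1, x_1, y_1)$. I then lift this to a 2D witness in $E$ by letting $C := C_1 + \Rbb_+ x_0$, extending the scalar $a$ to a linear map $M$ that sends $x_0$ to a suitable scalar multiple of itself, and lifting $(x_1, y_1)$ to $(v, w) \in K$. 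The vertical freedom $(0, x_0) \in \rec(K)$ absorbs the slack so that $MC \subseteq C$, $(x, Mx) \in \rec(K)$ on $C$, and $w - v \in C$ all hold.

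The remaining (and principal) case is when no such $x_0$ exists, so that $x \in \Cone(\Dcal_u)$ with $(0, x) \in \rec(K)$ forces $x = 0$; this places us under the hypotheses of Lemmas~\ref{it:defS} and~\ref{it:defdelta}. I would then set $C := \Cone(\Dcal_u)$. The map $s : C \to C$ from Lemma~\ref{it:defS}, which satisfies $(x, s(x)) \in \rec(K)$, is linear on the generators and, in dimension $2$, extends to a linear map $M : E \to E$ with $MC \subseteq C$. For $(v, w)$, I pick $x \in \Dcal_u$ and apply Lemma~\ref{it:defdelta} to produce $(\delta(x), \lambda x + \delta(x)) \in K \cup \rec(K)$ with $\lambda > 0$. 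If the pair lies in $K$, it is $(v, w)$ with $w - v = \lambda x \in C$; otherwise I translate a fixed pair $(u_0, u_1) \in K$ by a large multiple of this element of $\rec(K)$, so that the sum remains in $K$ and its difference is dominated by $n\lambda x$.

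The hard part will be this last step in the main case: when Lemma~\ref{it:defdelta} only yields a pair in $\rec(K)$, ensuring that the translated difference $(u_1 - u_0) + n \lambda x$ actually lands in $C$ is delicate when $C$ is merely a ray, since only scalar multiples of $x$ itself lie in $C$. A complete argument requires splitting the main case according to whether $C$ is a ray, a line, a 2D wedge, or a half-plane; in the narrow ray subcase one must replace $(u_0, u_1)$ by a more cleverly chosen starting pair---for instance some $(u_k, u_{k+1})$ with $u_{k+1} - u_k$ aligned with $x$, or a Ces\`aro-averaged pair exploiting convexity of $K$---to secure $w - v \in C$. Organising these subcases, and checking that the linear extension of $s$ is consistent with $MC \subseteq C$ in each, is where the two-dimensional hypothesis is really used.
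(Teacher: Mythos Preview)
Your overall case split---fixed point, then a ``vertical direction'' in $\rec(K)$, then the main case---matches the paper's, and the fixed-point case is handled correctly. In the vertical case your idea of reducing to Proposition~\ref{prop:dim1} via the orthogonal projection is right, but the lifting is more delicate than you indicate: $Mx_0$ is in general not a scalar multiple of $x_0$ (one only obtains $(x_0,\gamma x_0 + y)\in\rec(K)$ with $y\in x_0^\bot$, not $(x_0,\mu x_0)$), and the cone is not simply $C_1+\Rbb_+ x_0$ but is assembled from iterated lifts of the 1D data, using the freedom $(0,x_0)\in\rec(K)$ only to adjust the \emph{second} coordinate.

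The substantive gap is in your main case when $\Cone(\Dcal_u)$ is a ray $\Rbb_+ x$ or a line $\Rbb x$. You commit to $C=\Cone(\Dcal_u)$ and then search for $(v,w)\in K$ with $w-v\in C$, proposing a well-chosen $(u_k,u_{k+1})$ or a Ces\`aro average. But under the hypotheses of this subcase (no fixed point, and---after a short reduction using $\delta$---no $(a,a+\mu x)\in K$ for any $\mu\in\Rbb$), one shows that for \emph{every} $(a,b)\in K$ the orthogonal component $\inner{b-a,y}$ has constant nonzero sign and is bounded away from zero. Hence no pair in $K$ has difference in $\Rbb x$, and translating by multiples of $(\delta(x),\lambda x+\delta(x))\in\rec(K)$ leaves that orthogonal component unchanged; both of your proposed fixes fail for this reason. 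The paper's resolution is to \emph{enlarge} $C$ beyond $\Cone(\Dcal_u)$: in the ray case one takes $C=\Rbb_+ x+\Rbb_+(dx+y)$, and the heart of the argument is a careful asymptotic comparison of $\inner{u_n,x}$ against $\inner{u_n,y}$ that produces $a,b\geq 0$ and $c,d\in\Rbb$ with $(x,ax),(dx+y,cx+by)\in\rec(K)$ and $c\geq db$, so that the induced $M$ satisfies $MC\subseteq C$. The line case likewise requires augmenting $\Rbb x$ by a direction coming from the 1D witness on $x^\bot$. This enlargement of $C$ past $\Cone(\Dcal_u)$ is the idea your plan is missing.
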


For the detailed proof we refer to Appendix \ref{sec:detailedProof}. Here we just give an overview of the proof.

\begin{proof}[Proof sketch]
	\begin{figure}[h]
		\centre{\includegraphics[scale=.8]{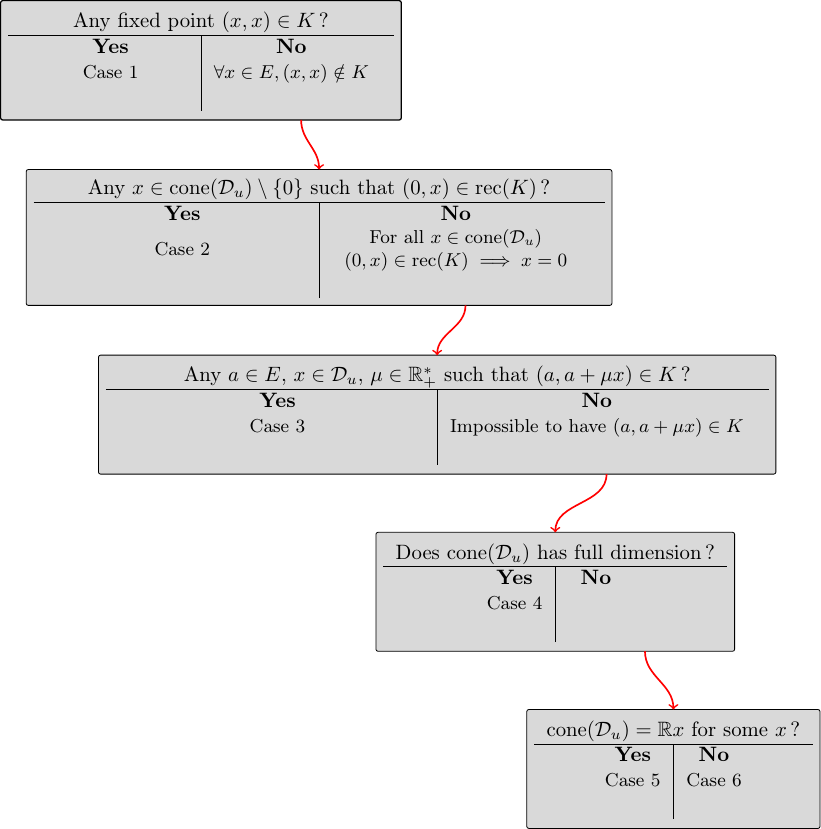}}
		\caption{The case disjunction structure}
		\label{fig:struct}
	\end{figure}
	The proof is divided into several cases under the structure described in Figure~\ref{fig:struct}.
	Among all these cases, Case 6 is by far the most difficult, followed by Cases 2 and 5, then Case 4 (quite easy) and finally the almost trivial Cases 1 and 3.
	In this proof we denote $\Ccal_u=\Cone(\Dcal_u)$.
	
	\begin{itemize}
		\item \textbf{Case 1:} There is a fixed point $(x,x)$ in $K$. In this case we just need to take $v=w=x$, $M$ arbitrary and $C=\{0\}$ to get a Witness. This just leads to a constant sequence.
		
		\item \textbf{Case 2:} No fixed point but there is $x\in\Ccal_u\setminus\{0\}$ such that $(0,x)\in \rec(K)$. In this case we are going to try to make use of Proposition \ref{prop:dim1}.
		Let $\pi:E\to E$ be the orthogonal projection onto $x^\bot$ and let $\hat\pi:E^2\to E^2$ be such that
		\centre{$\forall e,f\in E, \hat\pi(e,f)=\pa{\pi(e),\pi(f)}$} 
		Assume that we have found some $x'$ such that $(x,x')\in\rec(K)$. We then can write $x'=\gamma x+y$ for some $y$ orthogonal to $x$ and some $\gamma\in\Rbb$. Then we have $\hat\pi(x,x')=(0,y)$. Also $(0,y)\in \hat\pi(\rec(K))=\rec(\hat\pi(K))$. Thus if we can build $x'$ such that $y\in\Cone(\Dcal_{\pi(u)})$, we would be allowed to apply Proposition \ref{prop:dim1}. This requires some work. The idea is to write $x=\Sum{i=1}na_ix_i$ with $x_i\in\Dcal_u$ and $a_i\geq0$ then apply Corollary \ref{cor:accExpRecKProj2} for all $i\in\intn1n$ (details in Section \ref{sec:detailedProof}). Assume this is done. There are $a\in\Rbb^*$, a closed convex cone $C\subseteq x^\bot$ and $v,w\in x^\bot$ such that
		\begin{itemize}
			\item $aC\subseteq C$
			\item $\forall c\in C\qquad (c,ac)\in \rec (\hat\pi (K))$
			\item $(v,w)\in \hat\pi (K)$
			\item $w-v\in C$
			\item $y\in C$
		\end{itemize} 
		Again, since $\rec (\hat\pi K) = \hat\pi(\rec (K))$, for all $\xi\in C$, there are $b_\xi,c_\xi\in\Rbb$ such that
		$\pa{b_\xi x+\xi,c_\xi x+a\xi}\in\rec (K)$.
		For all $\xi\in C$, we fix $b_\xi$ and $c_\xi$ such that $(|b_\xi|,|c_\xi|)$ is minimal for the lexicographic order and among all these possibilities, such that $(b_\xi,c_\xi)$ is maximal for the lexicographic order. 
		We denote 
		\centre{$\gamma_0(\xi)=\max(1,\gamma,b_\xi)\qqandqq 
			\gamma_1(\xi)=\max(|a|,b_{a\xi},c_\xi)$}
		\lc{and for $n\geq1$,}{$\gamma_{2n}(\xi) = \max\pa{a^{2n},a^{2n}b_\xi,a^{2(n-1)}c_{a\xi}}$} 
		
		\centre{$\gamma_{2n+1}(\xi)=\max(|a|^{2n+1},a^{2n}b_{a\xi},a^{2n}c_\xi)$}
		\lc{For $n\in\Nbb$, let}{$\chi_n(\xi)=\gamma_n(\xi)x+a^n\xi$}
		\lc{and}{$b_{n,\xi}' = \begin{accolade}
				a^{2n}b_\xi & n\in2\Nbb\\
				a^{2n}b_{a\xi} & n\in2\Nbb+1
			\end{accolade}$}
		We some algebraic manipulations and intensively using that $(0,x)\in\rec(K)$ to add missing weight on $x$ in the second component, we get
		\lc{}{$\forall n\in\Nbb\qquad  \pa{\chi_n(\xi), \chi_{n+1}(\xi) + \pa{\gamma_n(\xi)-b_{n,\xi}'}\chi_0\pa{y}} \in\rec (K)$}
		
		Recalling that $w-v\in C$ we define $C'=\Rbb_+x + \Sumin n\Nbb \pa{\Rbb_+\chi_n(w-v)+\Rbbplus\chi_n(y)}$
		This is the cone we want to use. It is finitely generated. Ce can also see that it cannot contain line. Since all such two-dimensional cones are generated by at most two vectors we can find such generating vectors. $M$ will just be a matrix defined thanks to its behavior on these vectors and $C'$  is defined to get stability. Finally, up to add some component on $x$ again we can get our starting conditions thanks to $v$ and $w$ (See details in Appendix \ref{sec:detailedProof}).

		\item \textbf{Case 3:} No fixed point or $x\in\Ccal_u\setminus\{0\}$ such that $(0,x)\in \rec(K)$. However there are $a\in E$, $x\in\Dcal_u$ and $\mu\in\Rbbplusstar$ such that $(a,a+\mu x)\in K$. This means that their is a principle direction of $u$  along which  it is possible to take a first step.
		In this case, we select $C=\Ccal_u$.
		$C$ is a non empty closed convex cone of $\Rbb^2$, thus, there are two vectors $x_1,x_2\in C\setminus\{0\}$ such that either $C=\Rbb x_1+\Rbb_+x_2$ or $C=\Rbb_+x_1+\Rbb_+x_2$ or $C=\Rbb x_1+\Rbb x_2$. 
		Let $I\subseteq\{1,2\}$, $I\neq\emptyset$ the largest set such that $\suite{x_i}iI$ is a free family. 
		Using the function $s$ defined by Lemma~\ref{it:defS}, we define $M$ such that $Mx_i=s(x_i)$ for all $i\in I$.
		Noting that since, for $i\in I$, $-x_i\in C$, $(0,s(x_i)+s(-x_i))\in \rec (K)$, we have that $s(-x_i)=-s(x_i)$, this choice of $M$ satisfies Points \ref{it:MCC} and \ref{it:xMx}.
		We now choose $v=a$ and $w=a+\mu x$. By assumption, $(v,w)\in K$.
		Also, $w-v = \mu x\in\Ccal_u=C$. $C, v$ and $w$ thus satisfy Points \ref{it:vwz} and \ref{it:vwzC}.
		
		\item \textbf{Case 4:} No fixed point, $x\in\Ccal_u\setminus\{0\}$ such that $(0,x)\in \rec(K)$ or $a\in E$, $x\in\Dcal_u$, $\mu\in\Rbbplusstar$ such that $(a,a+\mu x)\in K$. However $\Dcal_u$ spans the entire space $E$.
		Given that, take $(a,b)\in K$. Using Lemma \ref{lem:riCVectC} there is $\lambda\geq0$ such that $y:=b-a+\lambda x\in \Ccal_u$. 
		Let $v=a+\lambda\delta(x)$ and $w=b+\lambda(x+\delta(x))$ with $\delta$ given by Lemma~\ref{it:defdelta}. We then have $(u,v)\in K$. 
		Let $C=\bar{\Cone\enstq{s^k(y)}{k\in\Nbb}}$ with $s$ being the function defined in Lemma~\ref{it:defS}. $C$ is a closed convex cone in a 2-dimensional vector space, therefore there are vectors $\zeta_1,\zeta_2\in C\setminus\{0\}$ such that
		\centre{$C\in\{\Rbb_+\zeta_1+\Rbb_+\zeta_2,\Rbb\zeta_1+\Rbb_+\zeta_2,\Rbb_+\zeta_1+\Rbb\zeta_2,\Rbb\zeta_1+\Rbb\zeta_2\}$}
		Let $\suite{\zeta_{i,n}}n\Nbb$ be a sequence in $\Cone\enstq{s^k(y)}{k\in\Nbb}$ such that 
		\centre{$\zeta_{i,n}\tendsto n\pinf\zeta_i$}
		If $\suite{s\pa{\zeta_{i,n}}}n\Nbb$ is unbounded then Proposition \ref{prop:accExpRecK} ensures that there is some $\zeta_i'\in\Dcal_{\suite{s\pa{\zeta_{i,n}}}n\Nbb}$ such that $(0,\zeta_i')\in\rec (K)$ and $\zeta_i'\in\Ccal_u$. This is impossible by assumption on $\Ccal_u$. Therefore, it is bounded and we have  an accumulation point $\zeta_i'\in C$. Since $\rec (K)$ is closed, we also have $(\zeta_i,\zeta_i')\in\rec (K)$. Let $I\subseteq\{1,2\}$ maximal such that $\suite{\zeta_i}iI$ is a free family. Let $M$ be a matrix such that 
		\centre{$\forall i\in I\qquad M\zeta_i=\zeta_i'$}
		
		\item \textbf{Case 5:} No fixed point, $x\in\Ccal_u\setminus\{0\}$ such that $(0,x)\in \rec(K)$ or $a\in E$, $x\in\Dcal_u$, $\mu\in\Rbbplusstar$ such that $(a,a+\mu x)\in K$ and $\Ccal_u$ is a line $\Ccal_u=\Rbb x$.
		This case uses the induction hypothesis (Proposition \ref{prop:dim1}) and similar techniques as in Case 2. The main change here is that we use the function $s$ defined by Lemma \ref{it:defS}. Here $s(x)$ will have to be collinear with $x$. In stead of adding multiples of $(0,x)$, we have access to some $(x,\gamma x)\in \rec K$ and are allowed negative coefficients which makes the case relatively easy. See details in Appendix \ref{sec:detailedProof}. 
		
		\item \textbf{Case 6:} No fixed point, $x\in\Ccal_u\setminus\{0\}$ such that $(0,x)\in \rec(K)$ or $a\in E$, $x\in\Dcal_u$, $\mu\in\Rbbplusstar$ such that $(a,a+\mu x)\in K$ and $\Ccal_u = \Rbbplus x$ for some $x$.
		Let $y\in x^\bot$ such that $\norm y{}=1$.
		The main goal of this case is to find  $a,b\geq 0$ and $c,d\in\Rbb $ such that
		\centre{$(x,ax)\in\rec (K)\qqandqq (dx+y,cx+by)\in\rec (K)\qqandqq c\geq db$}
		This can be achieved by a very careful look at the asymptotic behavior of the $\suite{u_n}n\Nbb$ and more precisely its components along $x$ and $y$. Namely, the component along $x$ must blow up significantly faster than the one along $y$.
		This is where the difficulty of this case lies. We refer to Appendix \ref{sec:detailedProof} for the details.   
		This naturally leads to choose $C$ and $M$ such that:
		\centre{$C=\Rbb_+ x + \Rbb_+(dx+y)\qqandqq Mx=ax\qqandqq M(dx+y) = cx+by$} 
		immediately satisfying \ref{it:MCC} and \ref{it:xMx}.
		With the same technics we can show that there is some $n\in\Nbb$ such that 
		\centre{$\inner{u_{n+1}-u_n,x}\geq d\inner{u_{n+1}-u_n,y}$}
		Then considering $v=u_n$ and $w=u_{n+1}$.
		\begin{calculs}
			& w-v &=& u_{n+1}-u_n = \inner{u_{n+1}-u_n,x}x + \inner{u_{n+1}-u_n,y}y\\
			&&=& \pa{\inner{u_{n+1}-u_n,x}-d\inner{u_{n+1}-u_n,y}}x + \inner{u_{n+1}-u_n,y}(dx+y)\in C
		\end{calculs}
		Hence, Points \ref{it:vwz} and \ref{it:vwzC} are satisfied by $C,v,w$.
	\end{itemize} 
\end{proof}

%%
%% Bibliography
%%

%% Please use bibtex, 
\bibliography{main}

\newpage
\appendix

%\section{Proofs on Accumulation Expansions}\label{sec:proofaccExp}
%\input{appendices/accdirannex}

\section{Detailed proof for the Proposition \ref{prop:dim2nec}}
\label{sec:detailedProof}
We consider the same context as in Subsection \ref{sec:2D}.

\dimdeuxnec*

\begin{proof}
	We prove this result through a succession of case refinements, each producing a witness.

	Assume first that there exists $x\in E$ such that $(x,x)\in K$. 
	Then we can build a witness by choosing $v=w=x$, $M$ arbitrary and $C=\{0\}$.

	We now assume that for all $x\in E$ we have that $(x,x)\not\in K$.
	By Proposition \ref{prop:fixedPoint} and the previous assumption, $\suiten$ is unbounded.
	Therefore, $\Dcal_u\neq\emptyset$.
	We consider two cases, depending on whether there exists $x\in\Cone\pa{\Dcal_u}\setminus\{0\}$ such that $(0,x)\in \rec (K)$.

	\begin{itemize}
		\item First case: there exists $x\in\Cone\pa{\Dcal_u}\setminus\{0\}$ such that $(0,x)\in \rec (K)$. 
		We then write $x=\Sum{i=1}ma_ix_i$ with $m\in\Nbb^*$, $a_i\in\Rbb^*_+$ and $x_i\in\Dcal_u$. 
		For $i\in\intn1m$ we let $\phi_i$ be such that $\tilde{u_{\phi_i}(n)}\tendsto n\pinf x_i$.
		Let $\pi:E\to E$ be the orthogonal projection onto $x^\bot$ and let $\hat\pi:E^2\to E^2$ be such that
		\centre{$\forall e,f\in E, \hat\pi(e,f)=\pa{\pi(e),\pi(f)}$}

		Applying Proposition \ref{prop:accExp} there is an accumulation expansion 
		\centre{$\f{u_{\phi_i(n)+1}}{\norm{u_{\phi_i(n)}}{}} = \Sum{k=1}{p_i}\alpha_{i,k,n}z_{i,k} + z_{i,p_i+1} + \petito{n\to\pinf}1$}
		with $p_i\in\intn02$. Applying Corollary \ref{cor:accExpRecKProj2}, we know that there are some positive real numbers $\pa{\beta_{i,k,\ell}}_{1\leq \ell < k\leq p+1}$ such that
		\centre{$
			\forall k\in\intn1{p_i}\qquad \pa{0,z_{i,k} +\Sum{\ell=1}{k-1} \beta_{i,k,\ell}z_{i,\ell}}\in \rec (K)$}
		\lc{and}{$\pa{x_i,z_{i,p_i+1} +\Sum{\ell=1}{p_i} \beta_{i,p_i+1,\ell}z_{i,\ell}}\in \rec (K)$}
		and such that for sufficiently large $n$,
		\centre{$ \inner{\pi\pa{z_{i,p_i+1} +\Sum{\ell=1}{p_i} \beta_{i,p_i+1,\ell}z_{i,\ell}},\pi\pa{\Sum{k=1}{p_i}\alpha_{i,k,n}z_{i,k} + z_{i,p_i+1}}}\geq0$}
		the inequality being strict if there is some $j\in\intn1{p_i+1}$ such that some $z_{i,j}\notin\Kernel\pi$.
		Let $x'=\Sum{i=1}m a_i \pa{z_{i,p_i+1} +\Sum{\ell=1}{p_i} \beta_{i,p_i+1,\ell}z_{i,\ell}}$. In particular, we have $(x,x')\in\rec (K)$.
		We write $x'=\gamma x+y$  with $\inner{x,y}=0$.
		Let us show that $y\in\Cone\Dcal_{\pi(u)}$ and that $(0,y)\in\rec (\hat\pi K)$
		in order to apply Proposition~\ref{prop:dim1} on the sequence projected by $\pi$ with $y$ used as the $\gamma$ in the proposition.
		If $y=0$ then trivially $y\in\Cone\Dcal_{\pi(u)}$ and $(0,y)\in\rec (\hat\pi K)$. 
		Otherwise, since all the $a_i$ are positive, there exists some $i$ such that
		\centre{$\pi\pa{z_{i,p_i+1} +\Sum{\ell=1}{p_i} \beta_{i,p_i+1,\ell}z_{i,\ell}} \in\Rbbplusstar y$}
		In particular, there is some $j\in\intn1{p_i+1}$ such that some $z_{i,j}\notin\Kernel\pi$. Thus, 
		for sufficiently large $n$,
		\centre{$ \inner{\pi\pa{z_{i,p_i+1} +\Sum{\ell=1}{p_i} \beta_{i,p_i+1,\ell}z_{i,\ell}},\pi\pa{\Sum{k=1}{p_i}\alpha_{i,k,n}z_{i,k} + z_{i,p_i+1}}}>0$}
		\lc{Also,}{$ \pi\pa{z_{i,p_i+1}+\Sum{k=1}{p_i}\alpha_{i,k,n}z_{i,k} }\in x^\bot = \Rbb y$}
		Provided that the scalar product between the above elements is positive, we then get that
		\centre{$\pi\pa{z_{i,p_i+1}+\Sum{k=1}{p_i}\alpha_{i,k,n}z_{i,k} } \in\Rbbplusstar y$}
		\lcr{Say we have}{$\pi\pa{z_{i,p_i+1}+\Sum{k=1}{p_i}\alpha_{i,k,n}z_{i,k}} = \Lambda_{i,n}y$}{(with $\Lambda_{i,n}>0$)}
		If there is $k\in\intn{1}{p_i}$ such that $\pi(z_{i,k})\neq 0$ then $\Lambda_{i,n}\tendsto n\pinf\pinf$. Otherwise it is constant.
		In both cases, for $n$ sufficiently large, it is bounded from below by some $\Lambda_i>0$. Thus
		\begin{calculs}
			&\pi\pa{u_{\phi_i(n)+1}} &=& \norm{u_{\phi_i(n)}}{}\pi\pa{\Sum{k=1}{p_i}\alpha_{i,k,n}z_{i,k} + z_{i,p_i+1}} + \petito{n\to\pinf}{\norm{u_{\phi_i(n)}}{}}\\
			&&=& \norm{u_{\phi_i(n)}}{}\Lambda_{i,n}y+\petito{n\to\pinf}{\norm{u_{\phi_i(n)}}{}}\\
			&&=& \norm{u_{\phi_i(n)}}{}\Lambda_{i,n}y+\petito{n\to\pinf}{\norm{u_{\phi_i(n)}}{}\Lambda_{i,n}}
		\end{calculs}
		
		Therefore, any accumulation expansion extracted from the above expression will stand as a witness for $y\in\Rbbplusstar\Dcal_{\pi(u)}\subseteq\Cone\Dcal_{\pi(u)}$.
		Moreover, by Corollary \ref{cor:recProj} we have $\rec (\hat\pi K) = \hat\pi(\rec (K))$. Thus
		since $\hat\pi(x,x')=(0,y)$ and $(x,x')\in\rec (K)$, we have $(0,y)\in\hat\pi(\rec (K)) = \rec (\hat\pi K)$.	
		As $\dim x^\bot = 1$, we can apply Proposition \ref{prop:dim1}:
		there are $a\in\Rbb^*$, a closed convex cone $C\subseteq x^\bot$ and $v,w\in x^\bot$ such that
		\begin{itemize}
			\item $aC\subseteq C$
			\item $\forall c\in C\qquad (c,ac)\in \rec (\hat\pi (K))$
			\item $(v,w)\in \hat\pi (K)$
			\item $w-v\in C$
			\item $y\in C$
		\end{itemize}
		
		Again, since $\rec (\hat\pi K) = \hat\pi(\rec (K))$, for all $\xi\in C$, there are $b_\xi,c_\xi\in\Rbb$ such that
		$\pa{b_\xi x+\xi,c_\xi x+a\xi}\in\rec (K)$.
		For all $\xi\in C$, we fix $b_\xi$ and $c_\xi$ such that $(|b_\xi|,|c_\xi|)$ is minimal for the lexicographic order and among all these possibilities, such that $(b_\xi,c_\xi)$ is maximal for the lexicographic order. 
		We denote 
		\centre{$\gamma_0(\xi)=\max(1,\gamma,b_\xi)\qqandqq 
			\gamma_1(\xi)=\max(|a|,b_{a\xi},c_\xi)$}
		\lc{and for $n\geq1$,}{$\gamma_{2n}(\xi) = \max\pa{a^{2n},a^{2n}b_\xi,a^{2(n-1)}c_{a\xi}}$} 
		
		\lc{and}{$\gamma_{2n+1}(\xi)=\max(|a|^{2n+1},a^{2n}b_{a\xi},a^{2n}c_\xi)$}
		
		Using that $(x,\gamma x + y)\in\rec (K)$ and $(0,x)\in\rec (K)$, we get the following
		\begin{itemize}
			\item $\pa{x,\gamma_0(y)x+y}\in\rec (K)$
			\item For all $\xi\in C$ and for all $n\in\Nbb$
			\centre{$\pa{\gamma_{2n}(\xi)x+a^{2n}\xi,\gamma_{2n+1}(\xi)x+a^{2n+1}\xi + \pa{\gamma_{2n}(\xi)-a^{2n}b_\xi}\pa{\gamma_0\pa{y}x+y}}\in\rec (K)$}
			and
			\centre{\small$\pa{\gamma_{2n+1}(\xi)x+a^{2n+1}\xi,\gamma_{2(n+1)}(\xi)x+a^{2(n+1)}\xi + \pa{\gamma_{2n+1}(\xi)-a^{2n}b_{a\xi}}\pa{\gamma_0\pa{y}x+y}}\in\rec (K)$}
		\end{itemize}
		\lc{For $n\in\Nbb$, let}{$\chi_n(\xi)=\gamma_n(\xi)x+a^n\xi$}
		\lc{and}{$b_{n,\xi}' = \begin{accolade}
				a^{2n}b_\xi & n\in2\Nbb\\
				a^{2n}b_{a\xi} & n\in2\Nbb+1
			\end{accolade}$}
		Then, we can write instead,
		\lcr{}{$\forall n\in\Nbb\qquad  \pa{\chi_n(\xi), \chi_{n+1}(\xi) + \pa{\gamma_n(\xi)-b_{n,\xi}'}\chi_0\pa{y}} \in\rec (K)$}{\numero\ast}
		Recalling that $w-v\in C$ we define $C'=\Rbb_+x + \Sumin n\Nbb \pa{\Rbb_+\chi_n(w-v)+\Rbbplus\chi_n(y)}$
		\lc{Noticing that}{$\forall \xi\in C\quad \forall n\in\Nbb^*\qquad \chi_{n+2}(\xi)=a^2\chi_n(\xi)$,}
		\lc{we can rewrite $C'$ as}{$C'=\Rbb_+x + \Sum{k=0}2 \pa{\Rbb_+\chi_n(w-v)+\Rbbplus\chi_n(y)}$.}		
		Moreover, as $(v,w)\in \hat\pi (K)$, there are $b,c\in\Rbb$ such that $\pa{bx+v, cx+w} \in K$.
		As $(0,x)\in\rec (K)$, we can assume without loss of generality that $c\geq b+\gamma_0(w-v)$.
		Thus
		\begin{calculs}
			&cx+w-bx-v&=& \underbrace{(c-b-\gamma_0(w-v))x}_{\in C'} + \underbrace{\chi_0(w-v)}_{\in C'}\in C'
		\end{calculs}
		This means that Points \ref{it:vwz} and \ref{it:vwzC} are satisfied by $C'$, $bx+v$ and $cx+w$. 
		We now need to define the matrix $M$. 
		Since for all $\xi\in C$ and $n\in\Nbb$, $\gamma_n(\xi)>0$, every $\xi'\in C'$ 
		satisfies $\langle\xi',x\rangle\geq 0$.
		Thus, $C'$ is salient (\emph{i.e.} if $\xi'\in C'$ and $-\xi'\in C'$, then $\xi'=0$).
		As a salient finitely generated convex cone of $\Rbb^2$, $C'$ is generated by 
		at most two of its generating vectors. Thus there are 
		\centre{$\zeta_1,\zeta_2\in \{x\} \cup \enstq{\chi_k(w-v)}{k\in\intn02} \cup \enstq{\chi_k\pa{y}}{k\in\intn02}$}
		\lc{such that}{$C'=\Rbb_+\zeta_1+\Rbb_+\zeta_2$.}
		By the fact that $\pa{x,\chi_0\pa{y}}\in\rec (K)$ and by Statement $\numero\ast$, there are $\zeta_1',\zeta_2'\in C'$ such that 
		\centre{$(\zeta_1,\zeta_1')\in\rec (K) \qqandqq \pa{\zeta_2,\zeta_2'}\in\rec (K)$}	
		Since $x\in C'\setminus\{0\}$, at least one of the $\zeta_i$ is not zero. 
		Let $I\subseteq\{1,2\}$, $I\neq\emptyset$ the largest set such that $\suite{\zeta_i}iI$ is a free family. 
		Taking $M$ such that $M\zeta_i=\zeta_i'$ for all $i\in I$, we define $M$ satisfying Points \ref{it:MCC} and \ref{it:xMx}.
		
		\item Second case: $\Dcal_u$ is not empty and for all $x\in\Cone\Dcal_u$, if $(0,x)\in\rec (K)$, then $x=0$.
		Denote $\Ccal_u=\Cone\Dcal_u$ and $E_u=\Vect(\Ccal_u)=\Vect(\Dcal_u)$.
		We split again the proof into several cases.
		\begin{alphaenumerate}
			\item \label{it:uxK}
			Assume first that there is $a\in E$, $x\in\Dcal_u$ and $\mu\in\Rbbplusstar$ such that $(a,a+\mu x)\in K$.
			We select $C=\Ccal_u$.
			$C$ is a non empty closed convex cone of $\Rbb^2$, thus, there are two vectors $x_1,x_2\in C\setminus\{0\}$ such that either $C=\Rbb x_1+\Rbb_+x_2$ or $C=\Rbb_+x_1+\Rbb_+x_2$ or $C=\Rbb x_1+\Rbb x_2$. 
			Let $I\subseteq\{1,2\}$, $I\neq\emptyset$ the largest set such that $\suite{x_i}iI$ is a free family. 
			Using the function $s$ defined by Lemma~\ref{it:defS}, we define $M$ such that $Mx_i=s(x_i)$ for all $i\in I$.
			Noting that since, for $i\in I$, $-x_i\in C$, $(0,s(x_i)+s(-x_i))\in \rec (K)$, we have that $s(-x_i)=-s(x_i)$, this choice of $M$ satisfies Points \ref{it:MCC} and \ref{it:xMx}.
			We now choose $v=a$ and $w=a+\mu x$. By assumption, $(v,w)\in K$.
			Also, $w-v = \mu x\in\Ccal_u=C$. $C, v$ and $w$ thus satisfy Points \ref{it:vwz} and \ref{it:vwzC}.
			
			\item \label{it:appelInduc} We now tackle the case where there is no $a\in E$, $x\in\Dcal_u$ and $\mu>0$ such that $(a,a+\mu x)\in K$.
			Using Lemma~\ref{it:defdelta}, for all $x\in\Dcal_u$, there are $\lambda > 0$ and $\delta(x)\in E$ such that $(\delta(x),\lambda x+\delta(x))\in K \cup\rec (K)$. 
			By the initial assumption of this case, we cannot have $(\delta(x),\lambda x+\delta(x))\in K$, thus $(\delta(x),\lambda x+\delta(x))\in \rec (K)$.				
			As $\rec (K)$ is a cone, we can divide by $\lambda$ and have that $(\delta(x), x+\delta(x))\in\rec (K)$. The function $\delta$ can then be extended to $\Ccal_u$ with the same property using conic combinations. Therefore,
			\centre{$\forall x\in\Ccal_u \qquad (\delta(x), x+\delta(x))\in\rec (K)$.} 
			We can now strengthen the initial assumption of this case by assuming that there is no $a\in E$, $x\in\Dcal_u$ and $\mu\in\Rbb$ (instead of $\mu>0$) such that $(a,a+\mu x)\in K$. Indeed, if there were such elements, then by assumption $\mu\leq 0$ and we would have
			\centre{$(a+(1-\mu)\delta(x),a+\mu x+(1-\mu)(x+\delta(x))) = (a+(1-\mu)\delta(x),(a+(1-\mu)\delta(x))+x) \in K$}
			which is a contradiction.

			For the remaining of the proof we fix $x\in\ri(\Ccal_u)$. 
			We can assume $x\neq0$ since if $0\in\ri(\Ccal_u)$ then $E_u=\Ccal_u=\ri(\Ccal_u)$ and 
			$\ri(\Ccal_u)\setminus \{0\}\neq \emptyset$, thus one could select a non-zero value for $x$.
			Since $x\in E_u$, $x\neq 0$ and $\dim E=2$, we then have either $E_u=E$ or $E_u=\Vect (x)$.
			We treat separately the cases $E_u=E$, $E_u=\Ccal_u=\Vect (x)$ and 
			$E_u=\Vect (x)$ but $\Ccal_u = \Rbb_+ x$.
			\begin{romanenumerate}
				\item Consider first the case where $E_u=E$. 
				In this case take $(a,b)\in K$. Using Lemma \ref{lem:riCVectC} there is $\lambda\geq0$ such that $y:=b-a+\lambda x\in \Ccal_u$. 
				Let $v=a+\lambda\delta(x)$ and $w=b+\lambda(x+\delta(x))$ with $\delta$ given by Lemma~\ref{it:defdelta}. We then have $(u,v)\in K$. 
				Let $C=\bar{\Cone\enstq{s^k(y)}{k\in\Nbb}}$ with $s$ being the function defined in Lemma~\ref{it:defS}. $C$ is a closed convex cone in a 2-dimensional vector space, therefore there are vectors $\zeta_1,\zeta_2\in C\setminus\{0\}$ such that
				\centre{$C\in\{\Rbb_+\zeta_1+\Rbb_+\zeta_2,\Rbb\zeta_1+\Rbb_+\zeta_2,\Rbb_+\zeta_1+\Rbb\zeta_2,\Rbb\zeta_1+\Rbb\zeta_2\}$}
				Let $\suite{\zeta_{i,n}}n\Nbb$ be a sequence in $\Cone\enstq{s^k(y)}{k\in\Nbb}$ such that 
				\centre{$\zeta_{i,n}\tendsto n\pinf\zeta_i$}
				If $\suite{s\pa{\zeta_{i,n}}}n\Nbb$ is unbounded then Proposition \ref{prop:accExpRecK} ensures that there is some $\zeta_i'\in\Dcal_{\suite{s\pa{\zeta_{i,n}}}n\Nbb}$ such that $(0,\zeta_i')\in\rec (K)$ and $\zeta_i'\in\Ccal_u$. This is impossible by assumption on $\Ccal_u$. Therefore, it is bounded and we have  an accumulation point $\zeta_i'\in C$. Since $\rec (K)$ is closed, we also have $(\zeta_i,\zeta_i')\in\rec (K)$. Let $I\subseteq\{1,2\}$ maximal such that $\suite{\zeta_i}iI$ is a free family. Let $M$ be a matrix such that 
				\centre{$\forall i\in I\qquad M\zeta_i=\zeta_i'$}

				Hence, $M$ and $C$ satisfy Points \ref{it:MCC} and \ref{it:xMx}.
				Now notice that 
				\centre{$w-v = z-v = y \in C$}
				Hence, $v,w,C$ satisfy Points \ref{it:vwz} and \ref{it:vwzC}.

				\item Consider now the case where $E_u=\Ccal_u=\Vect (x)$. 
				Let $\pi:E\to E$ the orthogonal projection onto $E_u^\bot$. Note that $\dim E_u^\bot = 1$.
				Let $\hat\pi:E^2\to E^2$ such that
				\centre{$\forall e,f\in E, \hat\pi(e,f)=\pa{\pi(e),\pi(f)}$} 
				By Corollary \ref{cor:recProj} we have $\rec (\hat\pi K) = \hat\pi(\rec (K))$, hence applying Proposition \ref{prop:dim1}, there are $M'\in\Rbb^*$, a closed convex cone $C'\subseteq E_u^\bot$ and $v',w'\in E_u^\bot$ such that
				\begin{itemize}
					\item $M'C'\subseteq C'$
					\item $\forall c\in C'\qquad (c,M'c)\in \rec (\hat\pi K)$
					\item $(v',w')\in \hat\pi K$
					\item $w'-v'\in C'$
				\end{itemize}
				Let $\gamma=w'-v'$. By Corollary \ref{cor:recProj}, there are $\gamma_1,\gamma_2,\gamma_3,\gamma_4\in \Rbb$ such that
				\centre{$(\gamma_1x+\gamma,\gamma_2x+M'\gamma)\in\rec (K)$}
				\lc{and}{$(\gamma_4x+M'\gamma,\gamma_3x+M'^2\gamma)\in\rec (K)$.}
				
				Using the function $s$ defined by Lemma~\ref{it:defS}, since $s(x)\in\Rbb x$, $s(-x)=-s(x)$
				(as argued in Point~(\ref{it:uxK})), and $M'\in\Rbb^*$, 
				we can assume without loss of generality that $\gamma_4=\gamma_1M'$, simply by adding a sufficient (possibly negative) multiple of $(x,s(x))$.
				Therefore
				\lcr{}{$(\gamma_1M'x+M'\gamma,\gamma_3x+M'^2\gamma)\in\rec (K)$}{$(\ast)$}	
				We select $C=\Vect (x)+\Rbbplus\gamma + \Rbbplus M'\gamma$. 
				\begin{itemize}
					\item If $\gamma=0$ we take any $M$ such that $Mx=s(x)$. 
					In this case $MC\subseteq \Rbb x=C$ and since $s(-x)=-s(x)\in\Ccal_u=C=\Rbb x$, we have for all $c\in C$, $(c,Mc)\in\rec (K)$.
					
					\item If $\gamma\neq 0$. We take $M$ such that 
					\centre{$Mx=s(x)\qqandqq M(\gamma_1x+\gamma)=\gamma_2x+M'\gamma$.}
					\lc{We have}{$M\gamma = M(\gamma_1x+\gamma) - \gamma_1Mx =  (\gamma_2x+M'\gamma) - \gamma_1s(x) \in C$}
					Note that, since $M'^2\gamma\in \Rbbplus\gamma$, we also have
					\centre{$MM'\gamma =M'\gamma_2x+M'^2\gamma - \gamma_1M's(x) \in C$}
					Therefore, $MC\subseteq C$. Moreover, we have
					\begin{calculs}
						& (x,Mx) &=& (x,s(x))\in\rec (K)\\
						& (\gamma,M\gamma) &=& (\gamma_1x+\gamma,\gamma_2x+M'\gamma)-\gamma_1(x,s(x))\in\rec (K)
					\end{calculs}
					If $M'\geq 0$ we then have for all $c\in C$, $(c,Mc)\in \rec (K)$. Otherwise, 
					dividing by $|M'|$ the $(\ast)$ statement,
					we have
					\centre{$\pa{-\gamma_1x-\gamma,\f{\gamma_3}{|M'|}x+|M'|\gamma}\in\rec (K)$}
					By conic combination,$\pa{0, \pa{\f{\gamma_3}{|M'|}+\gamma_2}x+(|M'|+M')\gamma}\in\rec (K)$.
					Hence $\pa{0, \pa{\f{\gamma_3}{|M'|}+\gamma_2}x}\in\rec (K)$.
					By assumption on $\Ccal_u$ this means that \center{$\gamma_3=-\gamma_2|M'|=\gamma_2M'$.}
					\begin{calculs}
						Thus & (M'\gamma,M'M\gamma) &=&
						(\gamma_1M'x+M'\gamma,\gamma_2M'x+M'^2\gamma) - M'\gamma_1(x,s(x))\\
						&&=&  (\gamma_1M'x+M'\gamma,\gamma_3x+M'^2\gamma) - 
						M'\gamma_1(x,s(x))
					\end{calculs}
					\lc{Hence, using $(\ast)$,}{$(M'\gamma,M'M\gamma)\in\rec (K)$}
					Thus, for all $c\in C$, $(c,Mc)\in \rec (K)$.
				\end{itemize}
				Therefore, in both cases, $M$ and $C$ satisfy Points \ref{it:MCC} and \ref{it:xMx}.

				Now, as $(v',w')\in \hat\pi K$, there are $a,b,\in E_u=\Vect (x)$, such that 
				\centre{$(a+v',b+w')\in K$.}
				\lc{Moreover}{$b+w'-a-v' = \underbrace{b-a}_{\in \Vect (x)} + \gamma\in C$}
				Hence, taking $v=a+v'$ and $w=b+w'$,
				$v,w,C$ satisfy Points \ref{it:vwz} and \ref{it:vwzC}.

				\item Finally, we consider the case where $\Ccal_u=\Rbb_+x$.
				Let $y\in E_u^\bot$ such that $\norm y{}=1$.
				Using what we saw at the beginning of Point (\ref{it:appelInduc}), if there is $(a,b)\in K$ such that
				$\inner{b-a,y}=0$, then there is $\mu\in\Rbb$ such that $b=a+\mu x$ which is impossible. Therefore,  
				for all $(a,b)\in K$, $|\inner{b-a,y}|>0$.
				Assume, for sake of contradiction, that there exist $a,b,c,d\in E$ such that
				\centre{$\inner{b-a,y}<0\qqandqq \inner{d-c,y}>0\qqandqq (a,b),(c,d)\in K$.}
				Let $\lambda=\f{\inner{d-c,y}}{\inner{d-c,y}-\inner{b-a,y}}$
				and $(e,f)=\lambda (a,b)+(1-\lambda)(c,d)\in K$.
				\lc{We then have}{$\inner{f-e,y} = \lambda\inner{b-a,y}+(1-\lambda)\inner{d-c,y} = 0$.}
				Therefore $f=e+\mu x$ for some $\mu\in\Rbb$. 
				
				Using the function $\delta$ defined by Lemma~\ref{it:defdelta} we have that 
				\[
				(e,f)+(1+|\mu|)(\delta(x),x+\delta(x))
				= (e+\delta(x),e+\delta(x))+(1+|\mu|+\mu)(0,x) \in K
				\]
				which contradicts the assumption of Point (\ref{it:appelInduc}).
				
				Therefore, either for all $(a,b)\in K$, $\inner{b-a,y}>0$ or for all $(a,b)\in K$, $\inner{b-a,y}<0$. Up to considering $-y$ instead of $y$, we assume that for all $(a,b)\in K$, $\inner{b-a,y}>0$. 
				
				Let $\mu = \inf\enstq{\inner{b-a,y}}{(a,b)\in K}$. We have $\mu>0$. Indeed,
				Note first that if $(w_1,w_2)\in\rec (K)$ then $\inner{w_2-w_1,y}\geq 0$ otherwise if would be easy to build a pair contradicting the assumption.
				Moreover, there exists $(v_1,v_2)\in K'$ such that 
				$\inner{v_2-v_1,y}$ achieves the minimum over $K'$: 
				\centre{$\inner{v_2-v_1,y}= \inf\enstq{\inner{b-a,y}}{(a,b)\in K}=\mu'$}
				$\mu'>0$ as $K'$ is compact. 
				For $(a,b)\in K$, writing it $(a,b)=(a',b')+(w_1,w_2)$ with $(a',b')\in K'$ and 
				$(w_1,w_2)\in\rec (K)$, we have
				\centre{$\mu\geq \inner{b-a,y}=\inner{b'-a',y}+\inner{w_2-w_1,y}\geq\inner{b'-a',y}\geq\mu'>0.$}

				\lc{In particular we have}{$\forall n\in\Nbb\qquad \inner{u_{n+1}-u_n,y}\geq \mu>0$,} 
				\lc{hence}{$\forall n\in\Nbb\qquad \inner{u_n,y}\geq\mu n+\inner{u_0,y}$}
				\lc{and}{$\inner{u_n,y}\tendsto n\pinf\pinf$.}
				Writing $u_n=\inner{u_n,x}x+\inner{u_n,y}y$, as $\Dcal_u=\{x\}$, we know that 
				\centre{$\inner{u_n,y}=\petito{n\to\pinf}{\inner{u_n,x}}$}
				\lc{and for $n$ large enough}{$\inner{u_n,x}>0$.}
				%Indeed, if this does not hold we would have $\Ccal_u\neq\Rbb_+x$. 
				Thus, up to considering a subsequence,  we can assume without loss of generality that for all $n\in\Nbb$, $\inner{u_n,x}>0$ and $\inner{u_n,y}>0$.
				Thus $\f{\inner{u_n,x}}{\inner{u_n,y}}$ is always defined and 
				\centre{$\f{\inner{u_n,x}}{\inner{u_n,y}}\tendsto n\pinf\pinf$.}
				Hence we can find an increasing function $\phi:\Nbb\to\Nbb$ such that
				\centre{$\forall n\in\Nbb\qquad \f{\inner{u_{\phi(n)+1},x}}{\inner{u_{\phi(n)+1},y}} > \f{\inner{u_{\phi(n)},x}}{\inner{u_{\phi(n)},y}}>1$.}
				\lc{Thus}{$\forall n\in\Nbb\qquad \f{\inner{u_{\phi(n)+1},x}}{\inner{u_{\phi(n)},x}} > \f{\inner{u_{\phi(n)+1},y}}{\inner{u_{\phi(n)},y}}$}
				Moreover, $\inner{u_{\phi(n)+1},y}\geq \mu + \inner{u_{\phi(n)},y} > \inner{u_{\phi(n)},y} >0$. Therefore
				\centre{$\forall n\in\Nbb\qquad \f{\inner{u_{\phi(n)+1},x}}{\inner{u_{\phi(n)},x}} > \f{\inner{u_{\phi(n)+1},y}}{\inner{u_{\phi(n)},y}}>1$}

				By applying Proposition \ref{prop:accExpRecK} to the sequence
				$\suite{\pa{\f{u_{n}}{\inner{u_{n+1},x}}, \f{u_{n+1}}{\inner{u_{n+1},x}}}}n\Nbb$
				we have that either $\suite{\f{\inner{u_{n+1},x}}{\inner{u_n,x}}}n\Nbb$ is bounded or
				$(0,x)\in\rec (K)$ which contradicts the assumption of the second main case of this proof.
				Therefore, up to considering a subsequence, we can assume that there exists $a\in\Rbb$ such that
				\centre{$\f{\inner{u_{n+1},x}}{\inner{u_n,x}}\tendsto n\pinf a\geq 1$.}
				
				By Proposition~\ref{prop:accExp}, we can obtain an accumulation expansion
				\centre{$\pa{u_{\phi(n)},u_{\phi(n)+1}} = \Sum{k=1}p\alpha_{k,n}(w_{k,1},w_{k,2}) + (w_{p+1,1},w_{p+1,2}) + \petito{n\to\pinf}1$}
				Note that since $\inner{u_n,y}=\petito{n\to\pinf}{\inner{u_n,x}}$ and $\inner{u_n,y}\tendsto n\pinf\pinf$,
				we necessarily have $p\geq 2$ and for $i\in\{1,2\}$, $(w_{i,1},w_{i,2}) \neq (0,0)$. Moreover, $w_{1,1},w_{1,2}\in\Ccal_u$.
				As 	$\Ccal_u= \Rbb_+ x$ and $\f{\inner{u_{n+1},x}}{\inner{u_n,x}}\tendsto n\pinf a$,	
				$w_{1,2}=aw_{1,1}$. Therefore,
				\centre{$w_{1,1},w_{1,2}\in\Rbbplusstar x$.}
				\lc{Similarly, since }{$\forall n\in\Nbb\qquad \f{\inner{u_{\phi(n)+1},x}}{\inner{u_{\phi(n)},x}} > \f{\inner{u_{\phi(n)+1},y}}{\inner{u_{\phi(n)},y}}$}
				the sequence $\suite{\f{\inner{u_{\phi(n)+1},y}}{\inner{u_{\phi(n)},y}}}n\Nbb$ is bounded. Therefore, $\inner{w_{2,1},y}\neq 0$. Hence, there are $b,c,d\in\Rbb$ and $\lambda\in \Rbbplusstar$ such that
				\centre{$(w_{2,1},w_{2,2})\in\lambda(dx+y,cx+by)$.}
				Indeed, if $\lambda$ was negative, we would have $\inner{u_{\phi(n)},y}<0$ for sufficiently large $n$, which is not possible as $\f{\inner{u_{\phi(n)+1},y}}{\inner{u_{\phi(n)},y}}>1$.
				Also, with this writing, we have 
				\centre{$\f{\inner{u_{\phi(n)+1},y}}{\inner{u_{\phi(n)},y}}\tendsto n\pinf b$}
				hence $b\leq a$.
				
				Now assume, for sake of contradiction, that for all $b'\geq 0$ and $c',d'\in\Rbb$ such that $ (d'x+y,c'x+b'y)\in\rec (K)$	we have $d'b'>c'$. 
				Let $a'\geq 0$ such that $(x,a'x)\in\rec (K)$ (note that $a'=a$ works as by 
				Proposition \ref{prop:accExpRecK} $(w_{1,2},w_{1,1})\in\rec (K)$).
				Then, for all $n\in\Nbb$, $\pa{(d'+n)x+y,(c'+na')x+b'y}\in\rec (K)$.
				\lc{Hence}{$(d'+n)b'>c'+na'$}
				This means that $b'\geq a'$ and as this holds for every $a',b'$, in particular, 	$b\geq a$ thus $b=a\geq 1$. As a consequence,
				\centre{$b=\min\enstq{b'}{\exists c',d'\ (d'x+y,c'x+b'y)\in\rec (K)}=\max\enstq{a'}{(x,a'x)\in \rec (K)}$.}
				Provided that $\inner{u_n,y}\tendsto n\pinf\pinf$, we have for large enough $n$, $(u_n,u_{n+1})\notin K'$. Writing
				\centre{$(u_n,u_{n+1})= (v_{n,1},v_{n,2}) + \lambda_n(d_nx+y,c_nx+b_n y)$}
				with $\lambda_n\geq 0$, $(v_{n,1},v_{n,2}) \in K'$ and $(d_nx+y,c_nx+b_n y)\in\rec (K)$, we have
				\centre{$\inner{u_{n+1},y} = b_n\lambda_n+\inner{v_{n,2},y}=b_n\pa{\inner{u_n,y}-\inner{v_{n,1},y}}+\inner{v_{n,2},y} 
					\underset{n\to\pinf}{\sim}b_n\inner{u_n,y}$}
				By minimality of $b$ and the fact that $\inner{u_n,y}\geq 0$ for large enough $n$, we get that for large enough $n$,
				\centre{$\inner{u_{n+1},y}\geq b\inner{u_n,y} + \petito{n\to\pinf}{\inner{u_n,y}}$.}
				\lc{Therefore}{$\inner{u_n,y} = \asOmega{n\to\pinf}{b^n}$,}
				\lc{hence}{$\f1{\inner{u_n,y}} = \grando{n\to\pinf}{\f1{b^n}}$.}
				Moreover, there exist sequences $(v_{1,n})_{n\in \Nbb}$,$(v_{2,n})_{n\in \Nbb}$, $\suiten[\alpha]$, $\suiten[\beta]$, $\suiten[\alpha']$, $\suiten[\beta']$ such that
				
				\centre{$(u_n,u_{n+1}) = (v_{1,n},v_{2,n}) + (\alpha_nx+\beta_n y, \alpha'_{n+1}x+\beta'_{n+1}y)$} 
				\lc{with}{$(v_{1,n},v_{2,n})\in K'\qqandqq (\alpha_nx+\beta_n y, \alpha'_{n+1}x+\beta'_{n+1}y)\in\rec (K)$.}
				\lc{Thus}{$\alpha_n,\alpha'_n = \inner{u_n,x}+\grando{n\to\pinf}1$}
				\lc{and}{$\beta_n,\beta'_n = \inner{u_n,y}+\grando{n\to\pinf}1$.}
				By the earlier assumption of this contradiction proof applied on
				$\f{\beta'_{n+1}}{\beta_{n}}, \f{\alpha'_{n+1}}{\beta_{n}}$ and $\f{\alpha_{n}}{\beta_{n}}$ 
				we have $\alpha_n\beta'_{n+1}>\alpha'_{n+1}\beta_n$
				\lc{and}
				{$\f{\alpha_n}{\inner{u_n,y}}\f{\beta'_{n+1}}{\inner{u_{n+1},y}} 
					> \f{\alpha'_{n+1}}{\inner{u_{n+1},y}}\f{\beta_n}{\inner{u_n,y}}$.}
				\begin{calculs}
					Hence & \f{\f{\inner{u_n,x}}{\inner{u_n,y}} + \grando{n\to\pinf}{\f1{\inner{u_n,y}}}}{ 1 + \grando{n\to\pinf}{\f1{\inner{u_n,y}}}} &>&
					\f{\f{\inner{u_{n+1},x}}{\inner{u_{n+1},y}} + \grando{n\to\pinf}{\f1{\inner{u_{n+1},y}}}}{ 1 + \grando{n\to\pinf}{\f1{\inner{u_{n+1},y}}}}\\[1cm]
					thus & \f{\inner{u_n,x}}{\inner{u_n,y}} + \grando{n\to\pinf}{\f1{\inner{u_n,y}}} &>&
					\f{\inner{u_{n+1},x}}{\inner{u_{n+1},y}} + \grando{n\to\pinf}{\f1{\inner{u_{n+1},y}}}\\ [1cm]
					finally & \f{\inner{u_{n+1},x}}{\inner{u_{n+1},y}} &<& \f{\inner{u_n,x}}{\inner{u_n,y}} +\grando{n\to\pinf}{\f1{b^n}}
				\end{calculs}
				
				As $\f{\inner{u_n,x}}{\inner{u_n,y}}\tendsto n\pinf\pinf$, we cannot have $b>1$. 	
				Thus $a=b=1$.			
				%If $b>1$, this implies that the sequence 	$\suite{\f{\inner{u_n,x}}{\inner{u_n,y}}}n\Nbb$ is bounded, which
				%
				%Since $\f{\inner{u_n,x}}{\inner{u_n,y}}>0$ for all $n$, we cannot have $b>1$. Indeed, this would imply that the sequence
				%					$\suite{\f{\inner{u_n,x}}{\inner{u_n,y}}}n\Nbb$ is bounded, which is a contradiction. Therefore,
				%					\centre{$a=b=1$}

				Recall that $\pa{u_{\phi(n)},u_{\phi(n)+1}} = \Sum{k=1}p\alpha_{k,n}(w_{k,1},w_{k,2}) + (w_{p+1,1},w_{p+1,2}) + \petito{n\to\pinf}1$, $aw_{1,1}=w_{1,2}$ and $(w_{2,1},w_{2,2})\in\lambda(dx+y,cx+by)$ with $\lambda\in Rbb_+$.
				As $a=b=1$, there exists $w\in \Rbb$ such that $\inner{w_{1,1},x}=\inner{w_{1,2},x}=w$ 
				and we get
				\centre{$\begin{accolade}
						\inner{u_{\phi(n)},x} = \alpha_{1,n}w + \lambda d\alpha_{2,n} + \petito{n\to\pinf}{\alpha_{2,n}}\\
						\inner{u_{\phi(n)+1},x} = \alpha_{1,n}w + \lambda c\alpha_{2,n} + \petito{n\to\pinf}{\alpha_{2,n}}
					\end{accolade}$}
				\lc{Thus}{$\inner{u_{\phi(n)+1},x} - \inner{u_{\phi(n)},x} = \lambda(c-d)\alpha_{2,n} + \petito{n\to\pinf}{\alpha_{2,n}} \tendsto n\pinf\minf$}
				as $c<bd = d$, which is a contradiction with the fact that  for all $n\in\Nbb$, $\f{\inner{u_{\phi(n)+1},x} }{\inner{u_{\phi(n)},x}}>1$.
				
				Thus, there are $a,b\geq 0$ and $c,d\in\Rbb $ such that
				\centre{$(x,ax)\in\rec (K)\qqandqq (dx+y,cx+by)\in\rec (K)\qqandqq c\geq db$}
				
				We consider $C=\Rbbplus x + \Rbbplus(dx+y)$. We also take the matrix $M$ such that
				$Mx =ax$ and $M(dx+y)=cx+by$.
				Since $cx+by = b(dx+y)+(c-db)x$, and $c\geq db$, we indeed have $MC\subseteq C$.
				Hence, $M$ and $C$ satisfy Points \ref{it:MCC} and \ref{it:xMx}.
				Assume now that for all $n\in\Nbb$, 
				\centre{$\inner{u_{n+1}-u_n,x}<d\inner{u_{n+1}-u_n,y}$}
				\lc{Then for all $n\in\Nbb$}{$\inner{u_n-u_0,x}<d\inner{u_n-u_0,y}$}
				And this is a contradiction with $\f{\inner{u_n,x}}{\inner{u_n,y}}\tendsto n\pinf\pinf$ and $\inner{u_n,y}>0$.
				Thus let $n\in\Nbb$ such that $\inner{u_{n+1}-u_n,x}\geq d\inner{u_{n+1}-u_n,y}$.
				Let $v=u_n$ and $w=u_{n+1}$.
				\begin{calculs}
					& w-v &=& u_{n+1}-u_n = \inner{u_{n+1}-u_n,x}x + \inner{u_{n+1}-u_n,y}y\\
					&&=& \pa{\inner{u_{n+1}-u_n,x}-d\inner{u_{n+1}-u_n,y}}x + \inner{u_{n+1}-u_n,y}(dx+y)\in C
				\end{calculs}
				Hence, Points \ref{it:vwz} and \ref{it:vwzC} are satisfied by $C,v,w$.
			\end{romanenumerate}
		\end{alphaenumerate}
	\end{itemize}
\end{proof}

\end{document}